\documentclass[sn-aps]{sn-jnl}


\usepackage{preamble}



\theoremstyle{thmstyleone}%
\newtheorem{theorem}{Theorem}
%

\theoremstyle{thmstyletwo}%

\raggedbottom

\begin{document}

\title[Article Title]{Proof of completeness of the local conserved quantities in the one-dimensional Hubbard model}

\author*[1]{\fnm{Kohei} \sur{Fukai}}\email{k.fukai@issp.u-tokyo.ac.jp}

\affil*[1]{\orgdiv{The Institute for Solid State Physics}, \orgname{The University of Tokyo}, \orgaddress{\city{Kashiwa}, \postcode{277-8581}, \state{Chiba}, \country{Japan}}}

\abstract{We rigorously prove that the local conserved quantities in the one-dimensional Hubbard model are uniquely determined for each locality up to the freedom to add lower-order ones. From this, we can conclude that the local conserved quantities are exhausted by those obtained from the expansion of the transfer matrix.}

\keywords{1D Hubbard model, Quantum integrability}

\maketitle

\section{Introduction}
Quantum many-body systems are called ``integrable'' when their dynamics or thermodynamics can be computed exactly.
Although the rigorous criterion of quantum integrability has yet to be formulated, it is widely accepted that quantum integrability is characterized by the existence of an extensive number of local conserved quantities, which bears an analogy to the Liouville-Arnold definition of classical integrability~\cite{Grabowski_1995, Caux_2011}.

Yang-Baxter integrable systems are exactly solvable via the Bethe Ansatz~\cite{Bethe1931, Baxter1982}, and the existence of local conserved quantities is assured from the quantum inverse scattering method~\cite{korepin_bogoliubov_izergin_1993}: the local charges are generated by the expansion of the logarithm of the transfer matrix, though it is practically difficult to obtain their general explicit expression.
For interacting integrable models, the local charges generated from the transfer matrix have been conjectured to be complete, meaning there are no additional local charges independent of them. 
Rigorous proofs for this completeness have been established for some models: the spin-1/2 XXX chain~\cite{Babbitt-1979} and the XYZ chain~\cite{Nozawa2020}. 
However, for almost all other cases, the rigorous proof of completeness of transfer matrix charges has yet to be presented.

On the other hand, rigorous proof of non-integrability has also been demonstrated for the spin-$1/2$ XYZ chain with a magnetic field~\cite{Shiraishi2019}, the mixed-field Ising chain~\cite{chiba2023proof}, and the PXP model~\cite{pxp-nonintegrability-2024}, by providing rigorous proof of the absence of local charges beyond the Hamiltonian. 
Their strategy is straightforward: first, write down the linear combination, which consists of a candidate of local conserved quantities using all local operators acting on finite range site, and demonstrate that if it is conserved, then all of the coefficients in this linear combination must be zero.

In this article, we present rigorous proof that there are no other local conserved quantities in the one-dimensional Hubbard model independent of those obtained in~\cite{fukai-hubbard-charge-2023}, in the spirit of the proof of non-integrability~\cite{Shiraishi2019,chiba2023proof,pxp-nonintegrability-2024}.
This proof does not need detailed knowledge of the structure of the local conserved quantities obtained in~\cite{fukai-hubbard-charge-2023}.
From our result, it immediately follows that local charges generated from the expansion of the logarithm of the transfer matrix~\cite{shastry1988BF01022987,Olmedilla1988,Ramos1997,essler2005one} are written as a linear combination of those obtained in~\cite{fukai-hubbard-charge-2023}.
We note that our considerations are solely focused on ultra-local charges, excluding quasi-local ones~\cite{Ilievski2015, Ilievski_2016}.

This paper is organized as follows.
In Section~\ref{sec:main_result}, we explain our setup and present the main theorem of this article.
Section~\ref{sec:notation} introduces notations helpful in the proof of the main theorem.
In Section~\ref{sec:proof}, we provide rigorous proof of the main theorem.
Section~\ref{sec:summary} contains a summary of our work.
In Appendix~\ref{app:one-sup}, we give all the one-support local charges in the one-dimensional Hubbard model.

\section{Main result}
\label{sec:main_result}
In this section, we present the main result of this article in Theorem~\ref{theorem} and the subsequent Corollary~\ref{coro}.
The Hamiltonian of the one-dimensional Hubbard model is 
\begin{align}
    H
    =&
    -
    2t
    \sum_{j=1}^L
    \sum_{\sigma=\uparrow,\downarrow}
    \paren{
    c_{j,\sigma}^\dag c_{j+1,\sigma}+\text{h.c.}
    }
    +4U 
    \sum_{j=1}^L
    \paren{n_{j,\uparrow}-\frac{1}{2}}
    \paren{n_{j,\downarrow}-\frac{1}{2}}
    ,
    \label{eq:Hamiltonian}
\end{align}
where the periodic boundary condition is imposed, $n_{j\sigma}\equiv c_{j,\sigma}^\dag c_{j,\sigma}$, and $U$ is the coupling constant.
In the following, we assume $U\neq 0$ and set $t=1$.
Let the first (second) term in~\eqref{eq:Hamiltonian} be denoted by $H_0(H_{\mathrm{int}})$.
The Hamiltonian~\eqref{eq:Hamiltonian} is known to be integrable with the (nested) Bethe ansatz~\cite{PhysRevLett.20.1445} and has an extensive number of local conserved quantities~\cite{shastry1988BF01022987,Olmedilla1988}.

\def\sumsym{e}
We define \textit{$k$-support basis} starting from the $i$ th site by
\begin{align}
    \bm{\sumsym}_{i}^{k}
    &\equiv
    \bm{\sumsym}_{i,\uparrow}^{k}
    \bm{\sumsym}_{i,\downarrow}^{k}
    ,
    \\
    \bm{\sumsym}_{i,\sigma}^{k}
    &\equiv
    \sumsym_{i,\sigma}
    \sumsym_{i+1,\sigma}
    \cdots
    \sumsym_{i+k-1,\sigma}
    ,
\end{align}
where $\sigma\in \bce{\uparrow, \downarrow}$, $\sumsym_{j, \sigma} \in \{c_{j, \sigma}, c^\dag_{j, \sigma}, z_{j, \sigma}, I\}$, $z_{l,\sigma} = 2n_{l,\sigma} - 1$, $I$ is the identity operator, $\{\sumsym_{i,\uparrow}, \sumsym_{i,\downarrow}\} \neq \{I, I\}$, and $\{\sumsym_{i+k-1,\uparrow}, \sumsym_{i+k-1,\downarrow}\} \neq \{I,I\}$.
We refer to a linear combination of $k$-support basis elements as a \textit{$k$-support operator}.

We define a \textit{$k$-local conserved quantity} by
\begin{align}
    F_k
    =
    \sum_{l=1}^{k}
    \sum_{i=1}^{L}
    \sum_{\bm{\sumsym}_{i}^{l}}
    c_{\bm{\sumsym}_{i}^{l}}
    \bm{\sumsym}_{i}^{l}
    ,
\end{align} 
where the sum of $\bm{\sumsym}_{i}^{l}$ runs over all $l$-support basis elements starting from the $i$ th site, $c_{\bm{\sumsym}_{i}^{l}}$ is the coefficient of $\bm{\sumsym}_{i}^{l}$ which may depend on $U$, there exists a $k$-support basis element $\bm{\sumsym}_{i}^{k}$ such that $c_{\bm{\sumsym}_{i}^{k}} \neq 0$, and $F_k$ commutes with the Hamiltonian~\eqref{eq:Hamiltonian}: $\bck{F_k, H}=0$.

When referring to a \textit{less-than-or-equal-to-k-local conserved quantity}, we exclude the constraint of the existence of $\bm{\sumsym}_{i}^{k}$ such that $c_{\bm{\sumsym}_{i}^{k}} \neq 0$ from the above definition.

Let the $k$-local conserved quantity obtained in~\cite{fukai-hubbard-charge-2023} be denoted by $Q_k$, which has the following form:
\begin{align}
    Q_k
    &
    =
    Q_k^0
    +
    \delta Q_{k-1}(U)
    ,
    \\
    Q_k^0
    &
    \equiv
    2
    \sum_{j=1}^L
    \sum_{\sigma=\uparrow,\downarrow}
    \paren{
        c_{j,\sigma} c_{j+k-1,\sigma}^\dag
        +(-1)^{k-1}
        c_{j,\sigma}^\dag c_{j+k-1,\sigma}
    }
    ,
\end{align}
where $\delta Q_{k-1}(U)$ is written as a linear combination of less-than-or-equal-to-$(k-1)$-support operators and depends on $U$.
The exact form of $\delta Q_{k-1}(U)$ is explained in~\cite{fukai-hubbard-charge-2023}, which is not necessary for the following argument.
$Q_k^0$ is the $k$-support operator in $Q_k$.

Let a one-local conserved quantity be denoted by $Q_1$, which is not obtained through the expansion of the transfer matrix.
$Q_1$ is a linear combination of the $\mathrm{SU}(2)$ charges and $\mathrm{U}(1)$ charge, and also the $\eta$-pairing charges for even $L$~\cite{PhysRevLett.63.2144, doi:10.1142/S0217984990000933}.
We prove there are no other one-support conserved quantities in Appendix~\ref{app:one-sup}.

Whether or not there exist local charges independent of $\bce{Q_k}_{k\geq 1}$ has been a mystery.
We prove the family of the local conserved quantities $\{Q_k\}_{k\geq 1}$ is complete: there are no other local conserved quantities in the one-dimensional Hubbard model independent of $\{Q_k\}_{k\geq 1}$.
The proof is based on the following theorem and its corollary:
\begin{theorem}
\label{theorem}
Let $F_k$ be a $k$-local conserved quantity of the one-dimensional Hubbard model $(k < \floor{\frac{L}{2}})$.
A constant $c_k (\neq 0)$ and a less-than-or-equal-to-$(k-1)$-local conserved quantity $\Delta_{k-1}$ exist such that $F_k = c_k Q_k + \Delta_{k-1}$.
\end{theorem}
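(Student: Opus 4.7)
The plan is to adapt the Shiraishi-style support-grading argument used in the non-integrability proofs~\cite{Shiraishi2019,chiba2023proof,pxp-nonintegrability-2024}: use the highest-support component of the equation $[F_k,H]=0$ to pin down the $k$-support part of $F_k$, then subtract $c_kQ_k$ to descend in support. I first write $F_k=F_k^{(k)}+F_k^{<k}$, where $F_k^{(k)}$ collects the $k$-support basis elements. Since $H_{\mathrm{int}}$ is on-site and $H_0$ is a nearest-neighbor hopping, a commutator $[F_k^{(l)},H_{\mathrm{int}}]$ has support at most $l$, while $[F_k^{(l)},H_0]$ has support at most $l+1$; the strict $(l+1)$-support contribution comes only from the latter. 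The $(k+1)$-support component of $[F_k,H]=0$ therefore reduces to
\begin{equation*}
    \left([F_k^{(k)},H_0]\right)^{(k+1)}=0,
\end{equation*}
a closed linear system on the coefficients of $F_k^{(k)}$; because $H_0$ is $U$-independent, this system must hold order by order in any $U$-expansion of those coefficients.

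To resolve the system I would first tabulate the elementary commutators of the outward-extending hoppings $c_{i-1,\sigma}^{\dag}c_{i,\sigma}$ and $c_{i,\sigma}^{\dag}c_{i-1,\sigma}$ ($\sigma=\uparrow,\downarrow$), together with their right-boundary analogs, against each boundary letter $e_{i,\sigma}\in\{c_{i,\sigma},c_{i,\sigma}^{\dag},z_{i,\sigma},I\}$. Each such hopping has even fermion parity and is spatially localized near one boundary, so it commutes through the rest of $\bm{e}_i^{\,k}$ without sign cost, and the $(k+1)$-support image of $[\bm{e}_i^{\,k},H_0]$ depends only on the boundary letter in the matching spin channel. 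Two features are already restrictive: the outward-extension commutator vanishes for $e_{i,\sigma}\in\{c_{i,\sigma},I\}$ paired with the outward-creation hopping and for $\{c_{i,\sigma}^{\dag},I\}$ paired with the outward-annihilation hopping, and the letter produced at the new boundary site is always fermionic ($c$ or $c^{\dag}$), never $z$ or $I$.

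For each $(k+1)$-support target $\tilde{\bm{e}}$ with support $[i-1,i+k-1]$, the vanishing condition couples the coefficient of the source $\bm{e}_i^{\,k}$ (via left extension) to that of $\bm{e}_{i-1}^{\,k}$ (via right extension from the shifted position); translation invariance then collapses these equations to constraints on shape coefficients. By choosing diagnostic targets whose interior content can only be produced by a single source shape---for instance, targets carrying distinctive $z$-patterns at interior sites---I would force every shape coefficient carrying a non-identity operator strictly between the two endpoints to vanish. The surviving parameters live on $k$-support operators whose non-identity content sits only at the endpoints $i$ and $i+k-1$, and the remaining boundary equations together with the spin-$\uparrow\!\leftrightarrow\!\downarrow$ symmetry will leave exactly the structure of $Q_k^0$ up to one overall scalar $c_k$; necessarily $c_k\neq 0$ because $F_k^{(k)}$ is nontrivial by the definition of a $k$-local conserved quantity.

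Setting $\Delta_{k-1}:=F_k-c_kQ_k$ then yields an operator whose $k$-support part vanishes and that still commutes with $H$, hence a less-than-or-equal-to-$(k-1)$-local conserved quantity as required. The assumption $k<\lfloor L/2\rfloor$ enters only to ensure that basis elements in different support classes remain linearly independent on the periodic chain (no wrap-around), so that the support grading is unambiguous. The main obstacle is the third paragraph: relative to the XXX and XYZ chains, the Hubbard model has four letters per site per spin and nontrivial fermion signs, substantially enlarging the linear system, so ruling out interior non-identity content---almost certainly via a propagation or induction argument across the interior positions using carefully chosen diagnostic targets---is where the real work will lie.
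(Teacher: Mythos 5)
There is a genuine gap at the crux of your third paragraph. You claim that the $(k+1)$-support equation $\bigl[F_k^{(k)},H_0\bigr]\big|_{k+1}=0$, together with spin symmetry, already forces $F_k^{(k)}$ to be proportional to $Q_k^0$. It does not. That equation (this is Lemmas 1--6 of the paper) only reduces $F_k^{(k)}$ to the family
\begin{align*}
    F_{k}^{(k)}
    =
    \sum_{\sigma, \mu\in\{\uparrow, \downarrow\}}
    \sum_{s\in\{+,-\}}
    \alpha_{(\sigma, s),(\mu, -s)}
    \sum_{i}
    c_{i, \sigma}^{s}c_{i+k-1, \mu}^{-s}
    +
    \sum_{\sigma, \mu}
    \sum_{s}
    \alpha_{(\sigma, s),(\mu, s)}
    \sum_{i}
    (-1)^{i}
    c_{i, \sigma}^{s}c_{i+k-1, \mu}^{s},
\end{align*}
which still contains spin-mixing terms ($\sigma\neq\mu$) and $\eta$-pairing-like terms, i.e.\ up to sixteen free parameters rather than one. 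Every member of this family commutes with $H_0$ exactly, so no amount of further analysis of the hopping term alone can cut it down; indeed for $U=0$ this whole family \emph{is} a conserved $k$-support operator, which is why the free model has extra charges. The step you are missing is to descend to the $k$-support component of $[F_k,H]=0$, which reads $\bigl[F_k^{(k-1)},H_0\bigr]\big|_{k}+\bigl[F_k^{(k)},H_{\mathrm{int}}\bigr]=0$ and is where the interaction term finally enters. The paper's Lemmas 7--9 exploit this equation, propagating chains of cancellations through the unknown $(k-1)$-support coefficients of $F_k$, to show that the spin-mixing and same-sign parameters vanish and that the four surviving parameters are locked together with the relative sign $(-1)^{k-1}$, yielding $F_k^{(k)}=c_kQ_k^0$. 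Your proposal never invokes $H_{\mathrm{int}}$ or the subleading part $F_k^{(k-1)}$ in determining $F_k^{(k)}$, so as written it cannot close.

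A secondary, smaller point: your final claim that $c_k\neq 0$ follows from the definition of a $k$-local conserved quantity is fine once $F_k^{(k)}\propto Q_k^0$ is established, and your reading of the role of $k<\lfloor L/2\rfloor$ (no wrap-around contributions to the cancellation diagrams) matches the paper's. The first two paragraphs of your plan, and the final descent step $\Delta_{k-1}=F_k-c_kQ_k$, are essentially the paper's argument; the missing interaction-term analysis is the real content of the proof.
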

When we replace $F_k$ by a less-than-or-equal-to-$k$-local conserved quantity $F_k^\prime$, in Theorem~\ref{theorem}, $c_k=0$ is allowed.

We have the following corollary immediately from Theorem~\ref{theorem}.
\begin{corollary}
\label{coro}
Let $F_k$ be a $k$-local conserved quantity of the one-dimensional Hubbard model $(k < \floor{\frac{L}{2}})$.
A set of constants $\bce{c_l}_{1\leq l \leq k}$ exists such that $F_k = \sum_{l=1}^{k} c_l Q_l$.
\end{corollary}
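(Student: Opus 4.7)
The plan is to deduce Corollary~\ref{coro} from Theorem~\ref{theorem} by a straightforward induction on the locality $k$. Since the hypothesis $k < \lfloor L/2 \rfloor$ is automatically preserved when $k$ decreases, each application of the theorem and of its less-than-or-equal-to-$k$ variant from the remark following it is legitimate throughout the induction.

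For the base case $k=1$, I would invoke Appendix~\ref{app:one-sup}, which classifies all one-support conserved quantities as linear combinations of the $\mathrm{SU}(2)$ and $\mathrm{U}(1)$ charges (together with the $\eta$-pairing charges for even $L$); these are exactly what $Q_1$ stands for, so $F_1 = c_1 Q_1$ for some constant $c_1$. For the inductive step, assume the claim for every locality strictly below $k$, and pick a $k$-local conserved quantity $F_k$. Theorem~\ref{theorem} produces $F_k = c_k Q_k + \Delta_{k-1}$, where $\Delta_{k-1}$ is a less-than-or-equal-to-$(k-1)$-local conserved quantity. If $\Delta_{k-1}=0$ we stop. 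Otherwise, let $m \leq k-1$ be the largest integer for which $\Delta_{k-1}$ has a nonzero $m$-support basis coefficient; then $\Delta_{k-1}$ is genuinely $m$-local, and the inductive hypothesis yields constants $c_1,\ldots,c_m$ with $\Delta_{k-1} = \sum_{l=1}^{m} c_l Q_l$. Setting $c_l = 0$ for $m < l < k$ gives the desired expansion $F_k = \sum_{l=1}^{k} c_l Q_l$.

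The only bookkeeping subtlety is the use of the remark after Theorem~\ref{theorem}, which permits the coefficient of $Q_k$ to vanish when the starting charge is only a less-than-or-equal-to-$k$-local conserved quantity; this is what lets the recursive descent continue past values of the locality at which $\Delta$ happens to carry no top-support piece. Beyond that, I do not expect any genuine obstacle: all of the hard analytical work sits inside Theorem~\ref{theorem}, and the corollary is a clean recursive unpacking of it, so the main challenge in writing the argument is notational rather than mathematical.
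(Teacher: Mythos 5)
Your proposal is correct and follows essentially the same route as the paper: repeated (inductive) application of Theorem~\ref{theorem}, using the remark that allows $c_k=0$ for a less-than-or-equal-to-$k$-local conserved quantity, with the one-local classification of Appendix~\ref{app:one-sup} terminating the descent. The extra bookkeeping you supply (extracting the true maximal support $m$ of $\Delta_{k-1}$ and padding with zero coefficients) is a slightly more explicit rendering of the same argument, not a different one.
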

\begin{proof}
From Theorem~\ref{theorem}, $F_k$ is written as $F_k = c_k Q_k + F^\prime_{k-1}$, where $F^\prime_{k-1}$ is a less-than-or-equal-to-$(k-1)$-local conserved quantity.
Using Theorem~\ref{theorem} again to $F^\prime_{k-1}$, we have $F_{k} = c_k Q_k + c_{k-1} Q_{k-1} + F^\prime_{k-2}$ where $F^\prime_{k-2}$ is a less-than-or-equal-to-$(k-2)$-local conserved quantity.
In the same way, by repeatedly using Theorem~\ref{theorem}, we have $F_k = \sum_{l=1}^{k} c_l Q_l$ ($c_k\neq 0$).
\end{proof}

From Corollary~\ref{coro}, all local conserved quantities are written as a linear combination of $\bce{Q_k}_{k\geq 1}$, and we can see the completeness of $\bce{Q_k}_{k\geq 1}$.
We can also see that the local charges generated by the transfer matrix are written as a linear combination of $\bce{Q_k}_{k\geq 1}$.
All we have to do is the proof of Theorem~\ref{theorem}.
We prove Theorem~\ref{theorem} in the following sections.

\section{Notations}
\label{sec:notation}
In this section, we introduce useful notations in the proof of Theorem~\ref{theorem}.
In the following, the symbol for the commutator has the additional factor $1/2$: $\bck{A, B} \equiv \frac{1}{2}\paren{AB-BA}$.

\subsection{Notation for the commutator with Hamiltonian}
We introduce a notation for the $k$-support basis element:
\begin{align}
    \label{eq:notation-def}
    \begin{tikzpicture}[baseline=-0.5*\completevertical]
        \stringud{0}{$\Bar{a}_1$}{$\Bar{b}_1$}
        \stringud{1}{$\Bar{a}_2$}{$\Bar{b}_2$}
        \stringud{2}{$\cdots$}{$\cdots$}
        \stringud{3}{$\Bar{a}_k$}{$\Bar{b}_k$}
    \end{tikzpicture}
    (i)
    \equiv
    \bm{\sumsym}_{i}^{k}
    ,
\end{align}
where $\Bar{a}_l(\Bar{b}_l) \equiv \circled{-}, \circled{+}, \circled{z}, \circled{\phantom{\pm}}$ for $\sumsym_{i+l-1, \uparrow(\downarrow)} = c_{i+l-1, \uparrow(\downarrow)}, c^\dag_{i+l-1, \uparrow(\downarrow)}, z_{i+l-1, \uparrow(\downarrow)}, I$ respectively.
We note that at least one of $\Bar{a}_1$ and $\Bar{b}_1$ should not be $I$, and the same is also true for $\Bar{a}_k$ and $\Bar{b}_k$.
We give examples of $5$-support basis elements in this notation:
\begin{align}
    \begin{tikzpicture}[baseline=-0.5*\completevertical]
        \stringud{0}{\circled{+}}{\circled{}}
        \stringud{1}{\circled{-}}{\circled{}}
        \stringud{2}{\circled{}}{\circled{}}
        \stringud{3}{\circled{}}{\circled{+}}
        \stringud{4}{\circled{}}{\circled{-}}
    \end{tikzpicture}
    (i)
    &=
    c^\dag_{i,\uparrow}c_{i+1,\uparrow}
    c^\dag_{i+3,\downarrow}c_{i+4,\downarrow}
    ,
    \\
    \begin{tikzpicture}[baseline=-0.5*\completevertical]
        \stringud{0}{\circled{+}}{\circled{}}
        \stringud{1}{\circled{}}{\circled{-}}
        \stringud{2}{\circled{-}}{\circled{}}
        \stringud{3}{\circled{}}{\circled{+}}
        \stringud{4}{\circled{z}}{\circled{}}
    \end{tikzpicture}
    (i)
    &=
    c^\dag_{i,\uparrow}c_{i+2,\uparrow}z_{i+4, \uparrow}
    c_{i+1,\downarrow}c^\dag_{i+3,\downarrow}
    .
\end{align}
We refer to $i$ in~\eqref{eq:notation-def} as the starting site and to the two-row sequence 
$
\begin{tikzpicture}[baseline=-0.5*\completevertical]
    \stringud{0}{$\Bar{a}_1$}{$\Bar{b}_1$}
    \stringud{1}{$\Bar{a}_2$}{$\Bar{b}_2$}
    \stringud{2}{$\cdots$}{$\cdots$}
    \stringud{3}{$\Bar{a}_k$}{$\Bar{b}_k$}
\end{tikzpicture}
$
as a $k$-support \textit{configuration}.

We introduce a notation for the commutator with the hopping term in the Hamiltonian.
We first denote the density of the hopping term as
\begin{align}
    \label{eq:hopping_dens}
    h_{l}^{\uparrow} &= 2(c_{l, \uparrow}c^\dag_{l+1, \uparrow} - c^\dag_{l, \uparrow}c_{l+1, \uparrow} )
    ,
    \\
    h_{l}^{\downarrow} &= 2(c_{l, \downarrow}c^\dag_{l+1, \downarrow} - c^\dag_{l, \downarrow}c_{l+1, \downarrow} )
    ,
\end{align}
and the hopping term is written as $H_0 = \sum_{\sigma \in \{\uparrow, \downarrow\}}\sum_{l=1}^{L} h_{l}^{\sigma}$.
We represent the commutator of a basis element and $h_{l}^{\sigma}$ ($\sigma \in \{\uparrow, \downarrow\}$) by
\begin{align}
    \label{eq:commuhup}
    \begin{tikzpicture}[baseline=-0.5*\completevertical]
        \stringud{0}{$\Bar{a}_1$}{$\Bar{b}_1$}
        \stringud{1}{$\cdots$}{$\cdots$}
        \stringud{2}{$\Bar{a}_l$}{$\Bar{b}_l$}
        \stringud{3.3}{$\Bar{a}_{l+1}$}{$\Bar{b}_{l+1}$}
        \stringud{5}{$\cdots$}{$\cdots$}
        \stringud{6}{$\Bar{a}_k$}{$\Bar{b}_k$}
        \commutatorhopu{2}{3}
    \end{tikzpicture}
    (i)
    &
    \equiv
    \bck{
        \begin{tikzpicture}[baseline=-0.5*\completevertical]
            \stringud{0}{$\Bar{a}_1$}{$\Bar{b}_1$}
            \stringud{1}{$\cdots$}{$\cdots$}
            \stringud{2}{$\Bar{a}_l$}{$\Bar{b}_l$}
            \stringud{3.3}{$\Bar{a}_{l+1}$}{$\Bar{b}_{l+1}$}
            \stringud{5}{$\cdots$}{$\cdots$}
            \stringud{6}{$\Bar{a}_k$}{$\Bar{b}_k$}
        \end{tikzpicture}
        (i)
        ,
        h_{i+l-1}^{\uparrow}
    }
    ,
    \\
    \label{eq:commuhdown}
    \begin{tikzpicture}[baseline=-0.5*\completevertical]
        \stringud{0}{$\Bar{a}_1$}{$\Bar{b}_1$}
        \stringud{1}{$\cdots$}{$\cdots$}
        \stringud{2}{$\Bar{a}_l$}{$\Bar{b}_l$}
        \stringud{3.3}{$\Bar{a}_{l+1}$}{$\Bar{b}_{l+1}$}
        \stringud{5}{$\cdots$}{$\cdots$}
        \stringud{6}{$\Bar{a}_k$}{$\Bar{b}_k$}
        \commutatorhopd{2}{3}
    \end{tikzpicture}
    (i)
    &
    \equiv
    \bck{
        \begin{tikzpicture}[baseline=-0.5*\completevertical]
            \stringud{0}{$\Bar{a}_1$}{$\Bar{b}_1$}
            \stringud{1}{$\cdots$}{$\cdots$}
            \stringud{2}{$\Bar{a}_l$}{$\Bar{b}_l$}
            \stringud{3.3}{$\Bar{a}_{l+1}$}{$\Bar{b}_{l+1}$}
            \stringud{5}{$\cdots$}{$\cdots$}
            \stringud{6}{$\Bar{a}_k$}{$\Bar{b}_k$}
        \end{tikzpicture}
        (i)
        ,
        h_{i+l-1}^{\downarrow}
    }
    .
\end{align}
When acting on the left or right end of the basis element, we write the action as
\begin{align}
    \label{eq:commuhuprightend}
    \begin{tikzpicture}[baseline=-0.5*\completevertical]
        \stringud{0}{$\Bar{a}_1$}{$\Bar{b}_1$}
        \stringud{1}{$\cdots$}{$\cdots$}
        \stringud{2}{$\Bar{a}_k$}{$\Bar{b}_k$}
        \stringud{3}{\circleddotted{}}{\circleddotted{}}
        \commutatorhopu{1.9}{3}
    \end{tikzpicture}
    (i)
    &
    =
    \bck{
        \begin{tikzpicture}[baseline=-0.5*\completevertical]
            \stringud{0}{$\Bar{a}_1$}{$\Bar{b}_1$}
            \stringud{1}{$\cdots$}{$\cdots$}
            \stringud{2}{$\Bar{a}_k$}{$\Bar{b}_k$}
        \end{tikzpicture}
        (i)
        ,
        h_{i+k-1}^{\uparrow}
    }
    ,
    \\
    \label{eq:commuhupleftend}
    \begin{tikzpicture}[baseline=-0.5*\completevertical]
        \stringud{0}{$\Bar{a}_1$}{$\Bar{b}_1$}
        \stringud{1}{$\cdots$}{$\cdots$}
        \stringud{2}{$\Bar{a}_k$}{$\Bar{b}_k$}
        \stringud{-1}{\circleddotted{}}{\circleddotted{}}
        \commutatorhopu{-1}{-0.1}
    \end{tikzpicture}
    (i)
    &
    =
    \bck{
        \begin{tikzpicture}[baseline=-0.5*\completevertical]
            \stringud{0}{$\Bar{a}_1$}{$\Bar{b}_1$}
            \stringud{1}{$\cdots$}{$\cdots$}
            \stringud{2}{$\Bar{a}_k$}{$\Bar{b}_k$}
        \end{tikzpicture}
        (i)
        ,
        h_{i-1}^{\uparrow}
    }
    ,
\end{align}
where we add the additional columns with $\circleddotted{}$ to the right or left end, and the support can be increased by one in this case.
The actions of $h_{i-1}^{\downarrow}$ and $h_{i+k-1}^{\downarrow}$ are represented in the same manner.

We show all the possible actions of~\eqref{eq:commuhup}:
\begin{align}
    \label{eq:commu-plus-b-up}
    \begin{tikzpicture}[baseline=0.5*\completevertical]
        \stringsu{0}{1}{$\cdots$}
        \stringsu{1}{2}{$\circled{\pm}$}
        \stringsu{2}{3}{$\circled{}$}
        \commutatorhopu{1}{2}
        \stringsu{3}{4}{$\cdots$}
    \end{tikzpicture}
    &
    =
    \pm
    \begin{tikzpicture}[baseline=0.5*\completevertical]
        \stringsu{0}{1}{$\cdots$}
        \stringsu{2}{3}{$\circled{\pm}$}
        \stringsu{1}{2}{$\circled{}$}
        \stringsu{3}{4}{$\cdots$}
    \end{tikzpicture}
    ,
    \\
    \begin{tikzpicture}[baseline=0.5*\completevertical]
        \stringsu{0}{1}{$\cdots$}
        \stringsu{2}{3}{$\circled{\pm}$}
        \stringsu{1}{2}{$\circled{}$}
        \commutatorhopu{1}{2}
        \stringsu{3}{4}{$\cdots$}
    \end{tikzpicture}
    &
    =
    \pm
    \begin{tikzpicture}[baseline=0.5*\completevertical]
        \stringsu{0}{1}{$\cdots$}
        \stringsu{1}{2}{$\circled{\pm}$}
        \stringsu{2}{3}{$\circled{}$}
        \stringsu{3}{4}{$\cdots$}
    \end{tikzpicture}
    ,
    \\
    \label{eq:commu-pm-mp}
    \begin{tikzpicture}[baseline=0.5*\completevertical]
        \stringsu{0}{1}{$\cdots$}
        \stringsu{1}{2}{$\circled{\pm}$}
        \stringsu{2}{3}{$\circled{\mp}$}
        \commutatorhopu{1}{2}
        \stringsu{3}{4}{$\cdots$}
    \end{tikzpicture}
    &
    =
    \frac{1}{2}
    \paren{
        \begin{tikzpicture}[baseline=0.5*\completevertical]
            \stringsu{0}{1}{$\cdots$}
            \stringsu{1}{2}{$\circled{}$}
            \stringsu{2}{3}{$\circled{z}$}
            \stringsu{3}{4}{$\cdots$}
        \end{tikzpicture}
        -
        \begin{tikzpicture}[baseline=0.5*\completevertical]
            \stringsu{0}{1}{$\cdots$}
            \stringsu{1}{2}{$\circled{z}$}
            \stringsu{2}{3}{$\circled{}$}
            \stringsu{3}{4}{$\cdots$}
        \end{tikzpicture}
    }
    ,
    \\
    \label{eq:commu-z-i}
    \begin{tikzpicture}[baseline=0.5*\completevertical]
        \stringsu{0}{1}{$\cdots$}
        \stringsu{1}{2}{$\circled{z}$}
        \stringsu{2}{3}{$\circled{}$}
        \commutatorhopu{1}{2}
        \stringsu{3}{4}{$\cdots$}
    \end{tikzpicture}
    &
    =
    -
    2
    \paren{
        \begin{tikzpicture}[baseline=0.5*\completevertical]
            \stringsu{0}{1}{$\cdots$}
            \stringsu{1}{2}{$\circled{+}$}
            \stringsu{2}{3}{$\circled{-}$}
            \stringsu{3}{4}{$\cdots$}
        \end{tikzpicture}
        +
        \begin{tikzpicture}[baseline=0.5*\completevertical]
            \stringsu{0}{1}{$\cdots$}
            \stringsu{1}{2}{$\circled{-}$}
            \stringsu{2}{3}{$\circled{+}$}
            \stringsu{3}{4}{$\cdots$}
        \end{tikzpicture} 
    }
    ,
    \\
    \label{eq:commu-i-z}
    \begin{tikzpicture}[baseline=0.5*\completevertical]
        \stringsu{0}{1}{$\cdots$}
        \stringsu{1}{2}{$\circled{}$}
        \stringsu{2}{3}{$\circled{z}$}
        \commutatorhopu{1}{2}
        \stringsu{3}{4}{$\cdots$}
    \end{tikzpicture}
    &
    =
    2
    \paren{
        \begin{tikzpicture}[baseline=0.5*\completevertical]
            \stringsu{0}{1}{$\cdots$}
            \stringsu{1}{2}{$\circled{+}$}
            \stringsu{2}{3}{$\circled{-}$}
            \stringsu{3}{4}{$\cdots$}
        \end{tikzpicture}
        +
        \begin{tikzpicture}[baseline=0.5*\completevertical]
            \stringsu{0}{1}{$\cdots$}
            \stringsu{1}{2}{$\circled{-}$}
            \stringsu{2}{3}{$\circled{+}$}
            \stringsu{3}{4}{$\cdots$}
        \end{tikzpicture}
    }
    ,
    \\
    \begin{tikzpicture}[baseline=0.5*\completevertical]
        \stringsu{0}{1}{$\cdots$}
        \stringsu{1}{2}{$\circled{z}$}
        \stringsu{2}{3}{$\circled{\pm}$}
        \commutatorhopu{1}{2}
        \stringsu{3}{4}{$\cdots$}
    \end{tikzpicture}
    &
    =
    \pm
    \begin{tikzpicture}[baseline=0.5*\completevertical]
        \stringsu{0}{1}{$\cdots$}
        \stringsu{1}{2}{$\circled{\pm}$}
        \stringsu{2}{3}{$\circled{z}$}
        \stringsu{3}{4}{$\cdots$}
    \end{tikzpicture}
    ,
     \\
    \begin{tikzpicture}[baseline=0.5*\completevertical]
        \stringsu{0}{1}{$\cdots$}
        \stringsu{1}{2}{$\circled{\pm}$}
        \stringsu{2}{3}{$\circled{z}$}
        \commutatorhopu{1}{2}
        \stringsu{3}{4}{$\cdots$}
    \end{tikzpicture}
    &
    =
    \pm
    \begin{tikzpicture}[baseline=0.5*\completevertical]
        \stringsu{0}{1}{$\cdots$}
        \stringsu{1}{2}{$\circled{z}$}
        \stringsu{2}{3}{$\circled{\pm}$}
        \stringsu{3}{4}{$\cdots$}
    \end{tikzpicture}
    ,
    \label{eq:commu-plus-b-up-end}
    \\
    \begin{tikzpicture}[baseline=0.5*\completevertical]
        \stringsu{0}{1}{$\cdots$}
        \stringsu{1}{2}{$\circled{}$}
        \stringsu{2}{3}{$\circled{}$}
        \commutatorhopu{1}{2}
        \stringsu{3}{4}{$\cdots$}
    \end{tikzpicture}
    &
    =
    \begin{tikzpicture}[baseline=0.5*\completevertical]
        \stringsu{0}{1}{$\cdots$}
        \stringsu{1}{2}{$\circled{\pm}$}
        \stringsu{2}{3}{$\circled{\pm}$}
        \commutatorhopu{1}{2}
        \stringsu{3}{4}{$\cdots$}
    \end{tikzpicture}
    =
     \begin{tikzpicture}[baseline=0.5*\completevertical]
        \stringsu{0}{1}{$\cdots$}
        \stringsu{1}{2}{$\circled{z}$}
        \stringsu{2}{3}{$\circled{z}$}
        \commutatorhopu{1}{2}
        \stringsu{3}{4}{$\cdots$}
    \end{tikzpicture}
    =0
    ,
\end{align}
where we omit the lower row and the starting site.
When evaluating~\eqref{eq:commuhdown} case-by-case, the obtained results are the same.
$\circleddotted{}$ in~\eqref{eq:commuhuprightend} and~\eqref{eq:commuhupleftend} is treated in the same manner as $\circled{}$ in the above actions.

We also introduce a notation for the commutator with the interaction term $H_{\mathrm{int}}$.
We first denote the density of the interaction term as
\begin{align}
    \label{eq:int_dens}
    h_{l}^{\mathrm{int}} = U z_{l, \uparrow}z_{l, \downarrow}
    ,
\end{align}
and the interaction term is written as
$H_{\mathrm{int}} 
=
\sum_{l=1}^{L} h_{l}^{\mathrm{int}}$.
We represent the commutator of a basis element and the density of the interaction term by
\begin{align}
    \label{eq:commutatorint}
    \begin{tikzpicture}[baseline=-0.5*\completevertical]
        \stringud{0}{$\Bar{a}_1$}{$\Bar{b}_1$}
        \stringud{1}{$\cdots$}{$\cdots$}
        \stringud{2}{$\Bar{a}_l$}{$\Bar{b}_l$}
        \stringud{3}{$\cdots$}{$\cdots$}
        \stringud{4}{$\Bar{a}_k$}{$\Bar{b}_k$}
        \commutatorint{2}
    \end{tikzpicture}
    (i)
    \equiv
    \bck{
        \begin{tikzpicture}[baseline=-0.5*\completevertical]
            \stringud{0}{$\Bar{a}_1$}{$\Bar{b}_1$}
            \stringud{1}{$\cdots$}{$\cdots$}
            \stringud{2}{$\Bar{a}_l$}{$\Bar{b}_l$}
            \stringud{3}{$\cdots$}{$\cdots$}
            \stringud{4}{$\Bar{a}_k$}{$\Bar{b}_k$}
        \end{tikzpicture}
        (i)
        ,
        h_{i+l-1}^{\mathrm{int}}
    }
    .
\end{align}
We show the non-zero action of~\eqref{eq:commutatorint}:
\begin{align}
    \label{eq:commutator-int-start}
    \begin{tikzpicture}[baseline=-0.5*\completevertical]
        \stringsud{0}{1}{$\cdots$}{$\cdots$}
        \stringsud{1}{2}{$\circled{\pm}$}{$\circled{}$}
        \commutatorint{1}
        \stringsud{2}{3}{$\cdots$}{$\cdots$}
    \end{tikzpicture}
    &
    =
    \mp U
    \begin{tikzpicture}[baseline=-0.5*\completevertical]
        \stringsud{0}{1}{$\cdots$}{$\cdots$}
        \stringsud{1}{2}{$\circled{\pm}$}{$\circled{z}$}
        \stringsud{2}{3}{$\cdots$}{$\cdots$}
    \end{tikzpicture}
    ,
    \quad\quad
    \begin{tikzpicture}[baseline=-0.5*\completevertical]
        \stringsud{0}{1}{$\cdots$}{$\cdots$}
        \stringsud{1}{2}{$\circled{}$}{$\circled{\pm}$}
        \commutatorint{1}
        \stringsud{2}{3}{$\cdots$}{$\cdots$}
    \end{tikzpicture}
    =
    \mp U
    \begin{tikzpicture}[baseline=-0.5*\completevertical]
        \stringsud{0}{1}{$\cdots$}{$\cdots$}
        \stringsud{1}{2}{$\circled{z}$}{$\circled{\pm}$}
        \stringsud{2}{3}{$\cdots$}{$\cdots$}
    \end{tikzpicture}
    ,
    \\
    \begin{tikzpicture}[baseline=-0.5*\completevertical]
        \stringsud{0}{1}{$\cdots$}{$\cdots$}
        \stringsud{1}{2}{$\circled{\pm}$}{$\circled{z}$}
        \commutatorint{1}
        \stringsud{2}{3}{$\cdots$}{$\cdots$}
    \end{tikzpicture}
    &
    =
    \mp U
    \begin{tikzpicture}[baseline=-0.5*\completevertical]
        \stringsud{0}{1}{$\cdots$}{$\cdots$}
        \stringsud{1}{2}{$\circled{\pm}$}{$\circled{}$}
        \stringsud{2}{3}{$\cdots$}{$\cdots$}
    \end{tikzpicture}
    ,
    \quad\quad
    \begin{tikzpicture}[baseline=-0.5*\completevertical]
        \stringsud{0}{1}{$\cdots$}{$\cdots$}
        \stringsud{1}{2}{$\circled{z}$}{$\circled{\pm}$}
        \commutatorint{1}
        \stringsud{2}{3}{$\cdots$}{$\cdots$}
    \end{tikzpicture}
    =
    \mp U
    \begin{tikzpicture}[baseline=-0.5*\completevertical]
        \stringsud{0}{1}{$\cdots$}{$\cdots$}
        \stringsud{1}{2}{$\circled{}$}{$\circled{\pm}$}
        \stringsud{2}{3}{$\cdots$}{$\cdots$}
    \end{tikzpicture}
    ,
    \label{eq:commutator-int-end}
\end{align}
and all other cases are zero.
The action of~\eqref{eq:commutatorint} does not change the support.

The commutator of a basis element and the Hamiltonian can be calculated by using the notations introduced above, for example:
\begin{align}
    \bck{
        \begin{tikzpicture}[baseline=-0.5*\completevertical]
            \stringud{0}{\circled{+}}{\circled{}}
            \stringud{1}{\circled{-}}{\circled{z}}
        \end{tikzpicture}
        (i)
        ,
        H
    }
    =&
    \begin{tikzpicture}[baseline=-0.5*\completevertical]
        \stringud{-1}{\circleddotted{}}{\circleddotted{}}
        \stringud{0}{\circled{+}}{\circled{}}
        \stringud{1}{\circled{-}}{\circled{z}}
        \commutatorhopu{-1}{0}
    \end{tikzpicture}
    (i) 
    +
    \begin{tikzpicture}[baseline=-0.5*\completevertical]
        \stringud{0}{\circled{+}}{\circled{}}
        \stringud{1}{\circled{-}}{\circled{z}}
        \commutatorhopu{0}{1}
    \end{tikzpicture}
    (i)
    +
    \begin{tikzpicture}[baseline=-0.5*\completevertical]
        \stringud{0}{\circled{+}}{\circled{}}
        \stringud{1}{\circled{-}}{\circled{z}}
        \stringud{2}{\circleddotted{}}{\circleddotted{}}
        \commutatorhopu{1}{2}
    \end{tikzpicture}
    (i)
    +
    \begin{tikzpicture}[baseline=-0.5*\completevertical]
        \stringud{0}{\circled{+}}{\circled{}}
        \stringud{1}{\circled{-}}{\circled{z}}
        \commutatorhopd{0}{1}
    \end{tikzpicture}
    (i)
    +
    \begin{tikzpicture}[baseline=-0.5*\completevertical]
        \stringud{0}{\circled{+}}{\circled{}}
        \stringud{1}{\circled{-}}{\circled{z}}
        \stringud{2}{\circleddotted{}}{\circleddotted{}}
        \commutatorhopd{1}{2}
    \end{tikzpicture}
    (i)
    \nonumber\\
    &
    +
    \begin{tikzpicture}[baseline=-0.5*\completevertical]
        \stringud{0}{\circled{+}}{\circled{}}
        \stringud{1}{\circled{-}}{\circled{z}}
        \commutatorint{0}
    \end{tikzpicture}
    (i)
    +
    \begin{tikzpicture}[baseline=-0.5*\completevertical]
        \stringud{0}{\circled{+}}{\circled{}}
        \stringud{1}{\circled{-}}{\circled{z}}
        \commutatorint{1}
    \end{tikzpicture}
    (i)
    ,
\end{align}
where each term on the right-hand side is calculated as follows:
\begin{align}
    \begin{tikzpicture}[baseline=-0.5*\completevertical]
        \stringud{-1}{\circleddotted{}}{\circleddotted{}}
        \stringud{0}{\circled{+}}{\circled{}}
        \stringud{1}{\circled{-}}{\circled{z}}
        \commutatorhopu{-1}{0}
    \end{tikzpicture}
    (i) 
    &
    =
    \begin{tikzpicture}[baseline=-0.5*\completevertical]
        \stringud{-1}{\circled{+}}{\circled{}}
        \stringud{0}{\circled{}}{\circled{}}
        \stringud{1}{\circled{-}}{\circled{z}}
    \end{tikzpicture}
    (i-1)
    ,
    \\
    \begin{tikzpicture}[baseline=-0.5*\completevertical]
        \stringud{0}{\circled{+}}{\circled{}}
        \stringud{1}{\circled{-}}{\circled{z}}
        \commutatorhopu{0}{1}
    \end{tikzpicture}
    (i)
    &
    =
    \frac{1}{2}
    \paren{
        \begin{tikzpicture}[baseline=-0.5*\completevertical]
            \stringud{1}{\circled{z}}{\circled{z}}
        \end{tikzpicture}
        (i+1)
        -
        \begin{tikzpicture}[baseline=-0.5*\completevertical]
            \stringud{0}{\circled{z}}{\circled{}}
            \stringud{1}{\circled{}}{\circled{z}}
        \end{tikzpicture}
        (i)
    }
    ,
    \\
    \begin{tikzpicture}[baseline=-0.5*\completevertical]
        \stringud{0}{\circled{+}}{\circled{}}
        \stringud{1}{\circled{-}}{\circled{z}}
        \stringud{2}{\circleddotted{}}{\circleddotted{}}
        \commutatorhopu{1}{2}
    \end{tikzpicture}
    (i)
    &=
    \begin{tikzpicture}[baseline=-0.5*\completevertical]
        \stringud{0}{\circled{+}}{\circled{}}
        \stringud{1}{\circled{}}{\circled{z}}
        \stringud{2}{\circled{-}}{\circled{}}
    \end{tikzpicture}
    \times (-1)
    ,
    \\
    \begin{tikzpicture}[baseline=-0.5*\completevertical]
        \stringud{0}{\circled{+}}{\circled{}}
        \stringud{1}{\circled{-}}{\circled{z}}
        \commutatorhopd{0}{1}
    \end{tikzpicture}
    (i)
    &=
    2
    \paren{
        \begin{tikzpicture}[baseline=-0.5*\completevertical]
            \stringud{0}{\circled{+}}{\circled{+}}
            \stringud{1}{\circled{-}}{\circled{-}}
        \end{tikzpicture}
        +
        \begin{tikzpicture}[baseline=-0.5*\completevertical]
            \stringud{0}{\circled{+}}{\circled{-}}
            \stringud{1}{\circled{-}}{\circled{+}}
        \end{tikzpicture}
    }
    ,
    \\
    \begin{tikzpicture}[baseline=-0.5*\completevertical]
        \stringud{0}{\circled{+}}{\circled{}}
        \stringud{1}{\circled{-}}{\circled{z}}
        \stringud{2}{\circleddotted{}}{\circleddotted{}}
        \commutatorhopd{1}{2}
    \end{tikzpicture}
    (i)
    &=
    -2
    \paren{
        \begin{tikzpicture}[baseline=-0.5*\completevertical]
            \stringud{0}{\circled{+}}{\circled{}}
            \stringud{1}{\circled{-}}{\circled{+}}
            \stringud{2}{\circled{}}{\circled{-}}
        \end{tikzpicture}
        (i)
        +
        \begin{tikzpicture}[baseline=-0.5*\completevertical]
            \stringud{0}{\circled{+}}{\circled{}}
            \stringud{1}{\circled{-}}{\circled{-}}
            \stringud{2}{\circled{}}{\circled{+}}
        \end{tikzpicture}
        (i)
    }
    ,
    \\
    \begin{tikzpicture}[baseline=-0.5*\completevertical]
        \stringud{0}{\circled{+}}{\circled{}}
        \stringud{1}{\circled{-}}{\circled{z}}
        \commutatorint{0}
    \end{tikzpicture}
    (i)
    &=
    \begin{tikzpicture}[baseline=-0.5*\completevertical]
        \stringud{0}{\circled{+}}{\circled{z}}
        \stringud{1}{\circled{-}}{\circled{z}}
    \end{tikzpicture}
    \times (-U)
    ,
    \\
    \begin{tikzpicture}[baseline=-0.5*\completevertical]
        \stringud{0}{\circled{+}}{\circled{}}
        \stringud{1}{\circled{-}}{\circled{z}}
        \commutatorint{1}
    \end{tikzpicture}
    (i)
    &=
    \begin{tikzpicture}[baseline=-0.5*\completevertical]
        \stringud{0}{\circled{+}}{\circled{}}
        \stringud{1}{\circled{-}}{\circled{}}
    \end{tikzpicture}
    (i)
    \times U
    .
\end{align}

\subsection{Graphical notation for cancellation}
We assume $F_{k}$ is a $k$-local conserved quantity of the one-dimensional Hubbard model.
Configurations are denoted by the symbol $q$ or $p$ in the following.
$F_k$ is written as a linear combination of less-than-or-equal-to-$k$-support basis elements: 
\begin{align}
    F_k
    =
    \sum_{l=1}^{k} 
    \sum_{q \in \mathcal{C}_l}
    \sum_{i=1}^{L}
    c_i(q)
    q(i)
    ,
\end{align}
where $\mathcal{C}_k$ denotes the set of all $k$-support configurations, and $c_i(q)$ is the coefficients of $q(i)$.
Because $F_k$ is a conserved quantity, $F_k$ commutes with the Hamiltonian $\bck{F_k, H} = 0$, which gives the equations for $c_i(q)$.

We represent the commutator $\bck{F_k, H}$ by
\begin{align}
    \label{eq:commu-and-d}
    \bck{F_k, H} 
    =
    \sum_{l=1}^{k+1} 
    \sum_{\widetilde{q} \in \mathcal{C}_l}
    \sum_{i=1}^{L}
    d_{i}(\widetilde{q}) \widetilde{q}(i)
    ,
\end{align}
where $\widetilde{q}(i)$ is a basis element starting from the $i$ th site, and $d_{i}(\widetilde{q})$ is a linear combination of $\{c_i(q)\}$, which is determined from the commutation relation in~\eqref{eq:commu-plus-b-up}--\eqref{eq:commu-plus-b-up-end} and~\eqref{eq:commutator-int-start} and~\eqref{eq:commutator-int-end}.
The upper limit of the summation over $l$ in the right-hand side of~\eqref{eq:commu-and-d} is $k+1$ because the maximum support of the basis elements that can be generated by the commutator of a $k$-local charge and the Hamiltonian is $k+1$.
To ensure the conservation of $F_k$, it is necessary that $d_{i}(\widetilde{q}) = 0$ for all possible $\widetilde{q}$ and $i$. 
This yields the equations which $\{c_i(q)\}$ must satisfy.

We consider the cancellation of $\widetilde{q}(i)$ in $\bck{F_k, H}$~\eqref{eq:commu-and-d}, i.e., the equation $d_{i}(\widetilde{q}) = 0$.
We first explain how we obtain $d_{i}(\widetilde{q})$ in terms of $\{c_i(q)\}$.
Let $q_{1}(i_1),\ldots,q_m(i_m)$ be the basis elements in $F_k$ such that $\bck{q_{l}(i_l), H}$, for $l = 1,\ldots, m$, generate $\widetilde{q}(i)$ and there is no other contribution to $d_{i}(\widetilde{q})$.
There exists only one $h_{v_l}^{s_l}$~($s_l\in\bce{\uparrow, \downarrow, \mathrm{int}}$, $v_l\in\bce{1,\ldots,L}$) for $q_{l}(i_l)$ that satisfies
\begin{align}
    \bck{q_{l}(i_l), h_{v_l}^{s_l}} = f_{l} \widetilde{q}(i) + (\text{rest})
    ,    
\end{align}
where $f_{l}$ is the non-zero factor determined from the commutation relation in~\eqref{eq:commu-plus-b-up}--\eqref{eq:commu-plus-b-up-end} and~\eqref{eq:commutator-int-start} and~\eqref{eq:commutator-int-end}.
$(\text{rest})$ is the other basis element generated by the commutator in the case of~\eqref{eq:commu-pm-mp}--\eqref{eq:commu-i-z}, where the commutator generates two basis elements, and $(\text{rest}) = 0$ for the other cases.
Then, we have
\begin{align}
    d_{i}(\widetilde{q})
    =
    \sum_{l=1}^{m} 
    f_l c_{i_l}(q_l)    
    ,
\end{align}
and the equation for the cancellation of $\widetilde{q}(i)$ becomes: 
\begin{align}
    \label{eq:cancellation-general}
    \sum_{l=1}^{m} 
    f_l c_{i_l}(q_l)
    =0
    .
\end{align}
The cancellation condition~\eqref{eq:cancellation-general} follows from, and will be associated with, the following diagram, which encodes the basis elements $q_{1}(i_1),\ldots,q_m(i_m)$ that yield $\widetilde{q}(i)$ after commutation with the Hamiltonian:
\begin{align}
    \label{eq:graphical-cancellation-general}
    \begin{tikzpicture}[baseline=(current bounding box.center)]
        \node[draw, rectangle, rounded corners=2mm] (A1) at (0,0) {
            $q_{1}(i_1)$
        };
        \node[draw, rectangle, rounded corners=2mm] (A2) at (2,0) {
            $q_{2}(i_2)$
        };
        \node[draw, rectangle, rounded corners=2mm] (A3) at (6,0) {
            $q_{m}(i_m)$
        };
        \node[draw, rectangle, rounded corners=2mm] (B) at (3,1.75) {
            $\widetilde{q}(i)$
        };
        \draw[->,thick] (A1) -- (B) node[midway,above, scale=0.8] {$f_1$};
        \draw[->,thick] (A2) -- (B) node[midway,above, scale=0.8, xshift=-6pt, yshift=-5pt] {$f_2$};
        \draw[->,thick] (A3) -- (B) node[midway,above, scale=0.8,xshift=1pt] {$f_m$};
        \draw[dotted, shorten >=1em, shorten <=1em] (A2) -- (A3);
        \draw[dash pattern=on 0.1mm off 1mm, shorten >=1em, shorten <=1em] (2.5,0.5) -- (4.75,0.5);
    \end{tikzpicture}
    \quad
\end{align}
where the nodes at the starting point of arrows denote the operators that contribute the cancellation of $\widetilde{q}(i)$, and the factor $f_l$ is depicted in the middle of the arrows.
The $f_l$ in~\eqref{eq:graphical-cancellation-general} is often omitted in the following.

In the case of $m=1$,~\eqref{eq:cancellation-general} becomes $f_1 c_{i_1}(q_1)=0$, and we have $c_{i_1}(q_1)=0$ because of $f_1 \neq 0$. 
The $m=1$ case is represented graphically as
\begin{align}
    \label{eq:graphical-cancellation-one-m}
    \begin{tikzpicture}[baseline=(current bounding box.center)]
        \node[draw, rectangle, rounded corners=2mm] (A1) at (0,0) {
            $q_{1}(i_1)$
        };
        \node[draw, rectangle, rounded corners=2mm] (B) at (0,1.5) {
            $\widetilde{q}(i)$
        };
        \draw[->,thick] (A1) -- (B) node[midway,left, scale=0.8] {$f_1$};
    \end{tikzpicture}
    \Longrightarrow
    c_{i_1}(q_1)=0.
\end{align}
In the case of  $m=2$,~\eqref{eq:cancellation-general} becomes $f_1 c_{i_1}(q_1) + f_2 c_{i_2}(q_2)=0$, and we have $c_{i_1}(q_1) = f c_{i_2}(q_2)$ ($f=-f_2/f_1$) because of $f_1 ,f_2 \neq 0$. 
The $m=2$ case is represented graphically as
\begin{align}
    \label{eq:graphical-cancellation-two-m}
    \begin{tikzpicture}[baseline=(current bounding box.center)]
        \node[draw, rectangle, rounded corners=2mm] (A1) at (0,0) {
            $q_{1}(i_1)$
        };
        \node[draw, rectangle, rounded corners=2mm] (A2) at (4,0) {
            $q_{2}(i_2)$
        };
        \node[draw, rectangle, rounded corners=2mm] (B) at (2,1.25) {
            $\widetilde{q}(i)$
        };
        \draw[->,thick] (A1) -- (B) node[midway,above, scale=0.8, xshift=-7pt, yshift=-4pt] {$f_1$};
        \draw[->,thick] (A2) -- (B) node[midway,above, scale=0.8, xshift=7pt, yshift=-4pt] {$f_2$};
    \end{tikzpicture}
    \Longrightarrow
    c_{i_1}(q_1) = f c_{i_2}(q_2)
    .
\end{align} 
In the following, we represent an equation for the cancellation of the $m=2$ case $c_{i_1}(q_1) = f c_{i_2}(q_2)$ simply as $c_{i_1}(q_1) \propto c_{i_2}(q_2)$.
Note that if $c_{i_1}(q_1) \propto c_{i_2}(q_2)$ and $c_{i_2}(q_2)=0$ holds, then we have $c_{i_1}(q_1)=0$.

We also introduce an additional notation that represents the node in the graphical notation:
\begin{align}
    \begin{tikzpicture}[baseline=-0.25\completevertical]
        \node[draw, rectangle, rounded corners=2mm] (A1) at (0,0) {
            \begin{tikzpicture}[baseline=-0.5*\completevertical]
                \stringud{0}{$\Bar{a}_1$}{$\Bar{b}_1$}
                \stringud{1}{$\Bar{a}_2$}{$\Bar{b}_2$}
                \stringud{2}{$\cdots$}{$\cdots$}
                \stringud{3}{$\Bar{a}_k$}{$\Bar{b}_k$}
            \end{tikzpicture}
        };
        \node[above=0em  of A1, scale=1] {$q(i)$};
    \end{tikzpicture}
    \equiv
    \begin{tikzpicture}[baseline=(current bounding box.center)]
        \node[draw, rectangle, rounded corners=2mm] (A1) at (0,0) {
            \begin{tikzpicture}[baseline=-0.5*\completevertical]
                \stringud{0}{$\Bar{a}_1$}{$\Bar{b}_1$}
                \stringud{1}{$\Bar{a}_2$}{$\Bar{b}_2$}
                \stringud{2}{$\cdots$}{$\cdots$}
                \stringud{3}{$\Bar{a}_k$}{$\Bar{b}_k$}
            \end{tikzpicture}
            (i)
        };
    \end{tikzpicture}
    \quad
\end{align}
where $q$ represents the configuration:
$
q = 
\begin{tikzpicture}[baseline=-0.5*\completevertical]
    \stringud{0}{$\Bar{a}_1$}{$\Bar{b}_1$}
    \stringud{1}{$\Bar{a}_2$}{$\Bar{b}_2$}
    \stringud{2}{$\cdots$}{$\cdots$}
    \stringud{3}{$\Bar{a}_k$}{$\Bar{b}_k$}
\end{tikzpicture}
$.

\section{Proof of completeness of $Q_k$}
\label{sec:proof}
In this section, we prove Theorem~\ref{theorem}.
We write $k$-local conserved quantity $F_k$ as
\begin{align}
    \label{eq:F_k_decomp}
    F_k
    =
    F_k^{k}
    +
    F_k^{k-1}
    +
    \text{(rest)}
    ,
\end{align}
where $F_k^{k}$ is the $k$-support operator in $F_k$, and $F_k^{k-1}$ is the $(k-1)$-support operator in $F_k$, and $\text{(rest)}$ is a linear combination of the less-than-or-equal-to-$(k-2)$-support operators in $F_k$.
For the proof of Theorem~\ref{theorem}, it suffices to consider only $F_k^{k}$ and $F_k^{k-1}$.
We determine $F^k_k$ so that $F_k$ satisfies $\bck{F_k, H} =0$.

The cancellation of the operator in $\bck{F_k, H} = \bck{F_k, H_0} +  \bck{F_k, H_{\mathrm{int}}}$ occurs for each support.
The commutator with the $H_0$ increases or decreases the support by one or does not change the support:
$
    \bck{
        A^{l}
        ,
        H_0
    }
    =
    \bck{
        A^{l}
        ,
        H_0
    }
    \big|_{l+1}
    +
    \bck{
        A^{l}
        ,
        H_0
    }
    \big|_{l}
    +
    \bck{
        A^{l}
        ,
        H_0
    }
    \big|_{l-1}
$
where $A^{l}$ is a $l$-support operator, and $\mathcal{O}|_{l}$ denotes the $l$-support operator in $\mathcal{O}$.
The commutator with $H_{\mathrm{int}}$ does not change the support:
$
    \bck{A^{l}, H_{\mathrm{int}}}=\bck{A^{l} ,H_{\mathrm{int}}}\big|_{l}
$.

From $\bck{F_k, H}\big|_{k+1}=0$ and $\bck{F_k, H}\big|_{k} = 0$, we have the following equation for the cancellation of $k+1$ and $k$-support operator in $\bck{F_k, H}$:
\begin{align}
    \label{eq:k+1-cancellation}
    &\bck{F_k^{k}, H_0}\bigg|_{k+1}
    =0
    ,
    \\
    \label{eq:k-cancellation}
    &\bck{F_k^{k}, H_0}\bigg|_{k}
    +
    \bck{F_k^{k-1}, H_0}\bigg|_{k}
    +
    \bck{F_k^{k}, H_{\mathrm{int}}}
    =0
    .
\end{align}

We note that the following argument is applicable to the maximal support is less than half of the system size: $k<\floor{\frac{L}{2}}$, because otherwise there are other contributions to the cancellation of operators beyond what we give below.

\subsection{Cancellation of $(k+1)$-support operator}
In the following, we consider the cancellation of $(k+1)$-support operators~\eqref{eq:k+1-cancellation}.
$F_k^{ k}$ is written as
\begin{align}
    \label{eq:k-sup-sum}
    F_k^{ k}
    =
    \sum_{q\in \mathcal{C}_k}
    \sum_{i=1}^L
    c_{i}(q)
    q(i)
    .
\end{align}
The configuration $q \in \mathcal{C}_k$ is represented as 
$
    q
    =
    \begin{tikzpicture}[baseline=-0.5*\completevertical]
        \stringud{0}{$\Bar{a}_1$}{$\Bar{b}_1$}
        \stringud{1}{$\Bar{a}_2$}{$\Bar{b}_2$}
        \stringud{2}{$\cdots$}{$\cdots$}
        \stringud{3}{$\Bar{a}_k$}{$\Bar{b}_k$}
    \end{tikzpicture}
$
where at least one of $\Bar{a}_1$ or $\Bar{b}_1$ and at least one of $\Bar{a}_k$ or $\Bar{b}_k$ is not $\circled{}$.
We determine $c_i(q)$ to satisfy the cancellation of the $(k+1)$-support operators in~\eqref{eq:k+1-cancellation}.
We note that the nodes at the starting point of the arrows below represent $k$-support operators, as long as we consider the cancellation of~\eqref{eq:k+1-cancellation}.

We can see the coefficients are zero if the column at the right end or left end is filled for both upper and lower rows, such as $
c_i\paren{
    \begin{tikzpicture}[baseline=-0.5*\completevertical]
        \stringud{0}{\circled{+}}{\circled{-}}
        \stringud{1}{$\cdots$}{$\cdots$}
    \end{tikzpicture}
}
=
0
$ and $
c_i\paren{
    \begin{tikzpicture}[baseline=-0.5*\completevertical]
        \stringud{2}{\circled{z}}{\circled{+}}
        \stringud{1}{$\cdots$}{$\cdots$}
    \end{tikzpicture}
}
=
0
$, which is stated in the following Lemma.
\begin{lemma}
\label{lem:first}
If $q\in \mathcal{C}_k$ satisfies $\circled{} \notin \bce{\Bar{a}_1, \Bar{b}_1}$ or $\circled{} \notin \bce{\Bar{a}_k, \Bar{b}_k}$, then $c_{i}(q)=0$ holds.
\end{lemma}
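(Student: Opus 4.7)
The plan is to exploit the cancellation condition~\eqref{eq:k+1-cancellation} for $(k+1)$-support operators: for every configuration $\widetilde{q}$ of support $k+1$ starting at some site $i'$, the coefficient $d_{i'}(\widetilde{q})$ in $\bck{F_k^k, H_0}\big|_{k+1}$ must vanish. For each $q$ satisfying the hypothesis of the lemma, I will exhibit a target $\widetilde{q}$ whose only contributor among the $k$-support basis elements of $F_k^k$ is $q(i)$ itself; the $m=1$ diagram~\eqref{eq:graphical-cancellation-one-m} then forces $c_i(q)=0$.

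By the left--right reflection symmetry of the Hamiltonian, it suffices to treat the case in which the left column of $q$ is fully occupied; the case of a fully occupied right column follows by the mirror argument. The key idea is \emph{not} to extend $q$ at its fully occupied left column---as that would leave the target's left boundary half-occupied and admit rival contributors---but rather at the opposite (right) end, so that the fully occupied left column of $q$ persists unchanged into the target and serves as a rigidity constraint. Concretely, choose $\sigma \in \{\uparrow,\downarrow\}$ such that the row-$\sigma$ entry at site $i+k-1$ of $q$ is non-identity (possible since $q$ is $k$-support) and take $\widetilde{q}(i)$ to be the $(k+1)$-support basis element produced by $\bck{q(i), h^{\sigma}_{i+k-1}}$. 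Columns $1,\dots,k-1$ of $\widetilde{q}$ then coincide with those of $q$, while columns $k$ and $k+1$ are determined by the hopping rules~\eqref{eq:commu-plus-b-up}--\eqref{eq:commu-plus-b-up-end} (the $z$-case~\eqref{eq:commu-z-i} produces two target branches, either of which works below).

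The main step is to enumerate every $k$-support basis element whose commutator with $H_0$ can contribute to $d_i(\widetilde{q})$. Support considerations restrict any contributor $q^*(j)$ to $j \in \{i, i+1\}$. The left-extending candidate $q^*(i+1)$, via some $h^{\sigma'}_i$, is ruled out uniformly: a single hopping populates only the row-$\sigma'$ entry at site $i$, so the opposite row at site $i$ remains identity after commutation, whereas column $1$ of $\widetilde{q}$ equals $(\bar{a}_1, \bar{b}_1)$ with both rows non-identity by hypothesis. The right-extending candidate $q^*(i)$, via some $h^{\sigma'}_{i+k-1}$, is forced to equal $q$: the invertible hopping rule that moves a $\pm$ (or, in the $z$-case, matching a specific branch of~\eqref{eq:commu-z-i}) pins down column $k$ of $q^*$ from $\widetilde{q}$'s columns $k$ and $k+1$, while columns $1,\dots,k-1$ of $q^*$ are untouched by the right-end hopping and must agree with those of $q$. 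Hence $d_i(\widetilde{q}) = f\,c_i(q)$ with $f \neq 0$, giving $c_i(q) = 0$.

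The main obstacle is the case analysis when the right-end entry of $q$ is $z$: rule~\eqref{eq:commu-z-i} splits the target into two branches, and one must verify that at least one branch still admits $q(i)$ as its unique generator. The exclusion of left-extending candidates, by contrast, rests on the single clean observation that a single-spin hopping cannot populate two rows at the same site simultaneously, and is uniform across all sub-cases. The hypothesis $k < \lfloor L/2 \rfloor$ is used implicitly to guarantee that the periodic boundary does not introduce wrap-around contributors to $d_i(\widetilde{q})$.
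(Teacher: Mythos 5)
Your proposal is correct and follows essentially the same route as the paper: treat the fully-occupied-left-column case, extend $q$ at the opposite (right) end via the hopping on the row whose entry $\Bar{a}_k$ (or $\Bar{b}_k$) is non-identity, and observe that the resulting $(k+1)$-support target has only $q(i)$ as a contributor, so the $m=1$ cancellation forces $c_i(q)=0$. Your explicit justification of uniqueness (a left-extension can populate only one row at site $i$, contradicting the fully occupied first column, and the right-extension preimage is pinned down by the hopping rules) is exactly what the paper's single-arrow diagrams in its Cases 1 and 2 encode implicitly.
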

\begin{proof}
Because at least one of $\Bar{a}_k$ or $\Bar{b}_k$ is not $\circled{}$, we consider the case of $\Bar{a}_k \neq \circled{}$ first.
If $\circled{} \notin \bce{\Bar{a}_1, \Bar{b}_1}$, we have the following cancellations of $(k+1)$-support basis elements~$\widetilde{q}(i)$:
\begin{align}
    \label{eq:cancellation-1}
    \begin{tikzpicture}[baseline=(current bounding box.center)]
        \node[draw, rectangle, rounded corners=2mm] (B) at (0,2.5) {
            \begin{tikzpicture}[baseline=-0.5*\completevertical]
                \stringud{0}{$\Bar{a}_1$}{$\Bar{b}_1$}
                \stringud{1}{$\cdots$}{$\cdots$}
                \stringud{2.25}{$\Bar{a}_{k-1}$}{$\Bar{b}_{k-1}$}
                \stringud{3.5}{\circled{}}{$\Bar{b}_k$}
                \stringud{4.5}{\circled{\pm}}{\circled{}}
            \end{tikzpicture}
        };
        \node[draw, rectangle, rounded corners=2mm] (A) at (2,0) {
            \begin{tikzpicture}[baseline=-0.5*\completevertical, rounded corners=0mm]
                \stringud{0}{$\Bar{a}_1$}{$\Bar{b}_1$}
                \stringud{1}{$\cdots$}{$\cdots$}
                \stringud{2.25}{$\Bar{a}_{k-1}$}{$\Bar{b}_{k-1}$}
                \stringud{3.5}{\circled{\pm}}{$\Bar{b}_k$}
                \stringud{4.5}{\circleddotted{}}{\circleddotted{}}
                \commutatorhopudot{3.5}{4.5}
            \end{tikzpicture}
        };
        \draw[->,thick] (A) -- (B);
        \node[above=0em  of B, scale=1] {$\widetilde{q}(i)$};
        \node[above=0em  of A, scale=1] {$q(i)$};
        \node[above=2em  of B, scale=1] {Case 1: $\Bar{a}_k = \circled{\pm}$};
    \end{tikzpicture}
    \ ,\quad\quad
    \begin{tikzpicture}[baseline=(current bounding box.center)]
        \node[draw, rectangle, rounded corners=2mm] (B) at (0,2.5) {
            \begin{tikzpicture}[baseline=-0.5*\completevertical]
                \stringud{0}{$\Bar{a}_1$}{$\Bar{b}_1$}
                \stringud{1}{$\cdots$}{$\cdots$}
                \stringud{2.25}{$\Bar{a}_{k-1}$}{$\Bar{b}_{k-1}$}
                \stringud{3.5}{\circled{+}}{$\Bar{b}_k$}
                \stringud{4.5}{\circled{-}}{\circled{}}
            \end{tikzpicture}
        };
        \node[draw, rectangle, rounded corners=2mm] (A) at (2,0) {
            \begin{tikzpicture}[baseline=-0.5*\completevertical, rounded corners=0mm]
                \stringud{0}{$\Bar{a}_1$}{$\Bar{b}_1$}
                \stringud{1}{$\cdots$}{$\cdots$}
                \stringud{2.25}{$\Bar{a}_{k-1}$}{$\Bar{b}_{k-1}$}
                \stringud{3.5}{\circled{z}}{$\Bar{b}_k$}
                \stringud{4.5}{\circleddotted{}}{\circleddotted{}}
                \commutatorhopudot{3.5}{4.5}
            \end{tikzpicture}
        };
        \draw[->,thick] (A) -- (B);
        \node[above=0em  of B, scale=1] {$\widetilde{q}(i)$};
        \node[above=2em  of B, scale=1] {Case 2: $\Bar{a}_k = \circled{z}$};
        \node[above=0em  of A, scale=1] {$q(i)$};
    \end{tikzpicture}
    \quad
\end{align}
where the commutator that generates $\widetilde{q}$ is indicated by the dotted line as a guide.
From~\eqref{eq:graphical-cancellation-one-m}, we have $c_i(q)=0$ in both cases.

We have proved $c_{i}(q)=0$ if $q$ satisfies $\circled{} \notin \bce{\Bar{a}_1, \Bar{b}_1}$ and $\Bar{a}_k \neq \circled{}$.
Given the same argument holds when the roles of $\Bar{a}$ and $\Bar{b}$ are interchanged, we have also proved $c_{i}(q)=0$ if $q$ satisfies $\circled{} \notin \bce{\Bar{a}_1, \Bar{b}_1}$ and $\Bar{b}_k \neq \circled{}$.
Thus we have proved $c_{i}(q)=0$ if $q$ satisfies $\circled{} \notin \bce{\Bar{a}_1, \Bar{b}_1}$.
In the same way, we can also prove $c_{i}(q)=0$ if $q$ satisfies $\circled{} \notin \bce{\Bar{a}_k, \Bar{b}_k}$.
\end{proof}

We can prove the coefficients are zero if $\circled{z}$ is included in the column at the right end or left end, such as $
c_i\paren{
    \begin{tikzpicture}[baseline=-0.5*\completevertical]
        \stringud{0}{\circled{z}}{\circled{}}
        \stringud{1}{$\cdots$}{$\cdots$}
    \end{tikzpicture}
}
=
0
$, which is stated in the following Lemma.
\begin{lemma}
\label{lem:second}
If $q\in \mathcal{C}_k$  satisfies $\circled{z}\in\{\Bar{a}_1,\Bar{b}_1, \Bar{a}_k,\Bar{b}_k\}$, then $c_{i}(q)=0$ holds.
\end{lemma}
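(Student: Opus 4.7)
Lemma~\ref{lem:first} reduces the hypothesis $\circled{z}\in\{\Bar{a}_{1},\Bar{b}_{1},\Bar{a}_{k},\Bar{b}_{k}\}$ to the subcase where $\circled{}$ also appears in each of the two end columns, since otherwise $c_{i}(q)=0$ is already known. Up to the left--right and spin-flip symmetries of $H$, this leaves the single representative case to treat: $\Bar{a}_{1}=\circled{z}$, $\Bar{b}_{1}=\circled{}$, with the right-end column of $q$ having exactly one non-identity entry, namely one of $(\circled{\pm},\circled{})$, $(\circled{z},\circled{})$, $(\circled{},\circled{\pm})$, or $(\circled{},\circled{z})$.

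My idea is to treat the $\circled{z}$ at site $i$ as a watermark that no leftward commutator of $H_{0}$ can reproduce, and to exploit it by extending $q(i)$ at its right end, away from the $\circled{z}$. Let $\sigma\in\{\uparrow,\downarrow\}$ be the spin row in which the rightmost column of $q$ is non-identity, and compute $[q(i),h_{i+k-1}^{\sigma}]$. By the rules~\eqref{eq:commu-plus-b-up}--\eqref{eq:commu-plus-b-up-end} this expands as a sum of one or two $(k+1)$-support basis elements; pick any one of them and call it $\widetilde{q}(i)$. By construction its column at site $i$ is still $(\circled{z},\circled{})$, and its column at the newly occupied site $i+k$ is non-identity, so $\widetilde{q}(i)$ is a genuine $(k+1)$-support basis element.

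The core step is then to check that $q(i)$ is the \emph{unique} $k$-support basis element in $F_{k}^{k}$ whose commutator with $H_{0}$ feeds $\widetilde{q}(i)$. Since $H_{\mathrm{int}}$ preserves support, only two classes of $k$-support sources can raise the support to $k+1$: (a) some $q'''(i)$ acted on by $h_{i+k-1}^{\tau}$ extending to the right; (b) some $q''(i+1)$ acted on by $h_{i}^{\tau}$ extending to the left. For (a), matching the columns of $\widetilde{q}(i)$ at the untouched sites $i,\ldots,i+k-2$ forces $q'''$ to agree with $q$ there, and checking the rules~\eqref{eq:commu-plus-b-up}--\eqref{eq:commu-plus-b-up-end} case-by-case for each allowed right-end column of $q$ pins the rightmost column of $q'''$ to that of $q$ as well, so $q'''=q$. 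For (b), none of the rules~\eqref{eq:commu-plus-b-up}--\eqref{eq:commu-plus-b-up-end} produces $\circled{z}$ in the newly created leftmost column, so no $q''(i+1)$ can reproduce the $\circled{z}$ at the upper row of site $i$ of $\widetilde{q}(i)$.

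Uniqueness of the source brings the cancellation equation for $\widetilde{q}(i)$ into the one-source form~\eqref{eq:graphical-cancellation-one-m}, and hence $c_{i}(q)=0$. The main obstacle I expect is the case-by-case verification in step (a): when $\Bar{a}_{k}=\circled{z}$ or $\Bar{b}_{k}=\circled{z}$ the right-end commutator yields two $(k+1)$-support terms, and I must verify, after fixing a single one as $\widetilde{q}(i)$, that no alternative rightmost column of $q'''$ sneaks in. The remaining three $\circled{z}$ positions (namely $\Bar{b}_{1}$, $\Bar{a}_{k}$, $\Bar{b}_{k}$) follow from this one by applying the spin-flip and left--right symmetries of $H$.
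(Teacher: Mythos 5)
Your proposal is correct and follows essentially the same route as the paper: commute $q(i)$ with the hopping density at the end opposite the $\circled{z}$, observe that the resulting $(k+1)$-support element has a single source because left-end extensions can never create a $\circled{z}$ in the new column and right-end extensions pin the source's last column uniquely, and conclude $c_i(q)=0$ via the one-source cancellation \eqref{eq:graphical-cancellation-one-m}. The only differences are cosmetic: you first invoke Lemma~\ref{lem:first} to normalize the end columns and you spell out the uniqueness-of-source check that the paper leaves implicit in its diagrams.
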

\begin{proof}
Because at least one of $\Bar{a}_k$ or $\Bar{b}_k$ is not $\circled{}$, we consider the case of $\Bar{a}_k \neq \circled{}$ first.
If $q\in \mathcal{C}_k$ satisfies $\Bar{a}_1=\circled{z}$, the following relations for the cancellations of $(k+1)$-support basis elements~$\widetilde{q}(i)$ hold:
\begin{align}
    \label{eq:cancellation-2}
    \begin{tikzpicture}[baseline=(current bounding box.center)]
        \node[draw, rectangle, rounded corners=2mm] (B) at (0,2.5) {
            \begin{tikzpicture}[baseline=-0.5*\completevertical]
                \stringud{0}{\circled{z}}{$\Bar{b}_1$}
                \stringud{1}{$\cdots$}{$\cdots$}
                \stringud{2.25}{$\Bar{a}_{k-1}$}{$\Bar{b}_{k-1}$}
                \stringud{3.5}{\circled{}}{$\Bar{b}_k$}
                \stringud{4.5}{\circled{\pm}}{\circled{}}
            \end{tikzpicture}
        };
        \node[draw, rectangle, rounded corners=2mm] (A) at (2,0) {
            \begin{tikzpicture}[baseline=-0.5*\completevertical, rounded corners=0mm]
                \stringud{0}{\circled{z}}{$\Bar{b}_1$}
                \stringud{1}{$\cdots$}{$\cdots$}
                \stringud{2.25}{$\Bar{a}_{k-1}$}{$\Bar{b}_{k-1}$}
                \stringud{3.5}{\circled{\pm}}{$\Bar{b}_k$}
                \stringud{4.5}{\circleddotted{}}{\circleddotted{}}
                \commutatorhopudot{3.5}{4.5}
            \end{tikzpicture}
        };
        \draw[->,thick] (A) -- (B);
        \node[above=0em  of B, scale=1] {$\widetilde{q}(i)$};
        \node[above=0em  of A, scale=1] {$q(i)$};
        \node[above=2em  of B, scale=1] {Case 1: $\Bar{a}_k = \circled{\pm}$};
    \end{tikzpicture}
    \quad,\quad
    \begin{tikzpicture}[baseline=(current bounding box.center)]
        \node[draw, rectangle, rounded corners=2mm] (B) at (0,2.5) {
            \begin{tikzpicture}[baseline=-0.5*\completevertical]
                \stringud{0}{\circled{z}}{$\Bar{b}_1$}
                \stringud{1}{$\cdots$}{$\cdots$}
                \stringud{2.25}{$\Bar{a}_{k-1}$}{$\Bar{b}_{k-1}$}
                \stringud{3.5}{\circled{+}}{$\Bar{b}_k$}
                \stringud{4.5}{\circled{-}}{\circled{}}
            \end{tikzpicture}
        };
        \node[draw, rectangle, rounded corners=2mm] (A) at (2,0) {
            \begin{tikzpicture}[baseline=-0.5*\completevertical, rounded corners=0mm]
                \stringud{0}{\circled{z}}{$\Bar{b}_1$}
                \stringud{1}{$\cdots$}{$\cdots$}
                \stringud{2.25}{$\Bar{a}_{k-1}$}{$\Bar{b}_{k-1}$}
                \stringud{3.5}{\circled{z}}{$\Bar{b}_k$}
                \stringud{4.5}{\circleddotted{}}{\circleddotted{}}
                \commutatorhopudot{3.5}{4.5}
            \end{tikzpicture}
        };
        \draw[->,thick] (A) -- (B);
        \node[above=0em  of B, scale=1] {$\widetilde{q}(i)$};
        \node[above=0em  of A, scale=1] {$q(i)$};
        \node[above=2em  of B, scale=1] {Case 2: $\Bar{a}_k = \circled{z}$};
    \end{tikzpicture}
    \quad
\end{align}
and from~\eqref{eq:graphical-cancellation-one-m}, we have $c_i(q) = 0$ in both cases.
In the same way, we can prove $c_i(q) = 0$ if $q$ satisfies $\Bar{a}_1=\circled{z}$ and $\Bar{b}_k \neq \circled{}$.
Thus we have proved $c_{i}(q)=0$ if $q$ satisfies $\Bar{a}_1 = \circled{z}$.

The same argument holds when the roles of $\Bar{a}$ and $\Bar{b}$ are interchanged, and then we can also prove $c_{i}(q)=0$ if $q$ satisfies $\Bar{b}_1 = \circled{z}$.
Thus we have proved $c_{i}(q)=0$ if $q$ satisfies $\circled{z} \in \bce{\Bar{a}_1, \Bar{b}_1}$.
In the same way, we can also prove $c_{i}(q)=0$ if $q$ satisfies $\circled{z} \in \bce{\Bar{a}_k, \Bar{b}_k}$.
\end{proof}

We can prove the coefficients are zero if the second column from the left end is not empty for both rows, such as $
c_i\paren{
    \begin{tikzpicture}[baseline=-0.5*\completevertical]
        \stringud{0}{\circled{+}}{\circled{}}
        \stringud{1}{\circled{}}{\circled{-}}
        \stringud{2}{$\cdots$}{$\cdots$}
    \end{tikzpicture}
}
=
0
$, which is stated in the following Lemma.
\begin{lemma}
    \label{lem:third}
    If $q\in \mathcal{C}_k$($k\geq 3$) satisfies $\{\Bar{a}_2, \Bar{b}_2\} \neq \{ \circled{}, \circled{}\}$, then $c_{i}(q)=0$ holds.
\end{lemma}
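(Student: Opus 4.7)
The plan is to exhibit a single $(k+1)$-support basis element $\tilde{q}(i)$ that $q$ produces and that has essentially no other source in $\bck{F_k^k, H_0}\big|_{k+1}$. By Lemmas~\ref{lem:first} and~\ref{lem:second}, both endpoint columns of $q$ carry exactly one non-identity entry, which must be $\circled{+}$ or $\circled{-}$. By the upper/lower row swap symmetry of the commutator rules, and by tracking signs under $\circled{+}\leftrightarrow\circled{-}$, it suffices to treat the case $\bar{a}_1 = \circled{+}$, $\bar{b}_1 = \circled{}$.

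With $\sigma \in \{\uparrow, \downarrow\}$ chosen to match the non-empty row of $q$'s column $k$, the commutator $\bck{q(i), h^\sigma_{i+k-1}}$ extends the support to the right, producing a $(k+1)$-support basis element $\tilde{q}(i)$ whose columns $1$ through $k-1$ coincide with those of $q$, whose column $k$ is empty, and whose column $k+1$ carries the single entry moved out of $q$'s column $k$; the factor is $\pm 1 \neq 0$. Since a single commutator raises the support by at most one, the only $k$-support configurations $p(j)$ whose commutator with a hopping density can produce $\tilde{q}(i)$ are (a) right-extensions from $p(i)$, and (b) left-extensions from $p(i+1)$ via $h^\tau_i$. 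Matching columns $k$ and $k+1$ forces $p = q$ in case (a).

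For case (b), the moved entry must appear as $(\circled{+}, \circled{})$ at site $i$, so $\tau = \uparrow$, and only $\bar{a}^p_1 \in \{\circled{+}, \circled{z}\}$ can produce $\circled{+}$ at site $i$ upon commutation with $h^\uparrow_i$. If $\bar{a}^p_1 = \circled{z}$, Lemma~\ref{lem:second} gives $c_{i+1}(p) = 0$. If $\bar{a}^p_1 = \circled{+}$, the post-action column 1 of $p$ equals $(\circled{}, \bar{b}^p_1)$, and matching this to $\tilde{q}$'s column 2 $= (\bar{a}_2, \bar{b}_2)$ forces $\bar{a}_2 = \circled{}$ and $\bar{b}^p_1 = \bar{b}_2$. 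The hypothesis $\{\bar{a}_2, \bar{b}_2\} \neq \{\circled{}, \circled{}\}$ then forces $\bar{b}_2 \neq \circled{}$, so $p$'s column 1 has both rows non-identity and Lemma~\ref{lem:first} gives $c_{i+1}(p) = 0$. Every case (b) contribution vanishes, so the cancellation equation $d_i(\tilde{q}) = 0$ collapses to $\pm c_i(q) = 0$, giving $c_i(q) = 0$.

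The main point to watch out for is the choice of extension direction: a left-extension of $q$ would generate a $\tilde{q}$ with empty column 2 (since $\bar{b}_1 = \circled{}$), and the competing configuration $p$ in the corresponding case (b) would have column 1 $= (\circled{+}, \circled{})$, which escapes Lemmas~\ref{lem:first} and~\ref{lem:second}. Extending to the right instead places the hypothesis $\{\bar{a}_2, \bar{b}_2\} \neq \{\circled{}, \circled{}\}$ precisely against the column that determines whether a non-vanishing competing contribution exists, and that is what closes the argument.
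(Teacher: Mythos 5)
Your proof is correct and follows essentially the same route as the paper: both arguments consider the $(k+1)$-support element obtained by moving the single entry of $q$'s last column one site to the right, identify $q(i)$ itself and a left-extension of some $p(i+1)$ as the only possible sources, and kill the competing source by applying Lemma~\ref{lem:first} (when $\Bar{a}^p_1=\circled{+}$, forcing a doubly-occupied first column via $\Bar{b}_2\neq\circled{}$) or Lemma~\ref{lem:second} (when $\Bar{a}^p_1=\circled{z}$). The only difference is organizational — you enumerate by source type in a single pass, while the paper enumerates the four values of $\Bar{a}_2$ and then swaps rows — and your closing remark about why the right-extension (rather than left-extension) is the productive choice is accurate.
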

\begin{proof}
If $\circled{z}\in\{\Bar{a}_1,\Bar{b}_1, \Bar{a}_k,\Bar{b}_k\}$ or $\circled{} \notin \bce{\Bar{a}_1, \Bar{b}_1}$ or $\circled{} \notin \bce{\Bar{a}_k, \Bar{b}_k}$, we can see $c_i(q)=0$ holds from Lemma~\ref{lem:first} and~\ref{lem:second}.
Thus, the nontrivial cases are
 $
\begin{tikzpicture}[baseline=-0.5*\completevertical]
    \stringud{0}{$\Bar{a}_1$}{$\Bar{b}_1$}
\end{tikzpicture}
, 
\begin{tikzpicture}[baseline=-0.5*\completevertical]
    \stringud{0}{$\Bar{a}_{k}$}{$\Bar{b}_{k}$}
\end{tikzpicture}
\in
\bce{
    \begin{tikzpicture}[baseline=-0.5*\completevertical]
        \stringud{0}{$\circled{+}$}{$\circled{}$}
    \end{tikzpicture}
    ,
    \begin{tikzpicture}[baseline=-0.5*\completevertical]
        \stringud{0}{$\circled{}$}{$\circled{+}$}
    \end{tikzpicture}
    ,
    \begin{tikzpicture}[baseline=-0.5*\completevertical]
        \stringud{0}{$\circled{-}$}{$\circled{}$}
    \end{tikzpicture}
    ,
    \begin{tikzpicture}[baseline=-0.5*\completevertical]
        \stringud{0}{$\circled{}$}{$\circled{-}$}
    \end{tikzpicture}
}
$
.
We consider the cases of
$
\begin{tikzpicture}[baseline=-0.5*\completevertical]
    \stringud{0}{$\Bar{a}_1$}{$\Bar{b}_1$}
\end{tikzpicture}
=
\begin{tikzpicture}[baseline=-0.5*\completevertical]
    \stringud{0}{$\circled{+}$}{$\circled{}$}
\end{tikzpicture}
,\quad 
\begin{tikzpicture}[baseline=-0.5*\completevertical]
    \stringud{0}{$\Bar{a}_k$}{$\Bar{b}_k$}
\end{tikzpicture}
=
\begin{tikzpicture}[baseline=-0.5*\completevertical]
    \stringud{0}{$\circled{+}$}{$\circled{}$}
\end{tikzpicture}
$ first.
If $q\in \mathcal{C}_k$ satisfies $\Bar{b}_2 \neq \circled{}$, we have the following cancellations of the $(k+1)$-support basis elements~$\widetilde{q}(i)$:
\begin{align}
    \begin{tikzpicture}[baseline=(current bounding box.center)]
        \node[draw, rectangle, rounded corners=2mm, scale=0.75] (B) at (2,2) {
            \begin{tikzpicture}[baseline=-0.5*\completevertical]
                \stringud{-1}{\circled{+}}{\circled{}}
                \stringud{0}{\circled{}}{$\Bar{b}_2$}
                \stringud{1}{$\cdots$}{$\cdots$}
                \stringud{2.25}{$\Bar{a}_{k-1}$}{$\Bar{b}_{k-1}$}
                \stringud{3.5}{\circled{}}{\circled{}}
                \stringud{4.5}{\circled{+}}{\circled{}}
            \end{tikzpicture}
        };
        \node[draw, rectangle, rounded corners=2mm, scale=0.75] (A1) at (0,0) {
            \begin{tikzpicture}[baseline=-0.5*\completevertical, rounded corners=0mm]
                \stringud{-1}{\circleddotted{}}{\circleddotted{}}
                \stringud{0}{\circled{+}}{$\Bar{b}_2$}
                \stringud{1}{$\cdots$}{$\cdots$}
                \stringud{2.25}{$\Bar{a}_{k-1}$}{$\Bar{b}_{k-1}$}
                \stringud{3.5}{\circled{}}{\circled{}}
                \stringud{4.5}{\circled{+}}{\circled{}}
                \commutatorhopudot{-1}{0}
            \end{tikzpicture}
        };
        \node[draw, rectangle, rounded corners=2mm, scale=0.75] (A2) at (4,0) {
            \begin{tikzpicture}[baseline=-0.5*\completevertical, rounded corners=0mm]
                \stringud{-1}{\circled{+}}{\circled{}}
                \stringud{0}{\circled{}}{$\Bar{b}_2$}
                \stringud{1}{$\cdots$}{$\cdots$}
                \stringud{2.25}{$\Bar{a}_{k-1}$}{$\Bar{b}_{k-1}$}
                \stringud{3.5}{\circled{+}}{\circled{}}
                \stringud{4.5}{\circleddotted{}}{\circleddotted{}}
                \commutatorhopudot{3.5}{4.5}
            \end{tikzpicture}
        };
        \draw[->,thick] (A1) -- (B);
        \draw[->,thick] (A2) -- (B);
        \node[above=0em  of B, scale=0.75] {$\widetilde{q}(i)$};
        \node[above=2em  of B, scale=1] {Case 1: $\Bar{a}_2 = \circled{}$};
        \node[above=0em  of A1, scale=0.75] {$q^\prime(i+1)$};
        \node[above=0em  of A2, scale=0.75] {$q(i)$};
    \end{tikzpicture}
    ,
    \quad
    \begin{tikzpicture}[baseline=(current bounding box.center)]
        \node[draw, rectangle, rounded corners=2mm, scale=0.75] (B) at (0,2) {
            \begin{tikzpicture}[baseline=-0.5*\completevertical]
                \stringud{-1}{\circled{+}}{\circled{}}
                \stringud{0}{\circled{z}}{$\Bar{b}_2$}
                \stringud{1}{$\cdots$}{$\cdots$}
                \stringud{2.25}{$\Bar{a}_{k-1}$}{$\Bar{b}_{k-1}$}
                \stringud{3.5}{\circled{}}{\circled{}}
                \stringud{4.5}{\circled{+}}{\circled{}}
            \end{tikzpicture}
        };
        \node[draw, rectangle, rounded corners=2mm, scale=0.75] (A2) at (2,0) {
            \begin{tikzpicture}[baseline=-0.5*\completevertical, rounded corners=0mm]
                \stringud{-1}{\circled{+}}{\circled{}}
                \stringud{0}{\circled{z}}{$\Bar{b}_2$}
                \stringud{1}{$\cdots$}{$\cdots$}
                \stringud{2.25}{$\Bar{a}_{k-1}$}{$\Bar{b}_{k-1}$}
                \stringud{3.5}{\circled{+}}{\circled{}}
                \stringud{4.5}{\circleddotted{}}{\circleddotted{}}
                \commutatorhopudot{3.5}{4.5}
            \end{tikzpicture}
        };
        \draw[->,thick] (A2) -- (B);
        \node[above=0em  of B, scale=0.75] {$\widetilde{q}(i)$};
        \node[above=2em  of B, scale=1] {Case 2: $\Bar{a}_2 = \circled{z}$};
        \node[above=0em  of A2, scale=0.75] {$q(i)$};
    \end{tikzpicture}
    \\[25pt]
    \begin{tikzpicture}[baseline=(current bounding box.center)]
        \node[draw, rectangle, rounded corners=2mm, scale=0.75] (B) at (2,2) {
            \begin{tikzpicture}[baseline=-0.5*\completevertical]
                \stringud{-1}{\circled{+}}{\circled{}}
                \stringud{0}{\circled{-}}{$\Bar{b}_2$}
                \stringud{1}{$\cdots$}{$\cdots$}
                \stringud{2.25}{$\Bar{a}_{k-1}$}{$\Bar{b}_{k-1}$}
                \stringud{3.5}{\circled{}}{\circled{}}
                \stringud{4.5}{\circled{+}}{\circled{}}
            \end{tikzpicture}
        };
        \node[draw, rectangle, rounded corners=2mm, scale=0.75] (A1) at (0,0) {
            \begin{tikzpicture}[baseline=-0.5*\completevertical, rounded corners=0mm]
                \stringud{-1}{\circleddotted{}}{\circleddotted{}}
                \stringud{0}{\circled{z}}{$\Bar{b}_2$}
                \stringud{1}{$\cdots$}{$\cdots$}
                \stringud{2.25}{$\Bar{a}_{k-1}$}{$\Bar{b}_{k-1}$}
                \stringud{3.5}{\circled{}}{\circled{}}
                \stringud{4.5}{\circled{+}}{\circled{}}
                \commutatorhopudot{-1}{0}
            \end{tikzpicture}
        };
        \node[draw, rectangle, rounded corners=2mm, scale=0.75] (A2) at (4,0) {
            \begin{tikzpicture}[baseline=-0.5*\completevertical, rounded corners=0mm]
                \stringud{-1}{\circled{+}}{\circled{}}
                \stringud{0}{\circled{-}}{$\Bar{b}_2$}
                \stringud{1}{$\cdots$}{$\cdots$}
                \stringud{2.25}{$\Bar{a}_{k-1}$}{$\Bar{b}_{k-1}$}
                \stringud{3.5}{\circled{+}}{\circled{}}
                \stringud{4.5}{\circleddotted{}}{\circleddotted{}}
                \commutatorhopudot{3.5}{4.5}
            \end{tikzpicture}
        };
        \draw[->,thick] (A1) -- (B);
        \draw[->,thick] (A2) -- (B);
        \node[above=0em  of B, scale=0.75] {$\widetilde{q}(i)$};
        \node[above=2em  of B, scale=1] {Case 3: $\Bar{a}_2 = \circled{-}$};
        \node[above=0em  of A1, scale=0.75] {$q^\prime(i+1)$};
        \node[above=0em  of A2, scale=0.75] {$q(i)$};
    \end{tikzpicture}
    ,
    \quad
    \begin{tikzpicture}[baseline=(current bounding box.center)]
        \node[draw, rectangle, rounded corners=2mm, scale=0.75] (B) at (0,2) {
            \begin{tikzpicture}[baseline=-0.5*\completevertical]
                \stringud{-1}{\circled{+}}{\circled{}}
                \stringud{0}{\circled{+}}{$\Bar{b}_2$}
                \stringud{1}{$\cdots$}{$\cdots$}
                \stringud{2.25}{$\Bar{a}_{k-1}$}{$\Bar{b}_{k-1}$}
                \stringud{3.5}{\circled{}}{\circled{}}
                \stringud{4.5}{\circled{+}}{\circled{}}
            \end{tikzpicture}
        };
        \node[draw, rectangle, rounded corners=2mm, scale=0.75] (A2) at (2,0) {
            \begin{tikzpicture}[baseline=-0.5*\completevertical, rounded corners=0mm]
                \stringud{-1}{\circled{+}}{\circled{}}
                \stringud{0}{\circled{+}}{$\Bar{b}_2$}
                \stringud{1}{$\cdots$}{$\cdots$}
                \stringud{2.25}{$\Bar{a}_{k-1}$}{$\Bar{b}_{k-1}$}
                \stringud{3.5}{\circled{+}}{\circled{}}
                \stringud{4.5}{\circleddotted{}}{\circleddotted{}}
                \commutatorhopudot{3.5}{4.5}
            \end{tikzpicture}
        };
        \draw[->,thick] (A2) -- (B);
        \node[above=0em  of B, scale=0.75] {$\widetilde{q}(i)$};
        \node[above=2em  of B, scale=1] {Case 4: $\Bar{a}_2 = \circled{+}$};
        \node[above=0em  of A2, scale=0.75] {$q(i)$};
    \end{tikzpicture}
\end{align}
and for Case 2 and Case 4, we have $c_i(q) = 0$ from~\eqref{eq:graphical-cancellation-one-m}.
For Case 1 and Case 3, we have $c_i(q) \propto c_{i+1}(q^\prime)=0$ from~\eqref{eq:graphical-cancellation-two-m} and Lemma~\ref{lem:first}.

We can also do the same argument for the other choices of 
$
\begin{tikzpicture}[baseline=-0.5*\completevertical]
    \stringud{0}{$\Bar{a}_1$}{$\Bar{b}_1$}
\end{tikzpicture}
$
and
$
\begin{tikzpicture}[baseline=-0.5*\completevertical]
    \stringud{0}{$\Bar{a}_k$}{$\Bar{b}_k$}
\end{tikzpicture}
$.
Then we have proved  $c_i(q) = 0$ if $\Bar{b}_2\neq \circled{}$.
Given the same argument holds when the roles of $\Bar{a}$ and $\Bar{b}$ are interchanged, we have also proved $c_i(q) = 0$ if $\Bar{a}_2\neq \circled{}$.
\end{proof}

We can prove the coefficients are zero if a middle column is not empty for both rows, such as $
c_i\paren{
    \begin{tikzpicture}[baseline=-0.5*\completevertical]
        \stringud{0}{\circled{+}}{\circled{}}
        \stringud{1}{$\cdots$}{$\cdots$}
        \stringud{2}{\circled{}}{\circled{-}}
        \stringud{3}{$\cdots$}{$\cdots$}
        \stringud{4}{\circled{-}}{\circled{}}
    \end{tikzpicture}
}
=
0
$, which is stated in the following Lemma.
\begin{lemma}
    \label{lem:forth}
    If there exists an integer \( l \) with \( 2 \leq l \leq k-1 \) such that $\bce{\Bar{a}_l, \Bar{b}_l} \neq \bce{\circled{}, \circled{}}$ for $q \in \mathcal{C}_k$, then \( c_{i}(q) = 0 \) holds.
\end{lemma}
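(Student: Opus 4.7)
The plan is to prove Lemma~\ref{lem:forth} by induction on $l$, the smallest index in $\{2, \ldots, k-1\}$ for which $\{\Bar{a}_l, \Bar{b}_l\} \neq \{\circled{}, \circled{}\}$. The base case $l = 2$ is exactly Lemma~\ref{lem:third}. For the inductive step ($l \geq 3$, which forces $k \geq 4$), I first use Lemmas~\ref{lem:first}, \ref{lem:second}, and the inductive hypothesis to restrict to $q \in \mathcal{C}_k$ whose end columns $\{\Bar{a}_1, \Bar{b}_1\}$ and $\{\Bar{a}_k, \Bar{b}_k\}$ each contain exactly one $\circled{\pm}$ and one $\circled{}$, columns $2, \ldots, l-1$ are all empty, and column $l$ is non-empty. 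By the $\Bar{a} \leftrightarrow \Bar{b}$ and sign symmetries it suffices to treat the case $\Bar{a}_1 = \circled{+}$, $\Bar{b}_1 = \circled{}$, $\Bar{a}_k \in \{\circled{+}, \circled{-}\}$, $\Bar{b}_k = \circled{}$.

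The key step is to examine the cancellation of the $(k+1)$-support basis element $\widetilde{q}(i)$ generated by the right-end commutator $\bck{q(i), h_{i+k-1}^{\uparrow}}$: its columns $1, \ldots, k-1$ agree with those of $q$, column $k$ is $\circled{}, \circled{}$, and column $k+1$ is $\Bar{a}_k, \circled{}$. Since only boundary hoppings can raise the support from $k$ to $k+1$, exactly two operators in $F_k^{k}$ contribute to $\widetilde{q}(i)$: $q(i)$ itself (via $h_{i+k-1}^{\uparrow}$) and a uniquely determined $q'(i+1)$ (via $h_{i}^{\uparrow}$). Matching $\widetilde{q}$ column by column fixes $q'$ uniquely, with column $1$ equal to $\circled{+}, \circled{}$, empty columns $2, \ldots, l-2$, column $l-1$ equal to $\Bar{a}_l, \Bar{b}_l$, columns $l, \ldots, k-2$ equal to $q$'s columns $l+1, \ldots, k-1$, empty column $k-1$, and column $k$ equal to $\Bar{a}_k, \circled{}$. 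The leftmost non-empty middle column of $q'$ is at position $l - 1$, so by Lemma~\ref{lem:third} (if $l = 3$) or by the inductive hypothesis (if $l \geq 4$), $c_{i+1}(q') = 0$. The two-source cancellation diagram~\eqref{eq:graphical-cancellation-two-m} then gives $c_i(q) \propto c_{i+1}(q') = 0$, and hence $c_i(q) = 0$.

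The main obstacle is verifying the uniqueness of $q'(i+1)$ and the absence of further contributors to $\widetilde{q}(i)$ in $\bck{F_k^{k}, H_0}\big|_{k+1}$. This reduces to a direct column-matching argument: any candidate source must originate either from a right-end boundary hopping applied to a $k$-support operator starting at site $i$, or from a left-end boundary hopping applied to one starting at site $i+1$, and the hopping rules~\eqref{eq:commu-plus-b-up}--\eqref{eq:commu-plus-b-up-end} force the source configuration to coincide with either $q$ or $q'$; extension via $h^{\downarrow}$ is ruled out because $\Bar{b}_k = \circled{}$ makes the required commutator vanish. The condition $k < \floor{\frac{L}{2}}$ ensures that no wraparound contribution appears from the periodic boundary. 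The remaining subcases for $\Bar{a}_k \in \{\circled{+}, \circled{-}\}$ and for the possible entries in column $l$ (including $\circled{z}$ or the doubly non-empty patterns) all follow the same diagrammatic pattern used in the proof of Lemma~\ref{lem:third}.
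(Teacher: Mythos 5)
Your proposal is correct and takes essentially the same approach as the paper: the single two-source cancellation you set up (relating $c_i(q)$ to $c_{i+1}(q')$, where $q'$ has its non-empty middle column shifted to position $l-1$ and an extra empty column inserted before the right end) is exactly the elementary step of the paper's proof, which simply writes out the resulting chain $c_i(q) \propto c_{i+1}(q_1) \propto \cdots \propto c_{i+l-2}(q_{l-2})$ explicitly and terminates it with Lemma~\ref{lem:third}. Your induction on $l$ unrolls to precisely that chain, and your uniqueness/no-other-contributors discussion matches the paper's implicit use of the boundary-hopping rules and the $k<\floor{\frac{L}{2}}$ assumption.
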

\begin{proof}
If $\circled{z}\in\{\Bar{a}_1,\Bar{b}_1, \Bar{a}_k,\Bar{b}_k\}$ or $\circled{} \notin \bce{\Bar{a}_1, \Bar{b}_1}$ or $\circled{} \notin \bce{\Bar{a}_k, \Bar{b}_k}$, we can see $c_i(q)=0$ holds from Lemma~\ref{lem:first} and~\ref{lem:second}.
Thus, the nontrivial cases are
 $
\begin{tikzpicture}[baseline=-0.5*\completevertical]
    \stringud{0}{$\Bar{a}_1$}{$\Bar{b}_1$}
\end{tikzpicture}
, 
\begin{tikzpicture}[baseline=-0.5*\completevertical]
    \stringud{0}{$\Bar{a}_{k}$}{$\Bar{b}_{k}$}
\end{tikzpicture}
\in
\bce{
    \begin{tikzpicture}[baseline=-0.5*\completevertical]
        \stringud{0}{$\circled{+}$}{$\circled{}$}
    \end{tikzpicture}
    ,
    \begin{tikzpicture}[baseline=-0.5*\completevertical]
        \stringud{0}{$\circled{}$}{$\circled{+}$}
    \end{tikzpicture}
    ,
    \begin{tikzpicture}[baseline=-0.5*\completevertical]
        \stringud{0}{$\circled{-}$}{$\circled{}$}
    \end{tikzpicture}
    ,
    \begin{tikzpicture}[baseline=-0.5*\completevertical]
        \stringud{0}{$\circled{}$}{$\circled{-}$}
    \end{tikzpicture}
}
$
.
We first consider the cases of
$
\begin{tikzpicture}[baseline=-0.5*\completevertical]
    \stringud{0}{$\Bar{a}_1$}{$\Bar{b}_1$}
\end{tikzpicture}
=
\begin{tikzpicture}[baseline=-0.5*\completevertical]
    \stringud{0}{$\circled{+}$}{$\circled{}$}
\end{tikzpicture}
,\quad 
\begin{tikzpicture}[baseline=-0.5*\completevertical]
    \stringud{0}{$\Bar{a}_k$}{$\Bar{b}_k$}
\end{tikzpicture}
=
\begin{tikzpicture}[baseline=-0.5*\completevertical]
    \stringud{0}{$\circled{+}$}{$\circled{}$}
\end{tikzpicture}
$.
The $l=2$ case corresponds to the Lemma~\ref{lem:third}, and thus we consider the $l>2$ case below.
If there exists $l(2< l \leq k-1)$ such that $\bce{\Bar{a}_l, \Bar{b}_l} \neq \bce{\circled{}, \circled{}}$ and $\Bar{a}_m, \Bar{b}_m = \circled{}$ for $(2\leq m\leq l-1)$, we have the following  cancellations of $(k+1)$-support basis element $\widetilde{q}_m(i+m-1)\ (1\leq m \leq l-2)$:
\begin{align}
    \begin{tikzpicture}[baseline=(current bounding box.center), x=1.25em, y=3.em]
        \node[draw, rectangle, rounded corners=2mm, scale=0.7, anchor=east] (A1) at (0,0) {
            \begin{tikzpicture}[baseline=-0.5*\completevertical, rounded corners=0mm]
                \stringud{0}{\circled{+}}{\circled{}}
                \stringud{1}{\circled{}}{\circled{}}
                \stringud{2}{$\cdots$}{$\cdots$}
                \stringud{3}{\circled{}}{\circled{}}
                \stringud{4}{$\Bar{a}_{l}$}{$\Bar{b}_{l}$}
                \stringud{5}{$\cdots$}{$\cdots$}
                \stringud{6.25}{$\Bar{a}_{k-1}$}{$\Bar{b}_{k-1}$}
                \stringud{7.5}{\circled{+}}{\circled{}}
                \stringud{8.5}{\circleddotted{}}{\circleddotted{}}
                \commutatorhopudot{7.5}{8.5}
                \ubraceplusstring{1}{4}{$l-2$}{0}
            \end{tikzpicture}
        };
         \node[draw, rectangle, rounded corners=2mm, scale=0.7, anchor=west] (B1) at (4,-1) {
            \begin{tikzpicture}[baseline=-0.5*\completevertical, rounded corners=0mm]
                \stringud{-1}{\circled{+}}{\circled{}}
                \stringud{0}{\circled{}}{\circled{}}
                \stringud{1}{\circled{}}{\circled{}}
                \stringud{2}{$\cdots$}{$\cdots$}
                \stringud{3}{\circled{}}{\circled{}}
                \stringud{4}{$\Bar{a}_{l}$}{$\Bar{b}_{l}$}
                \stringud{5}{$\cdots$}{$\cdots$}
                \stringud{6.25}{$\Bar{a}_{k-1}$}{$\Bar{b}_{k-1}$}
                \stringud{7.5}{\circled{}}{\circled{}}
                \stringud{8.5}{\circled{+}}{\circled{}}
                \ubraceplusstring{0}{4}{$l-2$}{0}
            \end{tikzpicture}
        };
        \node[draw, rectangle, rounded corners=2mm, scale=0.7, anchor=east] (A2) at (0,-2) {
            \begin{tikzpicture}[baseline=-0.5*\completevertical, rounded corners=0mm]
                \commutatorhopudot{-1}{0}
                \stringud{-1}{\circleddotted{}}{\circleddotted{}}
                \stringud{0}{\circled{+}}{\circled{}}
                \stringud{1}{\circled{}}{\circled{}}
                \stringud{2}{$\cdots$}{$\cdots$}
                \stringud{3}{\circled{}}{\circled{}}
                \stringud{4}{$\Bar{a}_{l}$}{$\Bar{b}_{l}$}
                \stringud{5}{$\cdots$}{$\cdots$}
                \stringud{6.25}{$\Bar{a}_{k-1}$}{$\Bar{b}_{k-1}$}
                \stringud{7.5}{\circled{}}{\circled{}}
                \stringud{8.5}{\circled{+}}{\circled{}}
                \stringud{9.5}{\circleddotted{}}{\circleddotted{}}
                \commutatorhopudot{8.5}{9.5}
                \ubraceplusstring{1}{4}{$l-3$}{0}
            \end{tikzpicture}
        };
        \node[draw, rectangle, rounded corners=2mm, scale=0.7, anchor=west] (B2) at (4,-3) {
            \begin{tikzpicture}[baseline=-0.5*\completevertical, rounded corners=0mm]
                \stringud{0}{\circled{+}}{\circled{}}
                \stringud{1}{\circled{}}{\circled{}}
                \stringud{2}{$\cdots$}{$\cdots$}
                \stringud{3}{\circled{}}{\circled{}}
                \stringud{4}{$\Bar{a}_{l}$}{$\Bar{b}_{l}$}
                \stringud{5}{$\cdots$}{$\cdots$}
                \stringud{6.25}{$\Bar{a}_{k-1}$}{$\Bar{b}_{k-1}$}
                \stringud{7.5}{\circled{}}{\circled{}}
                \stringud{8.5}{\circled{}}{\circled{}}
                \stringud{9.5}{\circled{+}}{\circled{}}
                \ubraceplusstring{1}{4}{$l-3$}{0}
            \end{tikzpicture}
        };
        \node (A3a) at (0,-4) {};
        \node (A3b) at (0,-5) {};
        \node[draw, rectangle, rounded corners=2mm, scale=0.7, anchor=west] (B3) at (4,-6) {
            \begin{tikzpicture}[baseline=-0.5*\completevertical, rounded corners=0mm]
                \stringud{-1}{\circled{+}}{\circled{}}
                \stringud{0}{\circled{}}{\circled{}}
                \stringud{1}{\circled{}}{\circled{}}
                \stringud{2}{$\cdots$}{$\cdots$}
                \stringud{3}{\circled{}}{\circled{}}
                \stringud{4}{$\Bar{a}_{l}$}{$\Bar{b}_{l}$}
                \stringud{5}{$\cdots$}{$\cdots$}
                \stringud{6.25}{$\Bar{a}_{k-1}$}{$\Bar{b}_{k-1}$}
                \stringud{7.5}{\circled{}}{\circled{}}
                \stringud{8.5}{$\cdots$}{$\cdots$}
                \stringud{9.5}{\circled{}}{\circled{}}
                \stringud{10.5}{\circled{+}}{\circled{}}
                \ubraceplusstring{0}{4}{$l-1-n$}{0}
                \ubraceplusstring{7.5}{10.5}{$n$}{0}
            \end{tikzpicture}
        };
        \node[draw, rectangle, rounded corners=2mm, scale=0.7, anchor=east] (A4) at (0,-7) {
            \begin{tikzpicture}[baseline=-0.5*\completevertical, rounded corners=0mm]
                \commutatorhopudot{-1}{0}
                \stringud{-1}{\circleddotted{}}{\circleddotted{}}
                \stringud{0}{\circled{+}}{\circled{}}
                \stringud{1}{\circled{}}{\circled{}}
                \stringud{2}{$\cdots$}{$\cdots$}
                \stringud{3}{\circled{}}{\circled{}}
                \stringud{4}{$\Bar{a}_{l}$}{$\Bar{b}_{l}$}
                \stringud{5}{$\cdots$}{$\cdots$}
                \stringud{6.25}{$\Bar{a}_{k-1}$}{$\Bar{b}_{k-1}$}
                \stringud{7.5}{\circled{}}{\circled{}}
                \stringud{8.5}{$\cdots$}{$\cdots$}
                \stringud{9.5}{\circled{}}{\circled{}}
                \stringud{10.5}{\circled{+}}{\circled{}}
                \stringud{11.5}{\circleddotted{}}{\circleddotted{}}
                \commutatorhopudot{10.5}{11.5}
                \ubraceplusstring{1}{4}{$l-2-n$}{0}
                \ubraceplusstring{7.5}{10.5}{$n$}{0}
            \end{tikzpicture}
        };
        \node[draw, rectangle, rounded corners=2mm, scale=0.7, anchor=west] (B4) at (4,-8) {
            \begin{tikzpicture}[baseline=-0.5*\completevertical, rounded corners=0mm]
                \stringud{0}{\circled{+}}{\circled{}}
                \stringud{1}{\circled{}}{\circled{}}
                \stringud{2}{$\cdots$}{$\cdots$}
                \stringud{3}{\circled{}}{\circled{}}
                \stringud{4}{$\Bar{a}_{l}$}{$\Bar{b}_{l}$}
                \stringud{5}{$\cdots$}{$\cdots$}
                \stringud{6.25}{$\Bar{a}_{k-1}$}{$\Bar{b}_{k-1}$}
                \stringud{7.5}{\circled{}}{\circled{}}
                \stringud{8.5}{$\cdots$}{$\cdots$}
                \stringud{9.5}{\circled{}}{\circled{}}
                \stringud{10.5}{\circled{}}{\circled{}}
                \stringud{11.5}{\circled{+}}{\circled{}}
                \ubraceplusstring{1}{4}{$l-2-n$}{0}
                \ubraceplusstring{7.5}{11.5}{$n+1$}{0}
            \end{tikzpicture}
        };
        \node (A5a) at (0,-9) {};
        \node (A5b) at (0,-10) {};
        \node[draw, rectangle, rounded corners=2mm, scale=0.7, anchor=west] (B5) at (4,-11) {
            \begin{tikzpicture}[baseline=-0.5*\completevertical, rounded corners=0mm]
                \stringud{2}{\circled{+}}{\circled{}}
                \stringud{3}{\circled{}}{\circled{}}
                \stringud{4}{$\Bar{a}_{l}$}{$\Bar{b}_{l}$}
                \stringud{5}{$\cdots$}{$\cdots$}
                \stringud{6.25}{$\Bar{a}_{k-1}$}{$\Bar{b}_{k-1}$}
                \stringud{7.5}{\circled{}}{\circled{}}
                \stringud{8.5}{$\cdots$}{$\cdots$}
                \stringud{9.5}{\circled{}}{\circled{}}
                \stringud{10.5}{\circled{+}}{\circled{}}
                \ubraceplusstring{7.5}{10.5}{$l-2$}{0}
            \end{tikzpicture}
        };
        \node[draw, rectangle, rounded corners=2mm, scale=0.7, anchor=east] (A6) at (0,-12) {
            \begin{tikzpicture}[baseline=-0.5*\completevertical, rounded corners=0mm]
                \commutatorhopudot{2}{3}
                \stringud{2}{\circleddotted{}}{\circleddotted{}}
                \stringud{3}{\circled{+}}{\circled{}}
                \stringud{4}{$\Bar{a}_{l}$}{$\Bar{b}_{l}$}
                \stringud{5}{$\cdots$}{$\cdots$}
                \stringud{6.25}{$\Bar{a}_{k-1}$}{$\Bar{b}_{k-1}$}
                \stringud{7.5}{\circled{}}{\circled{}}
                \stringud{8.5}{$\cdots$}{$\cdots$}
                \stringud{9.5}{\circled{}}{\circled{}}
                \stringud{10.5}{\circled{+}}{\circled{}}
                \ubraceplusstring{7.5}{10.5}{$l-2$}{0}
            \end{tikzpicture}
        };
        \draw[->,thick] (A1) -- (B1);
        \draw[->,thick] (A2) -- (B1);
        \draw[->,thick] (A2) -- (B2);
        \draw[->,thick] (A4) -- (B3);
        \draw[->,thick] (A4) -- (B4);
        \draw[dotted,thick] (A3a) -- (A3b);
        \draw[dotted,thick] (A5a) -- (A5b);
        \draw[dotted,thick] ([xshift=6em, yshift=3.5em]B3.west) -- ([xshift=6em, yshift=-2.5em]B2.west);
        \draw[dotted,thick] ([xshift=6em, yshift=3.5em]B5.west) -- ([xshift=6em, yshift=-2.5em]B4.west);
        \draw[->,thick] (A3a) -- (B2);
        \draw[->,thick] (A3b) -- (B3);
        \draw[->,thick] (A5a) -- (B4);
        \draw[->,thick] (A5b) -- (B5);
        \draw[->,thick] (A6) -- (B5);
        \node[above=0em  of A6, scale=0.75] {$q_{l-2}(i+l-2)$};
        \node[above=0em of B5, scale=0.75] {$\widetilde{q}_{l-2}(i+l-3)$};
        \node[above=0em of B4, scale=0.75] {$\widetilde{q}_{n+1}(i+n)$};
        \node[above=0em of A4, scale=0.75] {$q_n(i+n)$};
        \node[above=0em  of B3, scale=0.75] {$\widetilde{q}_{n}(i+n-1)$};
        \node[above=0em  of B2, scale=0.75] {$\widetilde{q}_2(i+1)$};
        \node[above=0em  of B1, scale=0.75] {$\widetilde{q}_1(i)$};
        \node[above=0em  of A1, scale=0.75] {$q(i)$};
        \node[above=0em  of A2, scale=0.75] {$q_1(i+1)$};
    \end{tikzpicture}
\end{align}
where the parts omitted with a dot have a similar structure, and we have $c_i(q) \propto c_{i+1}(q_1) \propto \cdots\propto c_{i+l-2}(q_{l-2})$ from~\eqref{eq:graphical-cancellation-two-m} and $c_{i+l-2}(q_{l-2}) = 0$ from Lemma~\ref{lem:third} because of $\bce{\Bar{a}_l, \Bar{b}_l} \neq \bce{\circled{}, \circled{}}$.
Thus we have $c_i(q) = 0$.

We can also follow the same argument for the other choices of 
$
\begin{tikzpicture}[baseline=-0.5*\completevertical]
    \stringud{0}{$\Bar{a}_1$}{$\Bar{b}_1$}
\end{tikzpicture}
$
and
$
\begin{tikzpicture}[baseline=-0.5*\completevertical]
    \stringud{0}{$\Bar{a}_k$}{$\Bar{b}_k$}
\end{tikzpicture}
$.
Then we have proved  Lemma~\ref{lem:forth}.
\end{proof}

From Lemmas~\ref{lem:first}--\ref{lem:forth}, we can see $c_{i}(q) = 0\ (q \in \mathcal{C}_k)$ unless $q$ is the following form:
\begin{align}
    \label{eq:non-zero-coeff-q}
    q
    =
    \begin{tikzpicture}[baseline=-0.5*\completevertical]
        \stringud{0}{$\Bar{a}_1$}{$\Bar{b}_1$}
        \circledcol{1}
        \stringud{2}{$\cdots$}{$\cdots$}
        \circledcol{3}
        \stringud{4}{$\Bar{a}_k$}{$\Bar{b}_k$}
        \ubraceplusstring{1}{4}{$k-2$}{0}
    \end{tikzpicture}
    ,
\end{align}
where 
$
\begin{tikzpicture}[baseline=-0.5*\completevertical]
    \stringud{0}{$\Bar{a}_1$}{$\Bar{b}_1$}
\end{tikzpicture}
, 
\begin{tikzpicture}[baseline=-0.5*\completevertical]
    \stringud{0}{$\Bar{a}_{k}$}{$\Bar{b}_{k}$}
\end{tikzpicture}
\in
\bce{
    \begin{tikzpicture}[baseline=-0.5*\completevertical]
        \stringud{0}{$\circled{\pm}$}{$\circled{}$}
    \end{tikzpicture}
    ,
    \begin{tikzpicture}[baseline=-0.5*\completevertical]
        \stringud{0}{$\circled{}$}{$\circled{\pm}$}
    \end{tikzpicture}
}
$.
We next study the coefficients of the basis elements of these configurations.

\begin{lemma}
    \label{lem:fifth}
    $
    c_{i}
    \biggl(
        \begin{tikzpicture}[baseline=-1.25*\completevertical]
            \node[scale=0.8]{
                \begin{tikzpicture}[baseline=0*\completevertical]
                    \stringud{-2}{\circled{\pm}}{\circled{}}
                    \stringud{-1}{\circled{}}{\circled{}}
                    \stringud{0}{$\cdots$}{$\cdots$}
                    \stringud{1}{\circled{}}{\circled{}}
                    \stringud{2}{\circled{\mp}}{\circled{}}
                    \ubraceplusstring{-1}{2}{$k-2$}{0}
                \end{tikzpicture}
            };
        \end{tikzpicture}
    \biggr)
    $ 
    ,
    $
    c_{i}
    \biggl(
        \begin{tikzpicture}[baseline=-1.25*\completevertical]
            \node[scale=0.8]{
                \begin{tikzpicture}[baseline=0*\completevertical]
                    \stringud{-2}{\circled{\pm}}{\circled{}}
                    \stringud{-1}{\circled{}}{\circled{}}
                    \stringud{0}{$\cdots$}{$\cdots$}
                    \stringud{1}{\circled{}}{\circled{}}
                    \stringud{2}{\circled{}}{\circled{\mp}}
                    \ubraceplusstring{-1}{2}{$k-2$}{0}
                \end{tikzpicture}
            };
        \end{tikzpicture}
    \biggr)
    $
    ,
    $
    c_{i}
    \biggl(
        \begin{tikzpicture}[baseline=-1.25*\completevertical]
            \node[scale=0.8]{
                \begin{tikzpicture}[baseline=0*\completevertical]
                    \stringud{-2}{\circled{}}{\circled{\pm}}
                    \stringud{-1}{\circled{}}{\circled{}}
                    \stringud{0}{$\cdots$}{$\cdots$}
                    \stringud{1}{\circled{}}{\circled{}}
                    \stringud{2}{\circled{}}{\circled{\mp}}
                    \ubraceplusstring{-1}{2}{$k-2$}{0}
                \end{tikzpicture}
            };
        \end{tikzpicture}
    \biggr)
    $
    and 
    $
    c_{i}
    \biggl(
        \begin{tikzpicture}[baseline=-1.25*\completevertical]
            \node[scale=0.8]{
                \begin{tikzpicture}[baseline=0*\completevertical]
                    \stringud{-2}{\circled{}}{\circled{\pm}}
                    \stringud{-1}{\circled{}}{\circled{}}
                    \stringud{0}{$\cdots$}{$\cdots$}
                    \stringud{1}{\circled{}}{\circled{}}
                    \stringud{2}{\circled{\mp}}{\circled{}}
                    \ubraceplusstring{-1}{2}{$k-2$}{0}
                \end{tikzpicture}
            };
        \end{tikzpicture}
    \biggr)
    $ are constants independent of~$i$.
\end{lemma}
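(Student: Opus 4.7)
The plan is to prove translation invariance of $c_i(q)$ for each of the four configurations by exhibiting, for each $q$, a single $(k+1)$-support basis element $\widetilde{q}(i)$ whose cancellation from~\eqref{eq:k+1-cancellation} couples exactly the two coefficients $c_i(q)$ and $c_{i+1}(q)$. Concretely, I would take $\widetilde{q}$ to be the configuration of length $k+1$ obtained by extending $q$ to the right by one empty column, so that its leftmost nonempty column (at site $i$) coincides with that of $q(i)$, the middle columns at $i+1,\ldots,i+k-1$ are empty, and its rightmost nonempty column is placed at site $i+k$.

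First I would enumerate all basis elements contributing to $d_i(\widetilde{q})$. Since $\bck{\cdot, H_0}$ raises the support by at most one and only through hops acting at the boundary of a basis element, any $k$-support operator $q'(j)$ producing $\widetilde{q}(i)$ via $\bck{q'(j), h^\sigma_v}$ must satisfy either $(j,v) = (i,\, i+k-1)$ or $(j,v) = (i+1,\, i)$. Matching the columns of $q'$ to those of $\widetilde{q}$ at the sites untouched by the hop, together with reading the identities~\eqref{eq:commu-plus-b-up}--\eqref{eq:commu-plus-b-up-end} in reverse at the hopped sites, forces $q' = q$ in both cases and pins $\sigma$ to the spin species carrying the nonzero entry at the relevant endpoint. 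Any alternative $q'$ with a $\circled{z}$ at an endpoint, or with a nonempty middle column, has coefficient zero already by Lemmas~\ref{lem:first}--\ref{lem:forth}, so it need not be tracked.

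Second I would compute the two prefactors using~\eqref{eq:commu-plus-b-up}--\eqref{eq:commu-plus-b-up-end}: a boundary hop that shifts a single creation or annihilation operator by one site contributes a factor equal to the sign $\pm$ of that operator. For every one of the four configurations in the statement, the left endpoint and right endpoint carry opposite signs ($\pm$ versus $\mp$), so the right-end hop on $q(i)$ and the left-end hop on $q(i+1)$ produce prefactors of opposite sign. The graphical rule~\eqref{eq:graphical-cancellation-two-m} then yields $c_i(q) \propto c_{i+1}(q)$ with proportionality constant $+1$, and iterating this around the periodic lattice gives the claimed $i$-independence of $c_i(q)$.

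The main obstacle is the exhaustiveness of the first step: I must carefully rule out every alternative quadruple $(q', j, v, \sigma)$ that could in principle contribute to $d_i(\widetilde{q})$, using the reverse reading of each identity in~\eqref{eq:commu-plus-b-up}--\eqref{eq:commu-plus-b-up-end} together with the zero-coefficient conclusions of Lemmas~\ref{lem:first}--\ref{lem:forth}. Once exhaustiveness is in place, the remaining sign bookkeeping is a short case check across the four endpoint types, and the conclusion follows immediately from~\eqref{eq:graphical-cancellation-two-m}.
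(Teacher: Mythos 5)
Your proposal is correct and follows essentially the same route as the paper: for each of the four configurations you use the cancellation of the $(k+1)$-support element obtained by separating the two end columns by $k-1$ empty columns, which receives exactly two contributions — the right-end hop of $q(i)$ with factor $\mp 1$ and the left-end hop of $q(i+1)$ with factor $\pm 1$ — so that~\eqref{eq:graphical-cancellation-two-m} gives $c_i(q)=c_{i+1}(q)$. Your extra care about exhaustively ruling out other contributing quadruples via Lemmas~\ref{lem:first}--\ref{lem:forth} and reverse-reading of~\eqref{eq:commu-plus-b-up}--\eqref{eq:commu-plus-b-up-end} is sound and merely makes explicit what the paper leaves implicit in its diagram.
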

\begin{proof}
We prove 
$
c_{i}
\biggl(
    \begin{tikzpicture}[baseline=-1.25*\completevertical]
        \node[scale=0.8]{
            \begin{tikzpicture}[baseline=0*\completevertical]
                \stringud{-2}{\circled{\pm}}{\circled{}}
                \stringud{-1}{\circled{}}{\circled{}}
                \stringud{0}{$\cdots$}{$\cdots$}
                \stringud{1}{\circled{}}{\circled{}}
                \stringud{2}{\circled{\mp}}{\circled{}}
                \ubraceplusstring{-1}{2}{$k-2$}{0}
            \end{tikzpicture}
        };
    \end{tikzpicture}
\biggr)
$ is independent of $i$.
We consider the following cancellation of the $(k+1)$-support basis element $\widetilde{q}(i)$:
\begin{align}
    \label{eq:cancellation_eg_pmmp}
    \begin{tikzpicture}[baseline=(current bounding box.center)]
        \node[draw, rectangle, rounded corners=2mm, scale=0.8] (B) at (3,3) {
            \begin{tikzpicture}[baseline=-0.5*\completevertical, rounded corners=0mm]
                \stringud{-2}{\circled{\pm}}{\circled{}}
                \stringud{-1}{\circled{}}{\circled{}}
                \stringud{0}{\circled{}}{\circled{}}
                \stringud{1}{$\cdots$}{$\cdots$}
                \stringud{2}{\circled{}}{\circled{}}
                \stringud{3}{\circled{}}{\circled{}}
                \stringud{4}{\circled{\mp}}{\circled{}}
                \ubraceplusstring{-1}{4}{$k-1$}{0}
            \end{tikzpicture}
        };
        \node[draw, rectangle, rounded corners=2mm, scale=0.8] (A1) at (0,0) {
            \begin{tikzpicture}[baseline=-0.5*\completevertical, rounded corners=0mm]
                \stringud{-3}{\circleddotted{}}{\circleddotted{}}
                \stringud{-2}{\circled{\pm}}{\circled{}}
                \stringud{-1}{\circled{}}{\circled{}}
                \stringud{0}{$\cdots$}{$\cdots$}
                \stringud{1}{\circled{}}{\circled{}}
                \stringud{2}{\circled{}}{\circled{}}
                \stringud{3}{\circled{\mp}}{\circled{}}
                \ubraceplusstring{-1}{3}{$k-2$}{0}
                \commutatorhopudot{-2}{-3}
            \end{tikzpicture}
        };
        \node[draw, rectangle, rounded corners=2mm, scale=0.8] (A2) at (6,0) {
            \begin{tikzpicture}[baseline=-0.5*\completevertical, rounded corners=0mm]
                \stringud{-3}{\circled{\pm}}{\circled{}}
                \stringud{-2}{\circled{}}{\circled{}}
                \stringud{-1}{\circled{}}{\circled{}}
                \stringud{0}{$\cdots$}{$\cdots$}
                \stringud{1}{\circled{}}{\circled{}}
                \stringud{2}{\circled{\mp}}{\circled{}}
                \stringud{3}{\circleddotted{}}{\circleddotted{}}
                \ubraceplusstring{-2}{2}{$k-2$}{0}
                \commutatorhopudot{2}{3}
            \end{tikzpicture}
        };
        \draw[->,thick] (A1) -- (B) node[midway, scale=0.7,above, xshift=-5pt] {$\pm1$};
        \draw[->,thick] (A2) -- (B) node[midway, scale=0.7,above, xshift=5pt] {$\mp1$};;
        \node[above=0em  of B, scale=0.8] {$\widetilde{q}(i)$};
        \node[above=0em  of A1, scale=0.8] {$q(i+1)$};
        \node[above=0em  of A2, scale=0.8] {$q(i)$};
    \end{tikzpicture}
\end{align}
and we have 
$
c_{i}
\biggl(
    \begin{tikzpicture}[baseline=-1.25*\completevertical]
        \node[scale=0.8]{
            \begin{tikzpicture}[baseline=0*\completevertical]
                \stringud{-2}{\circled{\pm}}{\circled{}}
                \stringud{-1}{\circled{}}{\circled{}}
                \stringud{0}{$\cdots$}{$\cdots$}
                \stringud{1}{\circled{}}{\circled{}}
                \stringud{2}{\circled{\mp}}{\circled{}}
                \ubraceplusstring{-1}{2}{$k-2$}{0}
            \end{tikzpicture}
        };
    \end{tikzpicture}
\biggr)
=
c_{i+1}
\biggl(
    \begin{tikzpicture}[baseline=-1.25*\completevertical]
        \node[scale=0.8]{
            \begin{tikzpicture}[baseline=0*\completevertical]
                \stringud{-2}{\circled{\pm}}{\circled{}}
                \stringud{-1}{\circled{}}{\circled{}}
                \stringud{0}{$\cdots$}{$\cdots$}
                \stringud{1}{\circled{}}{\circled{}}
                \stringud{2}{\circled{\mp}}{\circled{}}
                \ubraceplusstring{-1}{2}{$k-2$}{0}
            \end{tikzpicture}
        };
    \end{tikzpicture}
\biggr)
$. 
Thus, we have proved that 
$
c_{i}
\biggl(
    \begin{tikzpicture}[baseline=-1.25*\completevertical]
        \node[scale=0.8]{
            \begin{tikzpicture}[baseline=0*\completevertical]
                \stringud{-2}{\circled{\pm}}{\circled{}}
                \stringud{-1}{\circled{}}{\circled{}}
                \stringud{0}{$\cdots$}{$\cdots$}
                \stringud{1}{\circled{}}{\circled{}}
                \stringud{2}{\circled{\mp}}{\circled{}}
                \ubraceplusstring{-1}{2}{$k-2$}{0}
            \end{tikzpicture}
        };
    \end{tikzpicture}
\biggr)
$ is independent of $i$.
The other cases can also be proved in the same way.
\end{proof}

\begin{lemma}
    \label{lem:sixth}
    $
    c_{i}
    \biggl(
        \begin{tikzpicture}[baseline=-1.25*\completevertical]
            \node[scale=0.8]{
                \begin{tikzpicture}[baseline=0*\completevertical]
                    \stringud{-2}{\circled{\pm}}{\circled{}}
                    \stringud{-1}{\circled{}}{\circled{}}
                    \stringud{0}{$\cdots$}{$\cdots$}
                    \stringud{1}{\circled{}}{\circled{}}
                    \stringud{2}{\circled{\pm}}{\circled{}}
                    \ubraceplusstring{-1}{2}{$k-2$}{0}
                \end{tikzpicture}
            };
        \end{tikzpicture}
    \biggr)
    $ 
    ,
    $
    c_{i}
    \biggl(
        \begin{tikzpicture}[baseline=-1.25*\completevertical]
            \node[scale=0.8]{
                \begin{tikzpicture}[baseline=0*\completevertical]
                    \stringud{-2}{\circled{\pm}}{\circled{}}
                    \stringud{-1}{\circled{}}{\circled{}}
                    \stringud{0}{$\cdots$}{$\cdots$}
                    \stringud{1}{\circled{}}{\circled{}}
                    \stringud{2}{\circled{}}{\circled{\pm}}
                    \ubraceplusstring{-1}{2}{$k-2$}{0}
                \end{tikzpicture}
            };
        \end{tikzpicture}
    \biggr)
    $
    ,
    $
    c_{i}
    \biggl(
        \begin{tikzpicture}[baseline=-1.25*\completevertical]
            \node[scale=0.8]{
                \begin{tikzpicture}[baseline=0*\completevertical]
                    \stringud{-2}{\circled{}}{\circled{\pm}}
                    \stringud{-1}{\circled{}}{\circled{}}
                    \stringud{0}{$\cdots$}{$\cdots$}
                    \stringud{1}{\circled{}}{\circled{}}
                    \stringud{2}{\circled{}}{\circled{\pm}}
                    \ubraceplusstring{-1}{2}{$k-2$}{0}
                \end{tikzpicture}
            };
        \end{tikzpicture}
    \biggr)
    $
    and 
    $
    c_{i}
    \biggl(
        \begin{tikzpicture}[baseline=-1.25*\completevertical]
            \node[scale=0.8]{
                \begin{tikzpicture}[baseline=0*\completevertical]
                    \stringud{-2}{\circled{}}{\circled{\pm}}
                    \stringud{-1}{\circled{}}{\circled{}}
                    \stringud{0}{$\cdots$}{$\cdots$}
                    \stringud{1}{\circled{}}{\circled{}}
                    \stringud{2}{\circled{\pm}}{\circled{}}
                    \ubraceplusstring{-1}{2}{$k-2$}{0}
                \end{tikzpicture}
            };
        \end{tikzpicture}
    \biggr)
    $ are zero for the odd $L$ case and are $(\textit{some constant})\times(-1)^{i}$ for the even $L$ case.
\end{lemma}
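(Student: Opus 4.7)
The plan is to reuse the single-step cancellation of Lemma~\ref{lem:fifth} verbatim, the \emph{only} substantive change being the sign carried by the two contributing arrows. In Lemma~\ref{lem:fifth} the two endpoints carry opposite symbols $\pm$ and $\mp$, producing prefactors $\pm 1$ and $\mp 1$ whose ratio equals $-1$; this yielded the invariance $c_{i+1}(q) = c_i(q)$. In the present lemma both endpoints carry the \emph{same} $\pm$, so the two prefactors coincide, and the same cancellation equation instead gives the alternation $c_{i+1}(q) = -c_i(q)$.

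Concretely, for the first configuration in the statement I would pick the $(k+1)$-support target $\widetilde{q}(i)$ obtained from $q$ by inserting one additional empty column between the two endpoints. Mirroring diagram~\eqref{eq:cancellation_eg_pmmp}, the only nonzero contributions to $d_i(\widetilde{q})$ come from $q(i)$ via a right-end extension with $h^{\uparrow}_{i+k-1}$ and from $q(i+1)$ via a left-end extension with $h^{\uparrow}_{i}$, and by~\eqref{eq:commu-plus-b-up} both now carry the same prefactor $\pm 1$ (since both endpoints carry the same sign). The three remaining configurations in the statement are handled identically, exchanging $h^{\uparrow}$ for $h^{\downarrow}$ whenever the nonempty endpoint sits in the lower row and shaping $\widetilde{q}(i)$ accordingly.

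Iterating $c_{i+1}(q) = -c_i(q)$ yields $c_{i}(q) = (-1)^{i-1} c_1(q)$, and periodicity $c_{i+L}(q) = c_i(q)$ becomes $(-1)^L c_i(q) = c_i(q)$. For odd $L$ this forces $c_i(q)=0$ for every $i$, while for even $L$ the coefficients are of the form $(\text{const})\times(-1)^i$, which is exactly the assertion of the lemma.

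The step that requires the most care is the claim that only $q(i)$ and $q(i+1)$ feed into $d_i(\widetilde{q})$. Interaction commutators are ruled out since $[F_k, H_{\mathrm{int}}]$ preserves support and cannot reach a $(k+1)$-support operator; support-preserving or support-decreasing pieces of $[\,\cdot\,,H_0]$ are excluded for the same reason. Among the surviving sources listed in~\eqref{eq:non-zero-coeff-q}, matching both endpoints of $\widetilde{q}(i)$ in the upper row forces the source to be $q$ itself, shifted either to $i$ or to $i+1$; the hypothesis $k < \floor{L/2}$ ensures these two shifts are distinct and exhaust all contributions, so no additional terms corrupt the two-term cancellation.
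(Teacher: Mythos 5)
Your proposal is correct and follows essentially the same route as the paper: the same $(k+1)$-support target $\widetilde{q}(i)$ with the two equal-sign end-extension contributions from $q(i)$ and $q(i+1)$, yielding $c_{i}(q)=-c_{i+1}(q)$, and then the periodic boundary condition forcing $c_i(q)=0$ for odd $L$ and $(\text{const})\times(-1)^i$ for even $L$. Your added justification that only these two sources contribute (and that $k<\lfloor L/2\rfloor$ is what guarantees this) is consistent with, and slightly more explicit than, the paper's argument.
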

\begin{proof}
We give the proof for
$
c_{i}
\biggl(
    \begin{tikzpicture}[baseline=-1.25*\completevertical]
        \node[scale=0.8]{
            \begin{tikzpicture}[baseline=0*\completevertical]
                \stringud{-2}{\circled{\pm}}{\circled{}}
                \stringud{-1}{\circled{}}{\circled{}}
                \stringud{0}{$\cdots$}{$\cdots$}
                \stringud{1}{\circled{}}{\circled{}}
                \stringud{2}{\circled{\pm}}{\circled{}}
                \ubraceplusstring{-1}{2}{$k-2$}{0}
            \end{tikzpicture}
        };
    \end{tikzpicture}
\biggr)
$.
We consider the following cancellation of the $(k+1)$-support basis element~$\widetilde{q}(i)$:
\begin{align}
    \begin{tikzpicture}[baseline=(current bounding box.center)]
        \node[draw, rectangle, rounded corners=2mm, scale=0.8] (B) at (3,3) {
            \begin{tikzpicture}[baseline=-0.5*\completevertical, rounded corners=0mm]
                \stringud{-2}{\circled{\pm}}{\circled{}}
                \stringud{-1}{\circled{}}{\circled{}}
                \stringud{0}{\circled{}}{\circled{}}
                \stringud{1}{$\cdots$}{$\cdots$}
                \stringud{2}{\circled{}}{\circled{}}
                \stringud{3}{\circled{}}{\circled{}}
                \stringud{4}{\circled{\pm}}{\circled{}}
                \ubraceplusstring{-1}{4}{$k-1$}{0}
            \end{tikzpicture}
        };
        \node[draw, rectangle, rounded corners=2mm, scale=0.8] (A1) at (0,0) {
            \begin{tikzpicture}[baseline=-0.5*\completevertical, rounded corners=0mm]
                \stringud{-3}{\circleddotted{}}{\circleddotted{}}
                \stringud{-2}{\circled{\pm}}{\circled{}}
                \stringud{-1}{\circled{}}{\circled{}}
                \stringud{0}{$\cdots$}{$\cdots$}
                \stringud{1}{\circled{}}{\circled{}}
                \stringud{2}{\circled{}}{\circled{}}
                \stringud{3}{\circled{\pm}}{\circled{}}
                \ubraceplusstring{-1}{3}{$k-2$}{0}
                \commutatorhopudot{-2}{-3}
            \end{tikzpicture}
        };
        \node[draw, rectangle, rounded corners=2mm, scale=0.8] (A2) at (6,0) {
            \begin{tikzpicture}[baseline=-0.5*\completevertical, rounded corners=0mm]
                \stringud{-3}{\circled{\pm}}{\circled{}}
                \stringud{-2}{\circled{}}{\circled{}}
                \stringud{-1}{\circled{}}{\circled{}}
                \stringud{0}{$\cdots$}{$\cdots$}
                \stringud{1}{\circled{}}{\circled{}}
                \stringud{2}{\circled{\pm}}{\circled{}}
                \stringud{3}{\circleddotted{}}{\circleddotted{}}
                \ubraceplusstring{-2}{2}{$k-2$}{0}
                \commutatorhopudot{2}{3}
            \end{tikzpicture}
        };
        \draw[->,thick] (A1) -- (B) node[midway, scale=0.7,above, xshift=-5pt] {$\pm1$};
        \draw[->,thick] (A2) -- (B) node[midway, scale=0.7,above, xshift=5pt] {$\pm1$};
        \node[above=0em  of B, scale=0.8] {$\widetilde{q}(i)$};
        \node[above=0em  of A1, scale=0.8] {$q(i+1)$};
        \node[above=0em  of A2, scale=0.8] {$q(i)$};
    \end{tikzpicture}
\end{align}
and we have 
$
c_{i}
\biggl(
    \begin{tikzpicture}[baseline=-1.25*\completevertical]
        \node[scale=0.8]{
            \begin{tikzpicture}[baseline=0*\completevertical]
                \stringud{-2}{\circled{\pm}}{\circled{}}
                \stringud{-1}{\circled{}}{\circled{}}
                \stringud{0}{$\cdots$}{$\cdots$}
                \stringud{1}{\circled{}}{\circled{}}
                \stringud{2}{\circled{\pm}}{\circled{}}
                \ubraceplusstring{-1}{2}{$k-2$}{0}
            \end{tikzpicture}
        };
    \end{tikzpicture}
\biggr)
=
-
c_{i+1}
\biggl(
    \begin{tikzpicture}[baseline=-1.25*\completevertical]
        \node[scale=0.8]{
            \begin{tikzpicture}[baseline=0*\completevertical]
                \stringud{-2}{\circled{\pm}}{\circled{}}
                \stringud{-1}{\circled{}}{\circled{}}
                \stringud{0}{$\cdots$}{$\cdots$}
                \stringud{1}{\circled{}}{\circled{}}
                \stringud{2}{\circled{\pm}}{\circled{}}
                \ubraceplusstring{-1}{2}{$k-2$}{0}
            \end{tikzpicture}
        };
    \end{tikzpicture}
\biggr)
$. 
Because of the periodic boundary condition, we have
$
c_{i}
\biggl(
    \begin{tikzpicture}[baseline=-1.25*\completevertical]
        \node[scale=0.8]{
            \begin{tikzpicture}[baseline=0*\completevertical]
                \stringud{-2}{\circled{\pm}}{\circled{}}
                \stringud{-1}{\circled{}}{\circled{}}
                \stringud{0}{$\cdots$}{$\cdots$}
                \stringud{1}{\circled{}}{\circled{}}
                \stringud{2}{\circled{\pm}}{\circled{}}
                \ubraceplusstring{-1}{2}{$k-2$}{0}
            \end{tikzpicture}
        };
    \end{tikzpicture}
\biggr)
=
(-1)^{L}
c_{i}
\biggl(
    \begin{tikzpicture}[baseline=-1.25*\completevertical]
        \node[scale=0.8]{
            \begin{tikzpicture}[baseline=0*\completevertical]
                \stringud{-2}{\circled{\pm}}{\circled{}}
                \stringud{-1}{\circled{}}{\circled{}}
                \stringud{0}{$\cdots$}{$\cdots$}
                \stringud{1}{\circled{}}{\circled{}}
                \stringud{2}{\circled{\pm}}{\circled{}}
                \ubraceplusstring{-1}{2}{$k-2$}{0}
            \end{tikzpicture}
        };
    \end{tikzpicture}
\biggr)
$.
Thus, we have proved that 
$
c_{i}
\biggl(
    \begin{tikzpicture}[baseline=-1.25*\completevertical]
        \node[scale=0.8]{
            \begin{tikzpicture}[baseline=0*\completevertical]
                \stringud{-2}{\circled{\pm}}{\circled{}}
                \stringud{-1}{\circled{}}{\circled{}}
                \stringud{0}{$\cdots$}{$\cdots$}
                \stringud{1}{\circled{}}{\circled{}}
                \stringud{2}{\circled{\pm}}{\circled{}}
                \ubraceplusstring{-1}{2}{$k-2$}{0}
            \end{tikzpicture}
        };
    \end{tikzpicture}
\biggr)
$
is zero for odd $L$ and is $(\text{some constant})\times(-1)^i$ for even $L$ case.
The other cases can also be proved in the same way.
\end{proof}

From Lemma~\ref{lem:fifth} and Lemma~\ref{lem:sixth}, $F_{k}^{ k}$ is written as
\begin{align}
    \label{eq:Fkk}
    F_{k}^{ k}
    &=
    \sum_{\sigma, \mu\in\{\uparrow, \downarrow\}}
    \sum_{s\in\{+,-\}}
    \alpha_{(\sigma, s),(\mu, -s)}
    \sum_{i=1}^{L}
    c_{i, \sigma}^{s}c_{i+k-1, \mu}^{-s}
    \nonumber\\
    &\hspace*{3em}
    +
    \sum_{\sigma, \mu\in\{\uparrow, \downarrow\}}
    \sum_{s\in\{+,-\}}
    \alpha_{(\sigma, s),(\mu, s)}
    \sum_{i=1}^{L}
    (-1)^{i}
    c_{i, \sigma}^{s}c_{i+k-1, \mu}^{s}
    ,
\end{align}
where $c^{+}_{i,\sigma}\equiv c^{\dag}_{i,\sigma}, c^{-}_{i,\sigma}\equiv c_{i,\sigma}$ and $-s\equiv\mp$ for $s=\pm$, and $\alpha_{(\sigma, s),(\mu, -s)}$ is arbitrary constant, and $\alpha_{(\sigma, s),(\mu, s)}$ is arbitrary constant for even $L$ and $\alpha_{(\sigma, s),(\mu, s)}=0$ for odd $L$.

Note that $F_{k}^{ k}$ commute with $H_0$: 
$
\bck{
        F_{k}^{ k}, H_0
    }
    =0
$.
Therefore, the equation for the cancellation of the $k$-support operators~\eqref{eq:k-cancellation} can be reduced to 
\begin{align}
    \label{eq:k-cancellation2}
    \bck{F_k^{ k-1},H_0}
    \bigg|_{k}
    +
    \bck{F_k^{  k},H_{\mathrm{int}}}
    =0. 
\end{align}

\subsection{Cancellation of $k$-support operators}
In the following, we determine $\alpha_{(\sigma, s),(\mu, s^\prime)}$ to satisfy the cancellation of $k$-support operators~\eqref{eq:k-cancellation2}.
We let a $(k-1)$-support configuration denoted by the symbol $p$ below.
$F^{ k-1}_k$ is written as
\begin{align}
        \label{eq:k-1-sup-sum}
        F^{ k-1}_k
        =
        \sum_{p\in \mathcal{C}_{k-1}}
        \sum_{i=1}^L
        c_{i}(p)
        p(i)
        ,
\end{align}
where $p \in \mathcal{C}_{k-1}$ is denoted as
$
    p
    =
    \begin{tikzpicture}[baseline=-0.5*\completevertical]
        \stringud{0}{$\Bar{a}_1$}{$\Bar{b}_1$}
        \stringud{1}{$\Bar{a}_2$}{$\Bar{b}_2$}
        \stringud{2}{$\cdots$}{$\cdots$}
        \stringud{3.25}{$\Bar{a}_{k-1}$}{$\Bar{b}_{k-1}$}
    \end{tikzpicture}
$
, and at least one of $\Bar{a}_1$ or $\Bar{b}_1$ and at least one of $\Bar{a}_{k-1}$ or $\Bar{b}_{k-1}$ is not $\circled{}$ because $p\in \mathcal{C}_{k-1}$ is $(k-1)$-support configuration.

From Lemma~\ref{lem:fifth} and Lemma~\ref{lem:sixth}, we can calculate explicitly $\bck{F_k^k, H_{\mathrm{int}}}$ as
\begin{align}
    \label{eq:Fkkint}
    &\bck{F_k^k, H_{\mathrm{int}}}=
    \nonumber\\
    &
    -
    U
    \sum_{s, t\in\bce{+, -}}
    \bce{
    \alpha_{(\uparrow, s), (\uparrow, t)}
        \biggl(
            s
            \begin{tikzpicture}[baseline=-1.25*\completevertical]
                \node[scale=0.8]{
                    \begin{tikzpicture}[baseline=0*\completevertical]
                        \stringud{-2}{\circled{s}}{\circled{z}}
                        \stringud{-1}{\circled{}}{\circled{}}
                        \stringud{0}{$\cdots$}{$\cdots$}
                        \stringud{1}{\circled{}}{\circled{}}
                        \stringud{2}{\circled{t}}{\circled{}}
                        \ubraceplusstring{-1}{2}{$k-2$}{0}
                    \end{tikzpicture}
                };
            \end{tikzpicture}
            +
            t
            \begin{tikzpicture}[baseline=-1.25*\completevertical]
                \node[scale=0.8]{
                    \begin{tikzpicture}[baseline=0*\completevertical]
                        \stringud{-2}{\circled{s}}{\circled{}}
                        \stringud{-1}{\circled{}}{\circled{}}
                        \stringud{0}{$\cdots$}{$\cdots$}
                        \stringud{1}{\circled{}}{\circled{}}
                        \stringud{2}{\circled{t}}{\circled{z}}
                        \ubraceplusstring{-1}{2}{$k-2$}{0}
                    \end{tikzpicture}
                };
            \end{tikzpicture}
        \biggr)
        +
        \alpha_{(\downarrow, s), (\downarrow, t)}
        \biggl(
            s
            \begin{tikzpicture}[baseline=-1.25*\completevertical]
                \node[scale=0.8]{
                    \begin{tikzpicture}[baseline=0*\completevertical]
                        \stringud{-2}{\circled{z}}{\circled{s}}
                        \stringud{-1}{\circled{}}{\circled{}}
                        \stringud{0}{$\cdots$}{$\cdots$}
                        \stringud{1}{\circled{}}{\circled{}}
                        \stringud{2}{\circled{}}{\circled{t}}
                        \ubraceplusstring{-1}{2}{$k-2$}{0}
                    \end{tikzpicture}
                };
            \end{tikzpicture}
            +
            t
            \begin{tikzpicture}[baseline=-1.25*\completevertical]
                \node[scale=0.8]{
                    \begin{tikzpicture}[baseline=0*\completevertical]
                        \stringud{-2}{\circled{}}{\circled{s}}
                        \stringud{-1}{\circled{}}{\circled{}}
                        \stringud{0}{$\cdots$}{$\cdots$}
                        \stringud{1}{\circled{}}{\circled{}}
                        \stringud{2}{\circled{z}}{\circled{t}}
                        \ubraceplusstring{-1}{2}{$k-2$}{0}
                    \end{tikzpicture}
                };
            \end{tikzpicture}
        \biggr)
    \linebreaklr
    +
    \alpha_{(\uparrow, s), (\downarrow, t)}
        \biggl(
            s
            \begin{tikzpicture}[baseline=-1.25*\completevertical]
                \node[scale=0.8]{
                    \begin{tikzpicture}[baseline=0*\completevertical]
                        \stringud{-2}{\circled{s}}{\circled{z}}
                        \stringud{-1}{\circled{}}{\circled{}}
                        \stringud{0}{$\cdots$}{$\cdots$}
                        \stringud{1}{\circled{}}{\circled{}}
                        \stringud{2}{\circled{}}{\circled{t}}
                        \ubraceplusstring{-1}{2}{$k-2$}{0}
                    \end{tikzpicture}
                };
            \end{tikzpicture}
            +
            t
            \begin{tikzpicture}[baseline=-1.25*\completevertical]
                \node[scale=0.8]{
                    \begin{tikzpicture}[baseline=0*\completevertical]
                        \stringud{-2}{\circled{s}}{\circled{}}
                        \stringud{-1}{\circled{}}{\circled{}}
                        \stringud{0}{$\cdots$}{$\cdots$}
                        \stringud{1}{\circled{}}{\circled{}}
                        \stringud{2}{\circled{z}}{\circled{t}}
                        \ubraceplusstring{-1}{2}{$k-2$}{0}
                    \end{tikzpicture}
                };
            \end{tikzpicture}
        \biggr)
        +
        \alpha_{(\downarrow, s), (\uparrow, t)}
        \biggl(
            s
            \begin{tikzpicture}[baseline=-1.25*\completevertical]
                \node[scale=0.8]{
                    \begin{tikzpicture}[baseline=0*\completevertical]
                        \stringud{-2}{\circled{z}}{\circled{s}}
                        \stringud{-1}{\circled{}}{\circled{}}
                        \stringud{0}{$\cdots$}{$\cdots$}
                        \stringud{1}{\circled{}}{\circled{}}
                        \stringud{2}{\circled{t}}{\circled{}}
                        \ubraceplusstring{-1}{2}{$k-2$}{0}
                    \end{tikzpicture}
                };
            \end{tikzpicture}
            +
            t
            \begin{tikzpicture}[baseline=-1.25*\completevertical]
                \node[scale=0.8]{
                    \begin{tikzpicture}[baseline=0*\completevertical]
                        \stringud{-2}{\circled{}}{\circled{s}}
                        \stringud{-1}{\circled{}}{\circled{}}
                        \stringud{0}{$\cdots$}{$\cdots$}
                        \stringud{1}{\circled{}}{\circled{}}
                        \stringud{2}{\circled{t}}{\circled{z}}
                        \ubraceplusstring{-1}{2}{$k-2$}{0}
                    \end{tikzpicture}
                };
            \end{tikzpicture}
        \biggr)
    }
    .
\end{align}
We note that the contributions to the cancellation in~\eqref{eq:k-cancellation2} from $F_k^k$ are exhausted by the above operators in~\eqref{eq:Fkkint}.

We can prove the coefficients are zero if the spin flavors differ between the right and left ends, such as $
c_i\paren{
    \begin{tikzpicture}[baseline=-0.5*\completevertical]
        \stringud{0}{\circled{+}}{\circled{}}
        \stringud{1}{$\cdots$}{$\cdots$}
        \stringud{2}{\circled{}}{\circled{-}}
    \end{tikzpicture}
}
=
0
$, which is stated in the following Lemma.
\begin{lemma}
    \label{lem:seventh}
    $\alpha_{(\sigma, s),(\mu, s^\prime)} = 0$ holds for $\sigma \neq \mu$.
\end{lemma}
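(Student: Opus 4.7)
Without loss of generality I take $\sigma = \uparrow$ and $\mu = \downarrow$. My plan is to exploit the cancellation equation~\eqref{eq:k-cancellation2} for carefully chosen $k$-support targets that the cross-spin coefficient $\alpha_{(\uparrow,s),(\downarrow,t)}$ sources in $\bck{F_k^k, H_{\mathrm{int}}}$ (see~\eqref{eq:Fkkint}), and to trace the induced constraints on the $(k-1)$-support part $F_k^{k-1}$ through iterative chains that close under the periodic boundary condition.

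First I target the $k$-support operator $T(i) = c^s_{i,\uparrow} z_{i,\downarrow} c^t_{i+k-1,\downarrow}$, which appears in~\eqref{eq:Fkkint} with coefficient $-Us\, \alpha_{(\uparrow,s),(\downarrow,t)}$. I then enumerate every $(k-1)$-support element $p$ in $F_k^{k-1}$ whose commutator $\bck{p, H_0}|_k$ contains $T(i)$. The key observation is that any single hopping $h^\sigma_l$ acts on only one spin, so both the $c^s_{\uparrow}$ and the $z_{\downarrow}$ at site $i$ must already be present in $p$; combined with the shift rule for $c^t_{\downarrow}$, this uniquely picks out $p_0(i) = c^s_{i,\uparrow} z_{i,\downarrow} c^t_{i+k-2,\downarrow}$ contributing with factor $t$, and the cancellation equation yields $c_i(p_0) = (Us/t)\, \alpha_{(\uparrow,s),(\downarrow,t)}$.

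Next I iterate. Applying $h^\uparrow_{i-1}$ to $p_0(i)$ produces the $k$-support operator $T_1(i-1) = c^s_{i-1,\uparrow} z_{i,\downarrow} c^t_{i+k-2,\downarrow}$, whose $z$ sits one step in from the end and therefore does not match any pattern in~\eqref{eq:Fkkint}; its cancellation involves only $p_0(i)$ and the shifted element $p_1(i-1) = c^s_{i-1,\uparrow} z_{i,\downarrow} c^t_{i+k-3,\downarrow}$. Iterating yields a chain $p_n(i-n) = c^s_{i-n,\uparrow} z_{i,\downarrow} c^t_{i+k-2-n,\downarrow}$ for $0 \le n \le k-2$ with $c_{i-n}(p_n) = (-s/t)^n\, c_i(p_0)$. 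At $n = k-2$ the $z_{\downarrow}$ and $c^t_{\downarrow}$ land on the same site and collapse (via $z_{\downarrow} c^t_{\downarrow} = t\, c^t_{\downarrow}$) to the $z$-free cross-spin $(k-1)$-support basis element $r = c^s_{i-(k-2),\uparrow} c^t_{i,\downarrow}$. A second chain starting from $r$ shifts both endpoints by one lattice site per step through the two-preimage cancellations of the $z$-free $k$-support targets $c^s_{j,\uparrow} c^t_{j+k-1,\downarrow}$ (again not sourced by $\bck{F_k^k, H_{\mathrm{int}}}$); after $L$ steps this chain closes by periodicity and produces the self-consistency identity $c(r) = (-s/t)^L c(r)$. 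Whenever $(-s/t)^L \neq 1$ this forces $c(r) = 0$ and hence $\alpha_{(\uparrow,s),(\downarrow,t)} = 0$.

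The hard part will be the residual cases where $(-s/t)^L = 1$ — namely $s=t$ with $L$ even, and $s \neq t$ for arbitrary $L$. In these cases I plan to introduce auxiliary $k$-support targets carrying $z$-operators at several intermediate positions (rather than at the ends), and to launch additional chains from the corresponding new $(k-1)$-support preimages; combining the resulting linear constraints with the first chain should force $c_i(p_0) = 0$ regardless of parity. At every step the most technically demanding task is the exhaustive enumeration of $(k-1)$-support preimages using only the commutation rules~\eqref{eq:commu-plus-b-up}--\eqref{eq:commu-plus-b-up-end} and the definition of a $(k-1)$-support configuration; the hypothesis $k < \floor{L/2}$ is used crucially here to exclude the wrap-around contributions that would otherwise spoil the two-preimage structure at each stage. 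By the $\uparrow \leftrightarrow \downarrow$ symmetry the analogous argument settles $\alpha_{(\downarrow,s),(\uparrow,t)} = 0$, completing the proof of Lemma~\ref{lem:seventh}.
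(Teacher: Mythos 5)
Your overall strategy---seeding a chain of cancellation conditions with the $z$-carrying $k$-support target that $\alpha_{(\uparrow,s),(\downarrow,t)}$ produces in $\bck{F_k^k,H_{\mathrm{int}}}$, and propagating through two-preimage cancellations of targets that the interaction does not source---is the same as the paper's, and your first step and the chain up to $p_{k-3}$ are sound. The gap is at the termination of the chain. The element you call $p_{k-2}$, with $z_{i,\downarrow}$ and $c^t_{i,\downarrow}$ ``collapsed'' onto the same site, is \emph{not} a preimage of the target $T_{k-2}(i-k+2)=c^s_{i-k+2,\uparrow}z_{i,\downarrow}c^t_{i+1,\downarrow}$: a right-extension down-hopping acting on the pair $(X_i,\,I_{i+1})$ can only produce $(I,c^t)$ from $X=c^t$ or $(c^+,c^-)+(c^-,c^+)$ from $X=z$, never the adjacent pair $(z,c^t)$ that $T_{k-2}$ carries in its last two lower-row slots. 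Consequently $T_{k-2}$ has a \emph{unique} preimage, namely $p_{k-3}(i-k+3)$, so the $m=1$ rule~\eqref{eq:graphical-cancellation-one-m} gives $c(p_{k-3})=0$ outright, and back-substitution along your own chain already yields $\alpha_{(\uparrow,s),(\downarrow,t)}=0$. This unique-preimage termination is precisely the mechanism the paper uses (there the $z$ sits in the upper row next to the left-end operator of the final target $\widetilde q_k$, blocking the second preimage).

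Because you miss this, you instead graft on a second, wrap-around chain through the $z$-free elements $c^s_{j,\uparrow}c^t_{j+k-1,\downarrow}$. That chain both starts from an invalid step (the link from $T_{k-2}$ to $r$ does not exist) and, even taken on its own terms, only closes the argument when $(-s/t)^L\neq 1$. Since $-s/t=+1$ whenever $s\neq t$, the self-consistency identity is vacuous for \emph{all} $L$ in the mixed case $s\neq t$ (e.g.\ the coefficient of $c^\dag_{i,\uparrow}c_{i+k-1,\downarrow}$), and also for $s=t$ with $L$ even; these are exactly the cases you defer to an unexecuted ``plan'' involving auxiliary targets. As written, therefore, the proposal does not prove the lemma; the fix is simply to recognize that your first chain already self-terminates with a single-preimage target, making the second chain and the residual-case analysis unnecessary.
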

\begin{proof}
    We give the proof in the case of $\sigma=\uparrow$, $\mu = \downarrow$, $s = +$, and $s^\prime = -$.
Let $q\in \mathcal{C}_{k}$ be 
$
q 
= 
\begin{tikzpicture}[baseline=-1.25*\completevertical]
    \node[scale=0.8]{
        \begin{tikzpicture}[baseline=-0.5*\completevertical, rounded corners=0mm]
            \stringud{-2}{\circled{+}}{\circled{}}
            \stringud{-1}{\circled{}}{\circled{}}
            \stringud{0}{$\cdots$}{$\cdots$}
            \stringud{1}{\circled{}}{\circled{}}
            \stringud{2}{\circled{}}{\circled{-}}
            \ubraceplusstring{-1}{2}{$k-2$}{0}
        \end{tikzpicture}
    };
\end{tikzpicture}
$
and from Lemma~\ref{lem:fifth}, we have $c_{i}(q) = \alpha_{(\uparrow, +),(\downarrow, -)}$.
We consider the following cancellation of the $k$-support basis element $\widetilde{q}_l(i+l-1)\ (1\leq l \leq k)$ in~\eqref{eq:k-cancellation2}:
\begin{align}
    \begin{tikzpicture}[baseline=(current bounding box.center), x=1.25em, y=2.75em]
        \node[draw, rectangle, rounded corners=2mm, scale=0.75, anchor=east] (A1) at (0,0) {
            \begin{tikzpicture}[baseline=-0.5*\completevertical, rounded corners=0mm]
                \stringud{-3}{\circled{+}}{\circled{}}
                \stringud{-2}{\circled{}}{\circled{}}
                \stringud{-1}{\circled{}}{\circled{}}
                \stringud{0}{$\cdots$}{$\cdots$}
                \stringud{1}{\circled{}}{\circled{}}
                \stringud{2}{\circled{}}{\circled{-}}
                \ubraceplusstring{-2}{2}{$k-2$}{0}
                \commutatorintdot{2}
            \end{tikzpicture}
        };
        \node[draw, rectangle, rounded corners=2mm, scale=0.75, anchor=west] (B1) at (4,-1) {
            \begin{tikzpicture}[baseline=-0.5*\completevertical, rounded corners=0mm]
                \stringud{-3}{\circled{+}}{\circled{}}
                \stringud{-2}{\circled{}}{\circled{}}
                \stringud{-1}{\circled{}}{\circled{}}
                \stringud{0}{$\cdots$}{$\cdots$}
                \stringud{1}{\circled{}}{\circled{}}
                \stringud{2}{\circled{z}}{\circled{-}}
                \ubraceplusstring{-2}{2}{$k-2$}{0}
            \end{tikzpicture}
        };
        \node[draw, rectangle, rounded corners=2mm, scale=0.7, anchor=east] (A2) at (0,-2) {
            \begin{tikzpicture}[baseline=-0.5*\completevertical, rounded corners=0mm]
                \stringud{-3}{\circleddotted{}}{\circleddotted{}}
                \stringud{-2}{\circled{+}}{\circled{}}
                \circledcol{-1}
                \stringud{0}{$\cdots$}{$\cdots$}
                \stringud{1}{\circled{}}{\circled{}}
                \stringud{2}{\circled{z}}{\circled{-}}
                \stringud{3}{\circleddotted{}}{\circleddotted{}}
                \ubraceplusstring{-1}{2}{$k-3$}{0}
                \commutatorhopudot{-3}{-2}
                \commutatorhopddot{2}{3}
            \end{tikzpicture}
        };
        \node[draw, rectangle, rounded corners=2mm, scale=0.7, anchor=west] (B2) at (4,-3) {
            \begin{tikzpicture}[baseline=-0.5*\completevertical, rounded corners=0mm]
                \stringud{-2}{\circled{+}}{\circled{}}
                \stringud{-1}{\circled{}}{\circled{}}
                \stringud{0}{$\cdots$}{$\cdots$}
                \stringud{1}{\circled{}}{\circled{}}
                \stringud{2}{\circled{z}}{\circled{}}
                \stringud{3}{\circled{}}{\circled{-}}
                \ubraceplusstring{-1}{2}{$k-3$}{0}
            \end{tikzpicture}
        };
        \node (A3a) at (0,-4) {};
        \node (A3b) at (0,-5) {};
        \node[draw, rectangle, rounded corners=2mm, scale=0.7, anchor=west] (B3) at (4,-6) {
            \begin{tikzpicture}[baseline=-0.5*\completevertical, rounded corners=0mm]
                \stringud{-1}{\circled{+}}{\circled{}}
                \stringud{0}{\circled{}}{\circled{}}
                \stringud{1}{\circled{}}{\circled{}}
                \stringud{2}{$\cdots$}{$\cdots$}
                \stringud{3}{\circled{}}{\circled{}}
                \stringud{4}{\circled{z}}{\circled{}}
                \stringud{5}{\circled{}}{\circled{}}
                \stringud{6}{$\cdots$}{$\cdots$}
                \stringud{7}{\circled{}}{\circled{}}
                \stringud{8}{\circled{}}{\circled{-}}
                \ubraceplusstring{0}{4}{$k-1-n$}{0}
                \ubraceplusstring{5}{8}{$n-2$}{0}
            \end{tikzpicture}
        };
        \node[draw, rectangle, rounded corners=2mm, scale=0.7, anchor=east] (A4) at (0,-7) {
            \begin{tikzpicture}[baseline=-0.5*\completevertical, rounded corners=0mm]
                \stringud{-1}{\circleddotted{}}{\circleddotted{}}
                \stringud{0}{\circled{+}}{\circled{}}
                \stringud{1}{\circled{}}{\circled{}}
                \stringud{2}{$\cdots$}{$\cdots$}
                \stringud{3}{\circled{}}{\circled{}}
                \stringud{4}{\circled{z}}{\circled{}}
                \stringud{5}{\circled{}}{\circled{}}
                \stringud{6}{$\cdots$}{$\cdots$}
                \stringud{7}{\circled{}}{\circled{}}
                \stringud{8}{\circled{}}{\circled{-}}
                \stringud{9}{\circleddotted{}}{\circleddotted{}}
                \ubraceplusstring{1}{4}{$k-2-n$}{0}
                \ubraceplusstring{5}{8}{$n-2$}{0}
                \commutatorhopudot{-1}{0}
                \commutatorhopddot{8}{9}
            \end{tikzpicture}
        };
        \node[draw, rectangle, rounded corners=2mm, scale=0.7, anchor=west] (B4) at (4,-8) {
            \begin{tikzpicture}[baseline=-0.5*\completevertical, rounded corners=0mm]
                \stringud{0}{\circled{+}}{\circled{}}
                \stringud{1}{\circled{}}{\circled{}}
                \stringud{2}{$\cdots$}{$\cdots$}
                \stringud{3}{\circled{}}{\circled{}}
                \stringud{4}{\circled{z}}{\circled{}}
                \stringud{5}{\circled{}}{\circled{}}
                \stringud{6}{$\cdots$}{$\cdots$}
                \stringud{7}{\circled{}}{\circled{}}
                \stringud{8}{\circled{}}{\circled{}}
                \stringud{9}{\circled{}}{\circled{-}}
                \ubraceplusstring{1}{4}{$k-2-n$}{0}
                \ubraceplusstring{5}{9}{$n-1$}{0}
            \end{tikzpicture}
        };
        \node (A5a) at (0,-9) {};
        \node (A5b) at (0,-10) {};
        \node[draw, rectangle, rounded corners=2mm, scale=0.7, anchor=west] (B5) at (4,-11) {
            \begin{tikzpicture}[baseline=-0.5*\completevertical, rounded corners=0mm]
                \stringud{1}{\circled{+}}{\circled{}}
                \stringud{2}{\circled{}}{\circled{}}
                \stringud{3}{\circled{z}}{\circled{}}
                \stringud{4}{\circled{}}{\circled{}}
                \stringud{5}{$\cdots$}{$\cdots$}
                \stringud{6}{\circled{}}{\circled{}}
                \stringud{7}{\circled{}}{\circled{-}}
                \ubraceplusstring{4}{7}{$k-4$}{0}
            \end{tikzpicture}
        };
        \node[draw, rectangle, rounded corners=2mm, scale=0.7, anchor=east] (A6) at (0,-12) {
            \begin{tikzpicture}[baseline=-0.5*\completevertical, rounded corners=0mm]
                \stringud{1}{\circleddotted{}}{\circleddotted{}}
                \stringud{2}{\circled{+}}{\circled{}}
                \stringud{3}{\circled{z}}{\circled{}}
                \stringud{4}{\circled{}}{\circled{}}
                \stringud{5}{$\cdots$}{$\cdots$}
                \stringud{6}{\circled{}}{\circled{}}
                \stringud{7}{\circled{}}{\circled{-}}
                \stringud{8}{\circleddotted{}}{\circleddotted{}}
                \ubraceplusstring{4}{7}{$k-4$}{0}
                \commutatorhopddot{7}{8}
                \commutatorhopudot{1}{2}
            \end{tikzpicture}
        };
        \node[draw, rectangle, rounded corners=2mm, scale=0.7, anchor=west] (B6) at (4,-13) {
            \begin{tikzpicture}[baseline=-0.5*\completevertical, rounded corners=0mm]
                \stringud{2}{\circled{+}}{\circled{}}
                \stringud{3}{\circled{z}}{\circled{}}
                \stringud{4}{\circled{}}{\circled{}}
                \stringud{5}{$\cdots$}{$\cdots$}
                \stringud{6}{\circled{}}{\circled{}}
                \stringud{7}{\circled{}}{\circled{}}
                \stringud{8}{\circled{}}{\circled{-}}
                \ubraceplusstring{4}{8}{$k-3$}{0}
            \end{tikzpicture}
        };
        \draw[->,thick] (A1) -- (B1);
        \draw[->,thick] (A2) -- (B1);
        \draw[->,thick] (A2) -- (B2);
        \draw[->,thick] (A4) -- (B3);
        \draw[->,thick] (A4) -- (B4);
        \draw[dotted,thick] (A3a) -- (A3b);
        \draw[dotted,thick] (A5a) -- (A5b);
        \draw[dotted,thick] ([xshift=3em, yshift=3.5em]B3.west) -- ([xshift=3em, yshift=-2.5em]B2.west);
        \draw[dotted,thick] ([xshift=3em, yshift=-2.5em]B4.west) -- ([xshift=3em, yshift=3.5em]B5.west);
        \draw[->,thick] (A3a) -- (B2);
        \draw[->,thick] (A3b) -- (B3);
        \draw[->,thick] (A5a) -- (B4);
        \draw[->,thick] (A5b) -- (B5);
        \draw[->,thick] (A6) -- (B5);
        \draw[->,thick] (A6) -- (B6);
        \node[above=0em  of B6, scale=0.75] {$\widetilde{q}_{k}(i+k-1)$};
        \node[above=0em  of A6, scale=0.75] {$p_{k-2}(i+k-2)$};
        \node[above=0em of B5, scale=0.75] {$\widetilde{q}_{k-1}(i+k-2)$};
        \node[above=0em of B4, scale=0.75] {$\widetilde{q}_{n+1}(i+n)$};
        \node[above=0em of A4, scale=0.75] {$p_n(i+n)$};
        \node[above=0em  of B3, scale=0.75] {$\widetilde{q}_{n}(i+n-1)$};
        \node[above=0em  of B2, scale=0.75] {$\widetilde{q}_2(i+1)$};
        \node[above=0em  of B1, scale=0.75] {$\widetilde{q}_1(i)$};
        \node[above=0em  of A1, scale=0.75] {$q(i)$};
        \node[above=0em  of A2, scale=0.75] {$p_1(i+1)$};
    \end{tikzpicture}
\end{align}
where the parts omitted with a dot have a similar structure, and $p_l \in \mathcal{C}_{k-1}$ and $q \in \mathcal{C}_k$, and we have $\alpha_{(\uparrow, +),(\downarrow, -)} = c_{i}(q) \propto c_{i+1}(p_1) \propto \cdots \propto c_{i+n}(p_n) \propto \cdots \propto c_{i+k-2}(p_{k-2}) = 0$ from~\eqref{eq:graphical-cancellation-one-m} and~\eqref{eq:graphical-cancellation-two-m}.
Then we have $\alpha_{(\uparrow, +),(\downarrow, -)} = 0$.
The other cases can also be proved in the same way.
\end{proof}

We will derive a relationship between the coefficients in the following Lemma.
\begin{lemma}
    \label{lem:eighth}
    $\alpha_{(\uparrow, s),(\uparrow, -s)} =  \alpha_{(\downarrow, s),(\downarrow, -s)}$ and $\alpha_{(\sigma, +),(\sigma, -)} =  
     (-1)^{k-1} \alpha_{(\sigma, -),(\sigma, +)}$.
\end{lemma}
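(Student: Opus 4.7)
The strategy is to exploit the $k$-support cancellation equation~\eqref{eq:k-cancellation2}, $\bck{F_k^{k-1}, H_0}\big|_k + \bck{F_k^k, H_{\mathrm{int}}} = 0$, in the same chain-of-cancellations spirit as the proof of Lemma~\ref{lem:seventh}. After Lemma~\ref{lem:seventh} the only surviving end-support unknowns are the four ``same spin'' $\alpha$'s, and by~\eqref{eq:Fkkint} each enters the equation as the coefficient of a small, explicit family of $k$-support basis elements carrying a $\circled{z}$ adjacent to one of the end $\circled{\pm}$ columns. Any such element $\tilde q(i)$ must be cancelled by a $k$-support operator produced from a hopping commutator on some $(k-1)$-support operator $p$ in $F_k^{k-1}$, so the cancellation equation fixes $c_i(p)$ in terms of the relevant $\alpha$.

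The plan is to chain these cancellations so as to link two end-$\circled{z}$ elements belonging to different $\alpha$'s. For $\alpha_{(\uparrow,s),(\uparrow,-s)} = \alpha_{(\downarrow,s),(\downarrow,-s)}$, I would start from
$\tilde q = \begin{tikzpicture}[baseline=-0.5*\completevertical]\stringud{0}{\circled{s}}{\circled{z}}\stringud{1}{\circled{}}{\circled{}}\stringud{2}{$\cdots$}{$\cdots$}\stringud{3}{\circled{-s}}{\circled{}}\end{tikzpicture}$ (coefficient $\propto \alpha_{(\uparrow,s),(\uparrow,-s)}$), observe that only the $(k-1)$-support operator
$p_0 = \begin{tikzpicture}[baseline=-0.5*\completevertical]\stringud{0}{\circled{s}}{\circled{z}}\stringud{1}{$\cdots$}{$\cdots$}\stringud{2}{\circled{-s}}{\circled{}}\end{tikzpicture}$ maps to it via a hopping, and then cascade via two-node reductions~\eqref{eq:graphical-cancellation-two-m} along a sequence of $(k-1)$-support intermediates in which the $\circled{z}$ is translated across the chain and shifted from the upper to the lower row using~\eqref{eq:commu-pm-mp}, until the chain closes on
$\tilde q' = \begin{tikzpicture}[baseline=-0.5*\completevertical]\stringud{0}{\circled{}}{\circled{s}}\stringud{1}{$\cdots$}{$\cdots$}\stringud{2}{\circled{z}}{\circled{-s}}\end{tikzpicture}$, whose interaction contribution is $\propto \alpha_{(\downarrow,s),(\downarrow,-s)}$. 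For $\alpha_{(\sigma,+),(\sigma,-)} = (-1)^{k-1}\alpha_{(\sigma,-),(\sigma,+)}$, the chain instead slides the ``$\circled{\pm}\circled{z}$'' block across the chain while staying in one row; this requires $k-1$ elementary hops, each contributing a sign via~\eqref{eq:commu-plus-b-up}, and the product of factors gives the announced $(-1)^{k-1}$.

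The principal obstacle is branching. The rules~\eqref{eq:commu-pm-mp}, \eqref{eq:commu-z-i}, and~\eqref{eq:commu-i-z} each generate two output configurations, so any step where a $\circled{\pm}$ meets a $\circled{z}$ produces a node with three or more legs rather than two. To keep the chain a clean cascade, I would first prove, in the spirit of Lemmas~\ref{lem:first}--\ref{lem:forth}, that many $(k-1)$-support coefficients in $F_k^{k-1}$ are already forced to vanish or to be proportional to a known $\alpha$ by independent cancellations of auxiliary $k$-support elements lying outside the main chain, thereby pruning the extra branches. The restriction $k<\floor{\frac{L}{2}}$ ensures that no wrap-around around the periodic chain contributes a spurious extra leg at any step of this branching analysis, so that the leg counts are exactly those used in the two-node reduction and the accumulated sign is precisely the one stated.
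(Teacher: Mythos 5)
Your high-level strategy is the same as the paper's: exploit~\eqref{eq:k-cancellation2} by chaining $k$-support cancellations through $(k-1)$-support intermediates so as to connect the interaction image of the configuration carrying $\alpha_{(\uparrow,s),(\uparrow,-s)}$ to that carrying $\alpha_{(\downarrow,s),(\downarrow,-s)}$, and your identification of the unique $(k-1)$-support operator $p_0$ feeding the first target element is correct. However, the proposal stops exactly where the proof begins: the entire content of the lemma is the explicit cascade, and the mechanism you describe --- ``the $\circled{z}$ is translated across the chain'' --- does not match the commutator algebra, since a $\circled{z}$ adjacent to an empty site never translates but is converted into a pair of fermion operators by~\eqref{eq:commu-z-i} or~\eqref{eq:commu-i-z}. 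In the paper's chain~\eqref{eq:lemeighth} the end $\circled{z}$ is split once into a lower-row pair, one member of that pair migrates site by site across the lattice while the whole configuration shifts, the two upper-row fermion operators close up and recombine into a single $\circled{z}$ via~\eqref{eq:commu-pm-mp}, and a final interaction step lands on the pure lower-row configuration $q^{\prime}$; none of this is reconstructible from your description.

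More seriously, your proposed resolution of the branching problem would fail. First, a rule with two output configurations gives one source node two \emph{outgoing} arrows; it does not by itself create a target node with three incoming legs, and it is the incoming multiplicity that governs each cancellation equation. The paper arranges the chain so that every $\widetilde{q}_l$ has exactly two incoming contributions, and the two branches of the $z$-splitting are not pruned but carried in parallel via the $\pm$ notation --- they are precisely what produces the two sign cases $\alpha_{(\uparrow,+),(\uparrow,-)}=(\pm1)^{k-1}\alpha_{(\downarrow,\pm),(\downarrow,\mp)}$ in a single computation. Second, your plan to prune branches by proving, ``in the spirit of Lemmas~\ref{lem:first}--\ref{lem:forth},'' that the relevant $(k-1)$-support coefficients of $F_k^{k-1}$ vanish cannot work: those coefficients are forced by~\eqref{eq:k-cancellation2} to be \emph{proportional to the $\alpha$'s} (e.g.\ $c_{i+1}(p_1)=-U\,\alpha_{(\uparrow,+),(\uparrow,-)}$ in the paper's chain) and are generically nonzero, unlike the $k$-support coefficients killed by the $(k{+}1)$-support cancellation. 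Finally, your separate ``single-row slide'' mechanism for $\alpha_{(\sigma,+),(\sigma,-)}=(-1)^{k-1}\alpha_{(\sigma,-),(\sigma,+)}$ is unsubstantiated; the paper obtains this relation with no new construction, simply by composing the cross-flavor chain with its spin-flipped counterpart. As it stands the proposal is a plausible plan with the key construction missing and with a repair strategy that would not go through.
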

\begin{proof}
We consider the following cancellation of the $k$-support basis element $\widetilde{q}_l(i+l-1)\ (1\leq l \leq k)$ in~\eqref{eq:k-cancellation2}:
\begin{align}
    \label{eq:lemeighth}
    \begin{tikzpicture}[baseline=(current bounding box.center), x=1.25em, y=3.em]
        \node[draw, rectangle, rounded corners=2mm, scale=0.75, anchor=east] (A1) at (0,0) {
            \begin{tikzpicture}[baseline=-0.5*\completevertical, rounded corners=0mm]
                \stringud{-3}{\circled{+}}{\circled{}}
                \stringud{-2}{\circled{}}{\circled{}}
                \stringud{-1}{\circled{}}{\circled{}}
                \stringud{0}{$\cdots$}{$\cdots$}
                \stringud{1}{\circled{}}{\circled{}}
                \stringud{2}{\circled{-}}{\circled{}}
                \ubraceplusstring{-2}{2}{$k-2$}{0}
                \commutatorintdot{2}
            \end{tikzpicture}
        };
         \node[draw, rectangle, rounded corners=2mm, scale=0.75, anchor=west] (B1) at (4,-1) {
            \begin{tikzpicture}[baseline=-0.5*\completevertical, rounded corners=0mm]
                \stringud{-3}{\circled{+}}{\circled{}}
                \stringud{-2}{\circled{}}{\circled{}}
                \stringud{-1}{\circled{}}{\circled{}}
                \stringud{0}{$\cdots$}{$\cdots$}
                \stringud{1}{\circled{}}{\circled{}}
                \stringud{2}{\circled{-}}{\circled{z}}
                \ubraceplusstring{-2}{2}{$k-2$}{0}
            \end{tikzpicture}
        };
        \node[draw, rectangle, rounded corners=2mm, scale=0.7, anchor=east] (A2) at (0,-2) {
            \begin{tikzpicture}[baseline=-0.5*\completevertical, rounded corners=0mm]
                \stringud{-3}{\circleddotted{}}{\circleddotted{}}
                \stringud{-2}{\circled{+}}{\circled{}}
                \circledcol{-1}
                \stringud{0}{$\cdots$}{$\cdots$}
                \stringud{1}{\circled{}}{\circled{}}
                \stringud{2}{\circled{-}}{\circled{z}}
                \stringud{3}{\circleddotted{}}{\circleddotted{}}
                \ubraceplusstring{-1}{2}{$k-3$}{0}
                \commutatorhopudot{-3}{-2}
                \commutatorhopddot{2}{3}
            \end{tikzpicture}
        };
        \node[draw, rectangle, rounded corners=2mm, scale=0.7, anchor=west] (B2) at (4,-3) {
            \begin{tikzpicture}[baseline=-0.5*\completevertical, rounded corners=0mm]
                \stringud{-3}{\circled{+}}{\circled{}}
                \stringud{-2}{\circled{}}{\circled{}}
                \stringud{-1}{\circled{}}{\circled{}}
                \stringud{0}{$\cdots$}{$\cdots$}
                \stringud{1}{\circled{}}{\circled{}}
                \stringud{2}{\circled{-}}{\circled{\pm}}
                \stringud{3}{\circled{}}{\circled{\mp}}
                \ubraceplusstring{-2}{2}{$k-3$}{0}
            \end{tikzpicture}
        };
        \node[draw, rectangle, rounded corners=2mm, scale=0.7, anchor=east] (A3) at (0,-4) {
            \begin{tikzpicture}[baseline=-0.5*\completevertical, rounded corners=0mm]
                \stringud{-3}{\circleddotted{}}{\circleddotted{}}
                \stringud{-2}{\circled{+}}{\circled{}}
                \circledcol{-1}
                \stringud{0}{$\cdots$}{$\cdots$}
                \stringud{1}{\circled{}}{\circled{}}
                \stringud{2}{\circled{-}}{\circled{\pm}}
                \stringud{3}{\circled{}}{\circled{\mp}}
                \stringud{4}{\circleddotted{}}{\circleddotted{}}
                \ubraceplusstring{-1}{2}{$k-4$}{0}
                \commutatorhopudot{-3}{-2}
                \commutatorhopddot{3}{4}
            \end{tikzpicture}
        };
        \node (B3a) at (3.75,-5) {};
        \node (B3b) at (3.75,-6) {};
        \node[draw, rectangle, rounded corners=2mm, scale=0.7, anchor=east] (A4) at (0,-7) {
            \begin{tikzpicture}[baseline=-0.5*\completevertical, rounded corners=0mm]
                \stringud{-2}{\circleddotted{}}{\circleddotted{}}
                \stringud{-1}{\circled{+}}{\circled{}}
                \stringud{0}{\circled{}}{\circled{}}
                \stringud{1}{\circled{}}{\circled{}}
                \stringud{2}{$\cdots$}{$\cdots$}
                \stringud{3}{\circled{}}{\circled{}}
                \stringud{4}{\circled{-}}{\circled{\pm}}
                \stringud{5}{\circled{}}{\circled{}}
                \stringud{6}{$\cdots$}{$\cdots$}
                \stringud{7}{\circled{}}{\circled{}}
                \stringud{8}{\circled{}}{\circled{\mp}}
                \stringud{9}{\circleddotted{}}{\circleddotted{}}
                \ubraceplusstring{0}{4}{$k-2-n$}{0}
                \ubraceplusstring{5}{8}{$n-2$}{0}
                \commutatorhopudot{-2}{-1}
                \commutatorhopddot{8}{9}
            \end{tikzpicture}
        };
        \node[draw, rectangle, rounded corners=2mm, scale=0.7, anchor=west] (B4) at (4,-8) {
            \begin{tikzpicture}[baseline=-0.5*\completevertical, rounded corners=0mm]
                \stringud{-1}{\circled{+}}{\circled{}}
                \stringud{0}{\circled{}}{\circled{}}
                \stringud{1}{\circled{}}{\circled{}}
                \stringud{2}{$\cdots$}{$\cdots$}
                \stringud{3}{\circled{}}{\circled{}}
                \stringud{4}{\circled{-}}{\circled{\pm}}
                \stringud{5}{\circled{}}{\circled{}}
                \stringud{6}{$\cdots$}{$\cdots$}
                \stringud{7}{\circled{}}{\circled{}}
                \stringud{8}{\circled{}}{\circled{}}
                \stringud{9}{\circled{}}{\circled{\mp}}
                \ubraceplusstring{0}{4}{$k-2-n$}{0}
                \ubraceplusstring{5}{9}{$n-1$}{0}
            \end{tikzpicture}
        };
        \node[draw, rectangle, rounded corners=2mm, scale=0.7, anchor=east] (A5) at (0,-9) {
            \begin{tikzpicture}[baseline=-0.5*\completevertical, rounded corners=0mm]
                \stringud{-1}{\circleddotted{}}{\circleddotted{}}
                \stringud{0}{\circled{+}}{\circled{}}
                \stringud{1}{\circled{}}{\circled{}}
                \stringud{2}{$\cdots$}{$\cdots$}
                \stringud{3}{\circled{}}{\circled{}}
                \stringud{4}{\circled{-}}{\circled{\pm}}
                \stringud{5}{\circled{}}{\circled{}}
                \stringud{6}{$\cdots$}{$\cdots$}
                \stringud{7}{\circled{}}{\circled{}}
                \stringud{8}{\circled{}}{\circled{}}
                \stringud{9}{\circled{}}{\circled{\mp}}
                \stringud{10}{\circleddotted{}}{\circleddotted{}}
                \ubraceplusstring{1}{4}{$k-3-n$}{0}
                \ubraceplusstring{5}{9}{$n-1$}{0}
                \commutatorhopudot{-1}{0}
                \commutatorhopddot{9}{10}
            \end{tikzpicture}
        };
        \node (B5a) at (3.75,-10) {};
        \node (B5b) at (3.75,-11) {};
        \node[draw, rectangle, rounded corners=2mm, scale=0.7, anchor=east] (A6) at (0,-12) {
            \begin{tikzpicture}[baseline=-0.5*\completevertical, rounded corners=0mm]
                \commutatorhopudot{1}{2}
                \stringud{1}{\circleddotted{}}{\circleddotted{}}
                \stringud{2}{\circled{+}}{\circled{}}
                \stringud{3}{\circled{-}}{\circled{\pm}}
                \stringud{4}{\circled{}}{\circled{}}
                \stringud{5}{$\cdots$}{$\cdots$}
                \stringud{6}{\circled{}}{\circled{}}
                \stringud{7}{\circled{}}{\circled{\mp}}
                \stringud{8}{\circleddotted{}}{\circleddotted{}}
                \commutatorhopddot{7}{8}
                \ubraceplusstring{4}{7}{$k-4$}{0}
            \end{tikzpicture}
        };
        \node[draw, rectangle, rounded corners=2mm, scale=0.7, anchor=west] (B6) at (4,-13) {
            \begin{tikzpicture}[baseline=-0.5*\completevertical, rounded corners=0mm]
                \stringud{2}{\circled{+}}{\circled{}}
                \stringud{3}{\circled{-}}{\circled{\pm}}
                \stringud{4}{\circled{}}{\circled{}}
                \stringud{5}{$\cdots$}{$\cdots$}
                \stringud{6}{\circled{}}{\circled{}}
                \stringud{7}{\circled{}}{\circled{}}
                \stringud{8}{\circled{}}{\circled{\mp}}
                \ubraceplusstring{4}{8}{$k-3$}{0}
            \end{tikzpicture}
        };
        \node[draw, rectangle, rounded corners=2mm, scale=0.75, anchor=east] (A7) at (0,-14) {
            \begin{tikzpicture}[baseline=-0.5*\completevertical, rounded corners=0mm]
                \commutatorhopudot{-3}{-2}
                \stringud{-3}{\circleddotted{}}{\circleddotted{}}
                \stringud{-2}{\circled{z}}{\circled{\pm}}
                \stringud{-1}{\circled{}}{\circled{}}
                \stringud{0}{$\cdots$}{$\cdots$}
                \stringud{1}{\circled{}}{\circled{}}
                \stringud{2}{\circled{}}{\circled{\mp}}
                \stringud{3}{\circleddotted{}}{\circleddotted{}}
                \ubraceplusstring{-1}{2}{$k-3$}{0}
                \commutatorhopddot{2}{3}
            \end{tikzpicture}
        };
        \node[draw, rectangle, rounded corners=2mm, scale=0.7, anchor=west] (B7) at (4,-15) {
            \begin{tikzpicture}[baseline=-0.5*\completevertical, rounded corners=0mm]
                \stringud{-2}{\circled{z}}{\circled{\pm}}
                \stringud{-1}{\circled{}}{\circled{}}
                \stringud{0}{$\cdots$}{$\cdots$}
                \stringud{1}{\circled{}}{\circled{}}
                \stringud{2}{\circled{}}{\circled{}}
                \stringud{3}{\circled{}}{\circled{\mp}}
                \ubraceplusstring{-1}{3}{$k-2$}{0}
            \end{tikzpicture}
        };
        \node[draw, rectangle, rounded corners=2mm, scale=0.75, anchor=east] (A8) at (0,-16) {
            \begin{tikzpicture}[baseline=-0.5*\completevertical, rounded corners=0mm]
                \stringud{-2}{\circled{}}{\circled{\pm}}
                \stringud{-1}{\circled{}}{\circled{}}
                \stringud{0}{$\cdots$}{$\cdots$}
                \stringud{1}{\circled{}}{\circled{}}
                \stringud{2}{\circled{}}{\circled{}}
                \stringud{3}{\circled{}}{\circled{\mp}}
                \ubraceplusstring{-1}{3}{$k-2$}{0}
                \commutatorintdot{-2}
            \end{tikzpicture}
        };
        \draw[->,thick] (A1) -- (B1) node[midway, scale=0.7,above, xshift=0pt] {$U$};
        \draw[->,thick] (A2) -- (B1) node[midway, scale=0.7,above, xshift=0pt] {$+1$};
        \draw[->,thick] (A2) -- (B2) node[midway, scale=0.7,above, xshift=0pt] {$-2$};
        \draw[->,thick] (A3) -- (B2) node[midway, scale=0.7,above, xshift=0pt] {$+1$};
        \draw[->,thick] (A3) -- (B3a) node[midway, scale=0.7,above, xshift=0pt] {$\mp1$};
        \draw[->,thick] (A4) -- (B3b) node[midway, scale=0.7,above, xshift=0pt] {$+1$};
        \draw[dotted,thick] (B3a) -- (B3b);
        \draw[dotted,thick] ([yshift=3.25em, xshift=-4em]A4.east) -- ([yshift=-2.5em, xshift=-4em]A3.east);
        \draw[->,thick] (A4) -- (B4) node[midway, scale=0.7,above, xshift=0pt] {$\mp1$};
        \draw[->,thick] (A5) -- (B4) node[midway, scale=0.7,above, xshift=0pt] {$+1$};
        \draw[->,thick] (A5) -- (B5a) node[midway, scale=0.7,above, xshift=0pt] {$\mp 1$};
        \draw[dotted,thick] (B5a) -- (B5b);
        \draw[dotted,thick] ([yshift=3.25em, xshift=-4em]A6.east) -- ([yshift=-2.5em, xshift=-4em]A5.east);
        \draw[->,thick] (A6) -- (B5b) node[midway, scale=0.7,above, xshift=0pt] {$+1$};
        \draw[->,thick] (A6) -- (B6) node[midway, scale=0.7,above, xshift=0pt] {$\mp1$};
        \draw[->,thick] (A7) -- (B6) node[midway, scale=0.7,above, xshift=0pt] {$+2$};
        \draw[->,thick] (A7) -- (B7) node[midway, scale=0.7,above, xshift=0pt] {$\mp1$};
        \draw[->,thick] (A8) -- (B7) node[midway, scale=0.7,above, xshift=0pt] {$\mp U$};
        \node[above=0em  of A1, scale=0.75] {$q(i)$};
        \node[above=0em  of B1, scale=0.75] {$\widetilde{q}_1(i)$};
        \node[above=0em  of A2, scale=0.75] {$p_1(i+1)$};
        \node[above=0em  of B2, scale=0.75] {$\widetilde{q}_2(i+1)$};
        \node[above=0em  of A3, scale=0.75] {$p_2(i+2)$};
        \node[above=0em of A4, scale=0.75] {$p_n(i+n)$};
        \node[above=0em of B4, scale=0.75] {$\widetilde{q}_{n+1}(i+n)$};
        \node[above=0em  of A5, scale=0.75] {$p_{n+1}(i+n+1)$};
        \node[above=0em of A6, scale=0.75] {$p_{k-2}(i+k-2)$};
        \node[above=0em of B6, scale=0.75] {$\widetilde{q}_{k-1}(i+k-2)$};
        \node[above=0em of A7, scale=0.75] {$p_{k-1}(i+k-1)$};
        \node[above=0em of B7, scale=0.75] {$\widetilde{q}_{k}(i+k-1)$};
        \node[above=0em of A8, scale=0.75] {$q^\prime(i+k-1)$};
    \end{tikzpicture}
\end{align}
where the parts omitted with a dot have a similar structure.
From these graphical notations, we have
$
    c_i(q)
    =
    \paren{-\frac{1}{U}}
    \paren{-\frac{1}{-2}}
    \paren{-\frac{1}{\mp1}}^{k-4}
    \paren{-\frac{2}{\mp1}}
    \paren{-\frac{\mp U}{\mp1}}
    c_{i+k-1}(q^\prime)
    =
    (\pm 1)^{k-3}c_{i+k-1}(q^\prime)
$. 
From~\eqref{eq:Fkk}, we have $c_i(q) = \alpha_{(\uparrow, +), (\uparrow, -)}$ and $c_{i+k-1}(q^\prime) = \alpha_{(\downarrow, \pm), (\downarrow, \mp)}$.
Then we have $\alpha_{(\uparrow, +), (\uparrow, -)} = (\pm 1)^{k-1} \alpha_{(\downarrow, \pm), (\downarrow, \mp)}$, and we can see 
$\alpha_{(\uparrow, +), (\uparrow, -)} = \alpha_{(\downarrow, +), (\downarrow, -)} = (-1)^{k-1}  \alpha_{(\downarrow, -), (\downarrow, +)}$.

The remaining case we have to prove is $\alpha_{(\uparrow, +), (\uparrow, -)} = (-1)^{k-1}  \alpha_{(\uparrow, -), (\uparrow, +)}$.
This case is also proved in the same way by considering the case where all spin flavors are reversed (the upper row and the lower row are interchanged) in~\eqref{eq:lemeighth}, and we have $\alpha_{(\downarrow, +), (\downarrow, -)} = (\pm 1)^{k-1} \alpha_{(\uparrow, \pm), (\uparrow, \mp)}$, and we can see $\alpha_{(\downarrow, +), (\downarrow, -)} = \alpha_{(\uparrow, +), (\uparrow, -)} = (-1)^{k-1}  \alpha_{(\uparrow, -), (\uparrow, +)}$.
This concludes the proof.
\end{proof}

We can prove the coefficients are zero if the columns at the right and left end are the same, such as $
c_i\paren{
    \begin{tikzpicture}[baseline=-0.5*\completevertical]
        \stringud{0}{\circled{+}}{\circled{}}
        \stringud{1}{$\cdots$}{$\cdots$}
        \stringud{2}{\circled{+}}{\circled{}}
    \end{tikzpicture}
}
=
0
$, which is stated in the following Lemma.
\begin{lemma}
    \label{lem:nineth}
     $\alpha_{(\sigma, s),(\sigma, s)} =  0$.
\end{lemma}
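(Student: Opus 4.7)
Since Lemma~\ref{lem:sixth} already gives $\alpha_{(\sigma,s),(\sigma,s)}=0$ in the odd-$L$ case, the remaining task is to treat even $L$. My plan is to imitate the long chain of cancellations of~\eqref{eq:lemeighth}, but with the right-end column of the starting configuration now carrying the same sign as the left-end column, so that Lemma~\ref{lem:sixth} (rather than Lemma~\ref{lem:fifth}) controls $c_i(q)$.

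Concretely, I would take $q=\begin{tikzpicture}[baseline=-0.5*\completevertical]\stringud{-3}{\circled{+}}{\circled{}}\stringud{-2}{\circled{}}{\circled{}}\stringud{-1}{$\cdots$}{$\cdots$}\stringud{1}{\circled{}}{\circled{}}\stringud{2}{\circled{+}}{\circled{}}\end{tikzpicture}$, for which $c_i(q)=(-1)^i\alpha_{(\uparrow,+),(\uparrow,+)}$ by Lemma~\ref{lem:sixth}. I would then build the same chain of $k$-support basis elements $\widetilde q_l(i+l-1)$ as in~\eqref{eq:lemeighth}: open with $[q(i),h^{\mathrm{int}}_{i+k-1}]$, propagate the $\circled{z}$ mark from the right end to the left end through a sequence of two-node cancellations involving $(k-1)$-support configurations $p_l$, and close with $[q'(i+k-1),h^{\mathrm{int}}_{i}]$. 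The terminating configuration is the same-sign analogue $q'=\begin{tikzpicture}[baseline=-0.5*\completevertical]\stringud{-2}{\circled{}}{\circled{\pm}}\stringud{-1}{$\cdots$}{$\cdots$}\stringud{2}{\circled{}}{\circled{}}\stringud{3}{\circled{}}{\circled{\pm}}\end{tikzpicture}$, and Lemma~\ref{lem:sixth} again gives $c_{i+k-1}(q')=(-1)^{i+k-1}\alpha_{(\downarrow,\pm),(\downarrow,\pm)}$.

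Multiplying the elementary factors along the chain would yield an identity $c_i(q)=\mu_\pm\,c_{i+k-1}(q')$ with an explicit sign $\mu_\pm\in\{\pm1\}$, and substituting the Lemma~\ref{lem:sixth} expressions converts it into
\[
\alpha_{(\uparrow,+),(\uparrow,+)}=(-1)^{k-1}\mu_\pm\,\alpha_{(\downarrow,\pm),(\downarrow,\pm)}.
\]
The additional $(-1)^{k-1}$ that was absent in Lemma~\ref{lem:eighth} originates entirely from the alternating $(-1)^i$ factor of the coefficients. I would then run the same chain with the roles of $\uparrow$ and $\downarrow$ interchanged, and with $s$ replaced by $-s$, and compose the four resulting identities into a closed cycle on the coefficients $\alpha_{(\sigma,s),(\sigma,s)}$. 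My expectation is that the accumulated parity twists force an identity of the form $\alpha_{(\sigma,s),(\sigma,s)}=-\alpha_{(\sigma,s),(\sigma,s)}$, which implies $\alpha_{(\sigma,s),(\sigma,s)}=0$ for every $\sigma$ and $s$.

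The main obstacle is the sign bookkeeping: one must verify that the product of the elementary factors supplied by~\eqref{eq:commu-plus-b-up}--\eqref{eq:commu-plus-b-up-end} and~\eqref{eq:commutator-int-start}--\eqref{eq:commutator-int-end} along the composed chains, combined with the two $(-1)^{k-1}$ parity factors from Lemma~\ref{lem:sixth}, genuinely evaluates to $-1$ rather than $+1$. This is analogous to the $(\pm 1)^{k-3}$ computation of Lemma~\ref{lem:eighth} but with one additional parity contribution, and it is at this step that the same-sign case departs qualitatively from the opposite-sign case handled there.
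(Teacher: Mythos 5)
Your reduction to even $L$ via Lemma~\ref{lem:sixth} is fine (though it turns out to be unnecessary), and you correctly identify $c_i(q)=(-1)^i\alpha_{(\uparrow,+),(\uparrow,+)}$. The genuine gap is structural: the closed cycle you want to build, modelled on~\eqref{eq:lemeighth}, does not exist in the same-sign case. In Lemma~\ref{lem:eighth} the chain closes only because of the node $p_{k-1}(i+k-1)$, whose left-end entry $\circled{z}$ is split by a boundary hopping into the opposite-sign pair $\circled{+}\,\circled{-}$ occupying the first two upper-row sites of $\widetilde{q}_{k-1}(i+k-2)$; it is precisely this node that links the front of the chain to the terminal interaction vertex at $q'(i+k-1)$. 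When the two end columns of $q$ carry the same sign $s$, the corresponding $\widetilde{q}_{k-1}$ has the same-sign pair $\circled{s}\,\circled{s}$ at its left end, and no commutator with $H_0$ can produce such a pair: the only way a single row acquires two fermion operators is the splitting of $\circled{z}$, which by~\eqref{eq:commu-z-i} and~\eqref{eq:commu-i-z} yields only $\circled{+}\circled{-}+\circled{-}\circled{+}$, while $\circled{s}\,\circled{s}$ commutes with the hopping density to zero. Hence there is no analogue of $p_{k-1}$, no closing interaction vertex $q'$, and no identity of the form $\alpha_{(\uparrow,+),(\uparrow,+)}=(-1)^{k-1}\mu_\pm\,\alpha_{(\downarrow,\pm),(\downarrow,\pm)}$ available to be composed into a cycle.

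What actually happens is simpler and stronger. The chain of two-node cancellations you set up is correct up to $p_{k-2}(i+k-2)$, but its last target $\widetilde{q}_{k-1}(i+k-2)$ receives exactly one contribution, namely from $p_{k-2}$ itself, so~\eqref{eq:graphical-cancellation-one-m} gives $c_{i+k-2}(p_{k-2})=0$ outright; backtracking through the proportionalities then yields $c_i(q)=\alpha_{(\sigma,s),(\sigma,s)}=0$ directly, with no sign bookkeeping, no composition of four identities, and no case split on the parity of $L$. So the difficulty you flag at the end (whether the accumulated signs give $-1$ rather than $+1$) is not the real obstacle: the obstacle is that the chain you describe cannot be completed, and the correct argument never needs it to be.
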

\begin{proof}
We consider the following cancellation of the $k$-support basis element $\widetilde{q}_l(i+l-1)\ (1\leq l \leq k-1)$ in~\eqref{eq:k-cancellation2}:
\begin{align}
    \label{eq:nineth}
    \begin{tikzpicture}[baseline=(current bounding box.center), x=1.25em, y=3.em]
        \node[draw, rectangle, rounded corners=2mm, scale=0.75, anchor=east] (A1) at (0,0) {
            \begin{tikzpicture}[baseline=-0.5*\completevertical, rounded corners=0mm]
                \stringud{-3}{\circled{s}}{\circled{}}
                \stringud{-2}{\circled{}}{\circled{}}
                \stringud{-1}{\circled{}}{\circled{}}
                \stringud{0}{$\cdots$}{$\cdots$}
                \stringud{1}{\circled{}}{\circled{}}
                \stringud{2}{\circled{s}}{\circled{}}
                \ubraceplusstring{-2}{2}{$k-2$}{0}
                \commutatorintdot{2}
            \end{tikzpicture}
        };
         \node[draw, rectangle, rounded corners=2mm, scale=0.75, anchor=west] (B1) at (4,-1) {
            \begin{tikzpicture}[baseline=-0.5*\completevertical, rounded corners=0mm]
                \stringud{-3}{\circled{s}}{\circled{}}
                \stringud{-2}{\circled{}}{\circled{}}
                \stringud{-1}{\circled{}}{\circled{}}
                \stringud{0}{$\cdots$}{$\cdots$}
                \stringud{1}{\circled{}}{\circled{}}
                \stringud{2}{\circled{s}}{\circled{z}}
                \ubraceplusstring{-2}{2}{$k-2$}{0}
            \end{tikzpicture}
        };
        \node[draw, rectangle, rounded corners=2mm, scale=0.7, anchor=east] (A2) at (0,-2) {
            \begin{tikzpicture}[baseline=-0.5*\completevertical, rounded corners=0mm]
                \stringud{-3}{\circleddotted{}}{\circleddotted{}}
                \stringud{-2}{\circled{s}}{\circled{}}
                \circledcol{-1}
                \stringud{0}{$\cdots$}{$\cdots$}
                \stringud{1}{\circled{}}{\circled{}}
                \stringud{2}{\circled{s}}{\circled{z}}
                \stringud{3}{\circleddotted{}}{\circleddotted{}}
                \ubraceplusstring{-1}{2}{$k-3$}{0}
                \commutatorhopudot{-3}{-2}
                \commutatorhopddot{2}{3}
            \end{tikzpicture}
        };
        \node[draw, rectangle, rounded corners=2mm, scale=0.7, anchor=west] (B2) at (4,-3) {
            \begin{tikzpicture}[baseline=-0.5*\completevertical, rounded corners=0mm]
                \stringud{-3}{\circled{s}}{\circled{}}
                \stringud{-2}{\circled{}}{\circled{}}
                \stringud{-1}{\circled{}}{\circled{}}
                \stringud{0}{$\cdots$}{$\cdots$}
                \stringud{1}{\circled{}}{\circled{}}
                \stringud{2}{\circled{s}}{\circled{\pm}}
                \stringud{3}{\circled{}}{\circled{\mp}}
                \ubraceplusstring{-2}{2}{$k-3$}{0}
            \end{tikzpicture}
        };
        \node[draw, rectangle, rounded corners=2mm, scale=0.7, anchor=east] (A3) at (0,-4) {
            \begin{tikzpicture}[baseline=-0.5*\completevertical, rounded corners=0mm]
                \stringud{-3}{\circleddotted{}}{\circleddotted{}}
                \stringud{-2}{\circled{s}}{\circled{}}
                \circledcol{-1}
                \stringud{0}{$\cdots$}{$\cdots$}
                \stringud{1}{\circled{}}{\circled{}}
                \stringud{2}{\circled{s}}{\circled{\pm}}
                \stringud{3}{\circled{}}{\circled{\mp}}
                \stringud{4}{\circleddotted{}}{\circleddotted{}}
                \ubraceplusstring{-1}{2}{$k-4$}{0}
                \commutatorhopudot{-3}{-2}
                \commutatorhopddot{3}{4}
            \end{tikzpicture}
        };
        \node (B3a) at (3.75,-5) {};
        \node (B3b) at (3.75,-6) {};
        \node[draw, rectangle, rounded corners=2mm, scale=0.7, anchor=east] (A4) at (0,-7) {
            \begin{tikzpicture}[baseline=-0.5*\completevertical, rounded corners=0mm]
                \stringud{-2}{\circleddotted{}}{\circleddotted{}}
                \stringud{-1}{\circled{s}}{\circled{}}
                \stringud{0}{\circled{}}{\circled{}}
                \stringud{1}{\circled{}}{\circled{}}
                \stringud{2}{$\cdots$}{$\cdots$}
                \stringud{3}{\circled{}}{\circled{}}
                \stringud{4}{\circled{s}}{\circled{\pm}}
                \stringud{5}{\circled{}}{\circled{}}
                \stringud{6}{$\cdots$}{$\cdots$}
                \stringud{7}{\circled{}}{\circled{}}
                \stringud{8}{\circled{}}{\circled{\mp}}
                \stringud{9}{\circleddotted{}}{\circleddotted{}}
                \ubraceplusstring{0}{4}{$k-2-n$}{0}
                \ubraceplusstring{5}{8}{$n-2$}{0}
                \commutatorhopudot{-2}{-1}
                \commutatorhopddot{8}{9}
            \end{tikzpicture}
        };
        \node[draw, rectangle, rounded corners=2mm, scale=0.7, anchor=west] (B4) at (4,-8) {
            \begin{tikzpicture}[baseline=-0.5*\completevertical, rounded corners=0mm]
                \stringud{-1}{\circled{s}}{\circled{}}
                \stringud{0}{\circled{}}{\circled{}}
                \stringud{1}{\circled{}}{\circled{}}
                \stringud{2}{$\cdots$}{$\cdots$}
                \stringud{3}{\circled{}}{\circled{}}
                \stringud{4}{\circled{s}}{\circled{\pm}}
                \stringud{5}{\circled{}}{\circled{}}
                \stringud{6}{$\cdots$}{$\cdots$}
                \stringud{7}{\circled{}}{\circled{}}
                \stringud{8}{\circled{}}{\circled{}}
                \stringud{9}{\circled{}}{\circled{\mp}}
                \ubraceplusstring{0}{4}{$k-2-n$}{0}
                \ubraceplusstring{5}{9}{$n-1$}{0}
            \end{tikzpicture}
        };
        \node[draw, rectangle, rounded corners=2mm, scale=0.7, anchor=east] (A5) at (0,-9) {
            \begin{tikzpicture}[baseline=-0.5*\completevertical, rounded corners=0mm]
                \stringud{-1}{\circleddotted{}}{\circleddotted{}}
                \stringud{0}{\circled{s}}{\circled{}}
                \stringud{1}{\circled{}}{\circled{}}
                \stringud{2}{$\cdots$}{$\cdots$}
                \stringud{3}{\circled{}}{\circled{}}
                \stringud{4}{\circled{s}}{\circled{\pm}}
                \stringud{5}{\circled{}}{\circled{}}
                \stringud{6}{$\cdots$}{$\cdots$}
                \stringud{7}{\circled{}}{\circled{}}
                \stringud{8}{\circled{}}{\circled{}}
                \stringud{9}{\circled{}}{\circled{\mp}}
                \stringud{10}{\circleddotted{}}{\circleddotted{}}
                \ubraceplusstring{1}{4}{$k-3-n$}{0}
                \ubraceplusstring{5}{9}{$n-1$}{0}
                \commutatorhopudot{-1}{0}
                \commutatorhopddot{9}{10}
            \end{tikzpicture}
        };
        \node (B5a) at (3.75,-10) {};
        \node (B5b) at (3.75,-11) {};
        \node[draw, rectangle, rounded corners=2mm, scale=0.7, anchor=east] (A6) at (0,-12) {
            \begin{tikzpicture}[baseline=-0.5*\completevertical, rounded corners=0mm]
                \commutatorhopudot{1}{2}
                \stringud{1}{\circleddotted{}}{\circleddotted{}}
                \stringud{2}{\circled{s}}{\circled{}}
                \stringud{3}{\circled{s}}{\circled{\pm}}
                \stringud{4}{\circled{}}{\circled{}}
                \stringud{5}{$\cdots$}{$\cdots$}
                \stringud{6}{\circled{}}{\circled{}}
                \stringud{7}{\circled{}}{\circled{\mp}}
                \stringud{8}{\circleddotted{}}{\circleddotted{}}
                \commutatorhopddot{7}{8}
                \ubraceplusstring{4}{7}{$k-4$}{0}
            \end{tikzpicture}
        };
        \node[draw, rectangle, rounded corners=2mm, scale=0.7, anchor=west] (B6) at (4,-13) {
            \begin{tikzpicture}[baseline=-0.5*\completevertical, rounded corners=0mm]
                \stringud{2}{\circled{s}}{\circled{}}
                \stringud{3}{\circled{s}}{\circled{\pm}}
                \stringud{4}{\circled{}}{\circled{}}
                \stringud{5}{$\cdots$}{$\cdots$}
                \stringud{6}{\circled{}}{\circled{}}
                \stringud{7}{\circled{}}{\circled{}}
                \stringud{8}{\circled{}}{\circled{\mp}}
                \ubraceplusstring{4}{8}{$k-3$}{0}
            \end{tikzpicture}
        };
        \draw[->,thick] (A1) -- (B1);
        \draw[->,thick] (A2) -- (B1);
        \draw[->,thick] (A2) -- (B2);
        \draw[->,thick] (A3) -- (B2);
        \draw[->,thick] (A3) -- (B3a);
        \draw[->,thick] (A4) -- (B3b);
        \draw[dotted,thick] (B3a) -- (B3b);
        \draw[dotted,thick] ([yshift=3.25em, xshift=-4em]A4.east) -- ([yshift=-2.5em, xshift=-4em]A3.east);
        \draw[->,thick] (A4) -- (B4);
        \draw[->,thick] (A5) -- (B4);
        \draw[->,thick] (A5) -- (B5a);
        \draw[dotted,thick] (B5a) -- (B5b);
        \draw[dotted,thick] ([yshift=3.25em, xshift=-4em]A6.east) -- ([yshift=-2.5em, xshift=-4em]A5.east);
        \draw[->,thick] (A6) -- (B5b);
        \draw[->,thick] (A6) -- (B6) ;
        \node[above=0em  of A1, scale=0.75] {$q(i)$};
        \node[above=0em  of B1, scale=0.75] {$\widetilde{q}_1(i)$};
        \node[above=0em  of A2, scale=0.75] {$p_1(i+1)$};
        \node[above=0em  of B2, scale=0.75] {$\widetilde{q}_2(i+1)$};
        \node[above=0em  of A3, scale=0.75] {$p_2(i+2)$};
        \node[above=0em of A4, scale=0.75] {$p_n(i+n)$};
        \node[above=0em of B4, scale=0.75] {$\widetilde{q}_{n+1}(i+n)$};
        \node[above=0em  of A5, scale=0.75] {$p_{n+1}(i+n+1)$};
        \node[above=0em of A6, scale=0.75] {$p_{k-2}(i+k-2)$};
        \node[above=0em of B6, scale=0.75] {$\widetilde{q}_{k-1}(i+k-2)$};
    \end{tikzpicture}
\end{align}
where the parts omitted with a dot have a similar structure.
From these graphical notations, we have
$
    c_i(q) \propto c_{i+1}(p_1) \propto \cdots \propto c_{n}(p_n) \propto \cdots \propto c_{k-2}(p_{k-2}) =0
$. 
From~\eqref{eq:Fkk}, $c_i(q) = \alpha_{(\uparrow, s), (\uparrow, s)}$ and we have $\alpha_{(\uparrow, s), (\uparrow, s)} = 0$.

In the same way, we can prove $\alpha_{(\downarrow, s), (\downarrow, s)} = 0$.
This concludes the proof.
\end{proof}

From Lemma~\ref{lem:seventh}--Lemma~\ref{lem:nineth}, denoting $\alpha_{(\uparrow, +),(\uparrow, -)}=2c_{k}$, we have
\begin{align}
    F_{k}^{k}
    &=
    \sum_{\sigma=\uparrow, \downarrow}
    \sum_{i=1}^{L}
    2c_{k}
    \paren{
        c_{i, \sigma}c_{i+k-1, \sigma}^{\dag}
        +
        (-1)^{k-1}
        c_{i, \sigma}^{\dag}c_{i+k-1, \sigma}
    }
    \nonumber\\
    &=
    c_{k}
    Q_k^{0}
    .
\end{align}
We can see the $k$-support operator in $F_k$ is proportional to $Q_k^0$.

We define $\Delta \equiv F_k - c_k Q_k$, which is also a conserved quantity: $\bck{\Delta, H} = \bck{F_k, H} - c_k \bck{Q_k, H} = 0$.
We can prove $\Delta$ is a less-than-or-equal-to-$(k-1)$-local conserved quantity as follows:
\begin{align}
    \Delta 
    &\equiv 
    F_k - c_k Q_k
    \nonumber\\
    &=
    c_{k}
    Q_k^{0} + F_k^{k-1} + (\text{rest})  - c_k \paren{Q_k^0 + \delta Q_{k-1}(U)}
    \nonumber\\
    &=
    F_k^{k-1} + (\text{rest}) -c_k \delta Q_{k-1}(U)
    ,
\end{align}
where the last line is a linear combination of less-than-or-equal-to-$(k-1)$-support operators, and then we can see $\Delta$ is a less-than-or-equal-to-$(k-1)$-local conserved quantity, and $(\text{rest})$ is the same as that in~\eqref{eq:F_k_decomp}.
Then we have proved $F_k = c_k Q_k + \Delta$ where $\Delta$ is a less-than-or-equal-to-$(k-1)$-local conserved quantity, which concludes the proof of Theorem~\ref{theorem}.

It should be noted that the aforementioned proof does not hold in the non-interacting case where $U = 0$. 
In this case, there is no contribution from $F_k^k$ to the cancellation of $k$-support operators in~\eqref{eq:k-cancellation2}, and~\eqref{eq:Fkk} itself becomes a local charge.

\subsection{Case of $k\geq \floor{\frac{L}{2}}$}
We briefly explain why the above proof breaks in the case of $k\geq \floor{\frac{L}{2}}$.
For example, in the case of $k = L/2$ (where we assume even $L$), there is another contribution to the cancellation of the basis element with $(L/2+1)$-support configuration in~\eqref{eq:cancellation_eg_pmmp} from the $(L/2)$-support basis element:
\begin{align}
    \label{eq:cancellation_over_L2}
    \begin{tikzpicture}[baseline=(current bounding box.center)]
        \node[draw, rectangle, rounded corners=2mm, scale=0.8] (B) at (3,3) {
            \begin{tikzpicture}[baseline=-0.5*\completevertical, rounded corners=0mm]
                \stringud{-2}{\circled{\pm}}{\circled{}}
                \stringud{-1}{\circled{}}{\circled{}}
                \stringud{0}{\circled{}}{\circled{}}
                \stringud{1}{$\cdots$}{$\cdots$}
                \stringud{2}{\circled{}}{\circled{}}
                \stringud{3}{\circled{}}{\circled{}}
                \stringud{4}{\circled{\mp}}{\circled{}}
                \ubraceplusstring{-1}{4}{$L/2-1$}{0}
            \end{tikzpicture}
        };
        \node[draw, rectangle, rounded corners=2mm, scale=0.8] (A1) at (0,0) {
            \begin{tikzpicture}[baseline=-0.5*\completevertical, rounded corners=0mm]
                \stringud{-3}{\circleddotted{}}{\circleddotted{}}
                \stringud{-2}{\circled{\pm}}{\circled{}}
                \stringud{-1}{\circled{}}{\circled{}}
                \stringud{0}{$\cdots$}{$\cdots$}
                \stringud{1}{\circled{}}{\circled{}}
                \stringud{2}{\circled{}}{\circled{}}
                \stringud{3}{\circled{\mp}}{\circled{}}
                \ubraceplusstring{-1}{3}{$L/2-2$}{0}
                \commutatorhopudot{-2}{-3}
            \end{tikzpicture}
        };
        \node[draw, rectangle, rounded corners=2mm, scale=0.8] (A2) at (6,0) {
            \begin{tikzpicture}[baseline=-0.5*\completevertical, rounded corners=0mm]
                \stringud{-3}{\circled{\pm}}{\circled{}}
                \stringud{-2}{\circled{}}{\circled{}}
                \stringud{-1}{\circled{}}{\circled{}}
                \stringud{0}{$\cdots$}{$\cdots$}
                \stringud{1}{\circled{}}{\circled{}}
                \stringud{2}{\circled{\mp}}{\circled{}}
                \stringud{3}{\circleddotted{}}{\circleddotted{}}
                \ubraceplusstring{-2}{2}{$L/2-2$}{0}
                \commutatorhopudot{2}{3}
            \end{tikzpicture}
        };
        \node[draw, rectangle, rounded corners=2mm, scale=0.8] (A3) at (0,6) {
            \begin{tikzpicture}[baseline=-0.5*\completevertical, rounded corners=0mm]
                \stringud{-4}{\circled{\pm}}{\circled{}}
                \stringud{-3}{\circled{}}{\circled{}}
                \stringud{-2}{\circled{}}{\circled{}}
                \stringud{-1}{\circled{}}{\circled{}}
                \stringud{0}{$\cdots$}{$\cdots$}
                \stringud{1}{\circled{}}{\circled{}}
                \stringud{2}{\circled{}}{\circled{}}
                \stringud{3}{\circled{\mp}}{\circled{}}
                \ubraceplusstring{-2}{3}{$L/2-1$}{0}
                \commutatorhopudot{-3}{-4}
            \end{tikzpicture}
        };
        \node[draw, rectangle, rounded corners=2mm, scale=0.8] (A4) at (6,6) {
            \begin{tikzpicture}[baseline=-0.5*\completevertical, rounded corners=0mm]
                \stringud{-3}{\circled{\pm}}{\circled{}}
                \stringud{-2}{\circled{}}{\circled{}}
                \stringud{-1}{\circled{}}{\circled{}}
                \stringud{0}{$\cdots$}{$\cdots$}
                \stringud{1}{\circled{}}{\circled{}}
                \stringud{2}{\circled{}}{\circled{}}
                \stringud{3}{\circled{}}{\circled{}}
                \stringud{4}{\circled{\mp}}{\circled{}}
                \ubraceplusstring{-2}{3}{$L/2-1$}{0}
                \commutatorhopudot{3}{4}
            \end{tikzpicture}
        };
        \draw[->,thick] (A1) -- (B) node[midway, scale=0.7,above, xshift=-5pt] {$\pm1$};
        \draw[->,thick] (A3) -- (B) node[midway, scale=0.7,below, xshift=-5pt] {$\pm1$};
        \draw[->,thick] (A2) -- (B) node[midway, scale=0.7,above, xshift=5pt] {$\mp1$};;
        \draw[->,thick] (A4) -- (B) node[midway, scale=0.7,below, xshift=5pt] {$\mp1$};;
        \node[above=0em  of B, scale=0.8] {$\widetilde{q}(i)$};
        \node[above=0em  of A1, scale=0.8] {$q(i+1)$};
        \node[above=0em  of A3, scale=0.8] {$q^\prime(i-1)$};
        \node[above=0em  of A2, scale=0.8] {$q(i)$};
        \node[above=0em  of A4, scale=0.8] {$q^\prime(i)$};
    \end{tikzpicture}
\end{align}
where the configuration $q^\prime$ has $(L/2)$-support because 
\begin{align}
    \begin{tikzpicture}[baseline=-0.5*\completevertical, rounded corners=0mm]
        \stringud{-2}{\circled{\pm}}{\circled{}}
        \stringud{-1}{\circled{}}{\circled{}}
        \stringud{0}{\circled{}}{\circled{}}
        \stringud{1}{$\cdots$}{$\cdots$}
        \stringud{2}{\circled{}}{\circled{}}
        \stringud{3}{\circled{}}{\circled{}}
        \stringud{4}{\circled{\mp}}{\circled{}}
        \ubraceplusstring{-1}{4}{$L/2$}{0}
    \end{tikzpicture}
    (i)
    =
    \begin{tikzpicture}[baseline=-0.5*\completevertical, rounded corners=0mm]
        \stringud{-2}{\circled{\mp}}{\circled{}}
        \stringud{-1}{\circled{}}{\circled{}}
        \stringud{0}{\circled{}}{\circled{}}
        \stringud{1}{$\cdots$}{$\cdots$}
        \stringud{2}{\circled{}}{\circled{}}
        \stringud{3}{\circled{}}{\circled{}}
        \stringud{4}{\circled{\pm}}{\circled{}}
        \ubraceplusstring{-1}{4}{$L/2-2$}{0}
    \end{tikzpicture}
    (i+L/2+1)
    ,
\end{align}
where we used the periodic boundary condition, and we can see this is also a $(L/2)$-support basis element and included in the linear combination in $F_k^k$.

In the same way, in the case of $k\geq \floor{\frac{L}{2}}$, there are other contributions to the cancellation apart from what we have considered in the proof of Theorem~\ref{theorem}.

\section{Summary}
\label{sec:summary}
In this article, we proved the $k$-local conserved quantities in the one-dimensional Hubbard model are uniquely determined up to the choice of the linear combination of the lower-order local charges.
This proof is based on Theorem~\ref{theorem}, which asserts that a maximal support component of the local conserved quantity is uniquely determined.
This strategy has also been employed in the proof of completeness of local charges in the case of the spin-1/2 XYZ chain without a magnetic field~\cite{Nozawa2020} building upon the insights regarding the maximal support components of its local charges~\cite{Shiraishi2019}.
The same strategy may also be immediately applicable to the other quantum integrable lattice models, such as the $\mathrm{SU}(N)$ generalization of the isotropic Heisenberg chain~\cite{GRABOWSKI1995299, yamada2023matrix}, and the Temperley-Lieb models~\cite{Nienhuis2021}.

We note that as for the non-interacting quantum integrable systems, such as the spin-$1/2$ $XY$ chain or free fermion models, $k$-local charges are not uniquely determined and there are more than two independent families of local charges~\cite{barouch1984XY,barouch1985master, Araki1990, GRABOWSKI1995299, Fagotti_2014}.

This is the first time that the completeness of the family of local charges in integrable electron systems has been rigorously proven.

\section*{Acknowledgments}
This work is dedicated to my grandfather, Daihachi Yoshizawa.
This work was supported by FoPM, WINGS Program, JSR Fellowship, the University of Tokyo, and KAKENHI Grants No. JP21J20321 from the Japan Society for the Promotion of Science (JSPS).

\vspace*{1em}
\textbf{Conflicts of interest.}
\quad The author declares no conflicts of interest.

\textbf{Data availability statement.}
\quad No new data were created or analyzed in this study.

\begin{appendix}

\section{One-local conserved quantities}
\label{app:one-sup}
In this appendix, we prove there are no one-local conserved quantities other than the $\mathrm{SU}(2)$ charges and $\mathrm{U}(1)$ charge and $\eta$-pairing charges, which is the another $\mathrm{SU}(2)$ symmetry for even $L$~\cite{PhysRevLett.63.2144, doi:10.1142/S0217984990000933}.

A one-local conserved quantity $F_1$ is written as
\begin{align}
    F_1
    =
    \sum_{
        \substack{
            \Bar{a}, \Bar{b}\in \bce{ \circled{}, \circled{+}, \circled{-}, \circled{z} }
            \\
            \bce{\Bar{a}, \Bar{b}} \neq \bce{\circled{}, \circled{}}
        }
    }
    \sum_{i=1}^{L}
    c_i\paren{
        \begin{tikzpicture}[baseline=-0.5*\completevertical]
            \stringud{0}{$\Bar{a}$}{$\Bar{b}$}
        \end{tikzpicture}
    }
    \begin{tikzpicture}[baseline=-0.5*\completevertical]
        \stringud{0}{$\Bar{a}$}{$\Bar{b}$}
    \end{tikzpicture}
    (i)
    ,
\end{align}
where $
c_i\paren{
        \begin{tikzpicture}[baseline=-0.5*\completevertical]
            \stringud{0}{$\Bar{a}$}{$\Bar{b}$}
        \end{tikzpicture}
    }
$ is the coefficient of $
\begin{tikzpicture}[baseline=-0.5*\completevertical]
    \stringud{0}{$\Bar{a}$}{$\Bar{b}$}
\end{tikzpicture}
(i)
$ and $\bce{\Bar{a}, \Bar{b}} \neq \bce{\circled{}, \circled{}}$ is the normalization where $F_1$ is traceless.

We determine $F_1$ so that $\bck{F_1, H} = \bck{F_1, H_0} + \bck{F_1, H_1} = 0$.
Because $\bck{F_1, H_0} $ is a two-support operator and $\bck{F_1, H_1}$ is a one-support operator, the two terms must independently vanish, then we have
\begin{align}
    \label{eq:one1}
    & \bck{F_1, H_0} =0,
    \\
    & \bck{F_1, H_1} = 0.
    \label{eq:one2}
\end{align}
From~\eqref{eq:one2}, we can see:
\begin{align}
    \bck{F_1, H_1}
    =
    \sum_{
        \substack{
            \Bar{a}, \Bar{b}\in \bce{ \circled{}, \circled{+}, \circled{-}, \circled{z} }
            \\
            \bce{\Bar{a}, \Bar{b}} \neq \bce{\circled{}, \circled{}}
        }
    }
    \sum_{i=1}^{L}
    c_i\paren{
        \begin{tikzpicture}[baseline=-0.5*\completevertical]
            \stringud{0}{$\Bar{a}$}{$\Bar{b}$}
        \end{tikzpicture}
    }
    \begin{tikzpicture}[baseline=-0.5*\completevertical]
        \stringud{0}{$\Bar{a}$}{$\Bar{b}$}
        \commutatorint{0}
    \end{tikzpicture}
    (i)
    =0
    .
\end{align}
$
\begin{tikzpicture}[baseline=-0.5*\completevertical]
    \stringud{0}{$\Bar{a}$}{$\Bar{b}$}
    \commutatorint{0}
\end{tikzpicture}
(i)
$ is the one-support basis element at the $i$ th site.
$
\begin{tikzpicture}[baseline=-0.5*\completevertical]
    \stringud{0}{$\Bar{a}_1$}{$\Bar{b}_1$}
    \commutatorint{0}
\end{tikzpicture}
(i)
$
and 
$
\begin{tikzpicture}[baseline=-0.5*\completevertical]
    \stringud{0}{$\Bar{a}_2$}{$\Bar{b}_2$}
    \commutatorint{0}
\end{tikzpicture}
(j)
$
are independent if $i\neq j$ or $\Bar{a}_1 \neq \Bar{a}_2$ or $\Bar{b}_1 \neq  \Bar{b}_2$.
Thus, we have
\begin{align}
    c_i\paren{
        \begin{tikzpicture}[baseline=-0.5*\completevertical]
            \stringud{0}{$\Bar{a}$}{$\Bar{b}$}
        \end{tikzpicture}
    }
    \begin{tikzpicture}[baseline=-0.5*\completevertical]
        \stringud{0}{$\Bar{a}$}{$\Bar{b}$}
        \commutatorint{0}
    \end{tikzpicture}
    (i)
    =0
    .
\end{align}
If 
$
\begin{tikzpicture}[baseline=-0.5*\completevertical]
    \stringud{0}{$\Bar{a}$}{$\Bar{b}$}
\end{tikzpicture}
\in
\bce{
    \begin{tikzpicture}[baseline=-0.5*\completevertical]
        \stringud{0}{\circled{\pm}}{\circled{}}
    \end{tikzpicture}
    ,
    \begin{tikzpicture}[baseline=-0.5*\completevertical]
        \stringud{0}{\circled{\pm}}{\circled{z}}
    \end{tikzpicture}
    ,
    \begin{tikzpicture}[baseline=-0.5*\completevertical]
        \stringud{0}{\circled{}}{\circled{\pm}}
    \end{tikzpicture}
    ,
    \begin{tikzpicture}[baseline=-0.5*\completevertical]
        \stringud{0}{\circled{z}}{\circled{\pm}}
    \end{tikzpicture}
}
$, we can see $
\begin{tikzpicture}[baseline=-0.5*\completevertical]
    \stringud{0}{$\Bar{a}$}{$\Bar{b}$}
    \commutatorint{0}
\end{tikzpicture}
\neq
0
$ and then we have $
c_i\paren{
        \begin{tikzpicture}[baseline=-0.5*\completevertical]
            \stringud{0}{$\Bar{a}$}{$\Bar{b}$}
        \end{tikzpicture}
    }
    =0
$.
Therefore $\bck{F_1, H_1} = 0$ is satisfied if
\begin{align}
    F_1
    =
    \sum_{i=1}^{L}
    \bck{
        \alpha_i^{+-}
        \begin{tikzpicture}[baseline=-0.5*\completevertical]
            \stringud{0}{$\circled{+}$}{$\circled{-}$}
        \end{tikzpicture}
        (i)
        +
        \alpha_i^{-+}
        \begin{tikzpicture}[baseline=-0.5*\completevertical]
            \stringud{0}{$\circled{-}$}{$\circled{+}$}
        \end{tikzpicture}
        (i)
        +
        \alpha_i^{++}
        \begin{tikzpicture}[baseline=-0.5*\completevertical]
            \stringud{0}{$\circled{+}$}{$\circled{+}$}
        \end{tikzpicture}
        (i)
        +
        \alpha_i^{--}
        \begin{tikzpicture}[baseline=-0.5*\completevertical]
            \stringud{0}{$\circled{-}$}{$\circled{-}$}
        \end{tikzpicture}
        (i)
        +
        \alpha_i^{z \uparrow}
        \begin{tikzpicture}[baseline=-0.5*\completevertical]
            \stringud{0}{$\circled{z}$}{$\circled{}$}
        \end{tikzpicture}
        (i)
        +
        \alpha_i^{z \downarrow}
        \begin{tikzpicture}[baseline=-0.5*\completevertical]
            \stringud{0}{$\circled{}$}{$\circled{z}$}
        \end{tikzpicture}
        (i)
        +
        \alpha_i^{z z}
        \begin{tikzpicture}[baseline=-0.5*\completevertical]
            \stringud{0}{$\circled{z}$}{$\circled{z}$}
        \end{tikzpicture}
        (i)
    }
    .
\end{align}
In the following, by considering the cancellation of two-support operators ($\bck{F_1, H_0}=0$), we prove $\alpha_i^{\pm\mp}$, $\alpha_i^{z \uparrow}$ and $\alpha_i^{z \downarrow}$ are constants independent of $i$ and $\alpha_i^{\pm\mp}$ is proportional to ${(-1)}^{i}$ for even $L$ and is zero for odd $L$ and $\alpha_i^{z z}=0$.

Considering the following  cancellation of the two-support basis element:
\begin{align}
    \begin{tikzpicture}[baseline=(current bounding box.center)]
        \node[draw, rectangle, rounded corners=2mm, scale=0.8] (B) at (2,2) {
            \begin{tikzpicture}[baseline=-0.5*\completevertical, rounded corners=0mm]
                \stringud{0}{\circled{}}{\circled{\mp}}
                \stringud{1}{\circled{\pm}}{\circled{}}
            \end{tikzpicture}
            $(i)$
        };
        \node[draw, rectangle, rounded corners=2mm, scale=0.8] (A1) at (0,0) {
            \begin{tikzpicture}[baseline=-0.5*\completevertical, rounded corners=0mm]
                \stringud{0}{\circled{\pm}}{\circled{\mp}}
                \stringud{1}{\circleddotted{}}{\circleddotted{}}
                \commutatorhopudot{0}{1}
            \end{tikzpicture}
            $(i)$
        };
        \node[draw, rectangle, rounded corners=2mm, scale=0.8] (A2) at (4,0) {
            \begin{tikzpicture}[baseline=-0.5*\completevertical, rounded corners=0mm]
                \stringud{-1}{\circleddotted{}}{\circleddotted{}}
                \stringud{0}{\circled{\pm}}{\circled{\mp}}
                \commutatorhopddot{-1}{0}
            \end{tikzpicture}
            $(i+1)$
        };
        \draw[->,thick] (A1) -- (B) node[midway, scale=0.7,left, xshift=0pt] {$\pm1$};
        \draw[->,thick] (A2) -- (B) node[midway, scale=0.7,right, xshift=0pt] {$\mp1$};
    \end{tikzpicture}
\end{align}
and we have $\alpha_i^{\pm\mp}=\alpha_{i+1}^{\pm\mp}$.
Then we can see  $\alpha_i^{\pm\mp}$ is independent of $i$.

Considering the following  cancellation of the two-support basis element:
\begin{align}
    \begin{tikzpicture}[baseline=(current bounding box.center)]
        \node[draw, rectangle, rounded corners=2mm, scale=0.8] (B) at (2,2) {
            \begin{tikzpicture}[baseline=-0.5*\completevertical, rounded corners=0mm]
                \stringud{0}{\circled{-}}{\circled{}}
                \stringud{1}{\circled{+}}{\circled{}}
            \end{tikzpicture}
            $(i)$
        };
        \node[draw, rectangle, rounded corners=2mm, scale=0.8] (A1) at (0,0) {
            \begin{tikzpicture}[baseline=-0.5*\completevertical, rounded corners=0mm]
                \stringud{0}{\circled{z}}{\circled{}}
                \stringud{1}{\circleddotted{}}{\circleddotted{}}
                \commutatorhopudot{0}{1}
            \end{tikzpicture}
            $(i)$
        };
        \node[draw, rectangle, rounded corners=2mm, scale=0.8] (A2) at (4,0) {
            \begin{tikzpicture}[baseline=-0.5*\completevertical, rounded corners=0mm]
                \stringud{-1}{\circleddotted{}}{\circleddotted{}}
                \stringud{0}{\circled{z}}{\circled{}}
                \commutatorhopudot{-1}{0}
            \end{tikzpicture}
            $(i+1)$
        };
        \draw[->,thick] (A1) -- (B) node[midway, scale=0.7,left, xshift=0pt] {$-2$};
        \draw[->,thick] (A2) -- (B) node[midway, scale=0.7,right, xshift=0pt] {$+2$};
    \end{tikzpicture}
\end{align}
and we have $\alpha_i^{z\uparrow}=\alpha_{i+1}^{z\uparrow}$.
Then we can see  $\alpha_i^{z\uparrow}$ is independent of $i$.
In the same way, we can prove $\alpha_i^{z\downarrow}$ is independent of $i$.

Considering the following cancellation of the two support basis elements:
\begin{align}
    \begin{tikzpicture}[baseline=(current bounding box.center)]
        \node[draw, rectangle, rounded corners=2mm, scale=0.8] (B) at (0,2) {
            \begin{tikzpicture}[baseline=-0.5*\completevertical, rounded corners=0mm]
                \stringud{0}{\circled{-}}{\circled{z}}
                \stringud{1}{\circled{+}}{\circled{}}
            \end{tikzpicture}
            $(i)$
        };
        \node[draw, rectangle, rounded corners=2mm, scale=0.8] (A1) at (0,0) {
            \begin{tikzpicture}[baseline=-0.5*\completevertical, rounded corners=0mm]
                \stringud{0}{\circled{z}}{\circled{z}}
                \stringud{1}{\circleddotted{}}{\circleddotted{}}
                \commutatorhopudot{0}{1}
            \end{tikzpicture}
            $(i)$
        };
        \draw[->,thick] (A1) -- (B) node[midway, scale=0.7,left, xshift=0pt] {$-2$};
    \end{tikzpicture}
\end{align}
and we have $\alpha_i^{zz}=0$.

Considering the following  cancellation of the two-support basis element:
\begin{align}
    \begin{tikzpicture}[baseline=(current bounding box.center)]
        \node[draw, rectangle, rounded corners=2mm, scale=0.8] (B) at (2,2) {
            \begin{tikzpicture}[baseline=-0.5*\completevertical, rounded corners=0mm]
                \stringud{0}{\circled{}}{\circled{\pm}}
                \stringud{1}{\circled{\pm}}{\circled{}}
            \end{tikzpicture}
            $(i)$
        };
        \node[draw, rectangle, rounded corners=2mm, scale=0.8] (A1) at (0,0) {
            \begin{tikzpicture}[baseline=-0.5*\completevertical, rounded corners=0mm]
                \stringud{0}{\circled{\pm}}{\circled{\pm}}
                \stringud{1}{\circleddotted{}}{\circleddotted{}}
                \commutatorhopudot{0}{1}
            \end{tikzpicture}
            $(i)$
        };
        \node[draw, rectangle, rounded corners=2mm, scale=0.8] (A2) at (4,0) {
            \begin{tikzpicture}[baseline=-0.5*\completevertical, rounded corners=0mm]
                \stringud{-1}{\circleddotted{}}{\circleddotted{}}
                \stringud{0}{\circled{\pm}}{\circled{\pm}}
                \commutatorhopddot{-1}{0}
            \end{tikzpicture}
            $(i+1)$
        };
        \draw[->,thick] (A1) -- (B) node[midway, scale=0.7,left, xshift=0pt] {$\pm1$};
        \draw[->,thick] (A2) -- (B) node[midway, scale=0.7,right, xshift=0pt] {$\pm1$};
    \end{tikzpicture}
\end{align}
and we have $\alpha_i^{\pm\pm}=-\alpha_{i+1}^{\pm\pm}$.
From the periodic boundary condition, we have $\alpha_i^{\pm\pm} = (-1)^{L}\alpha_{i+L}^{\pm\pm} = (-1)^{L}\alpha_{i}^{\pm\pm}$.
Then we can see $\alpha_i^{\pm\pm} = (-1)^{i} \alpha^{\pm\pm} $ where $\alpha^{\pm\pm}$ is a constant and $\alpha^{\pm\pm} = 0$ for odd $L$.

From the above argument, $\bck{F_1, H_0} = 0$ is satisfied if 
\begin{align}
    F_1
    &=
    \alpha^{+-}
    \sum_{i=1}^{L}
    \begin{tikzpicture}[baseline=-0.5*\completevertical]
        \stringud{0}{$\circled{+}$}{$\circled{-}$}
    \end{tikzpicture}
    (i)
    +
    \alpha^{-+}
    \sum_{i=1}^{L}
    \begin{tikzpicture}[baseline=-0.5*\completevertical]
        \stringud{0}{$\circled{-}$}{$\circled{+}$}
    \end{tikzpicture}
    (i)
    +
    \alpha^{z \uparrow}
    \sum_{i=1}^{L}
    \begin{tikzpicture}[baseline=-0.5*\completevertical]
        \stringud{0}{$\circled{z}$}{$\circled{}$}
    \end{tikzpicture}
    (i)
    +
    \alpha^{z \downarrow}
    \sum_{i=1}^{L}
    \begin{tikzpicture}[baseline=-0.5*\completevertical]
        \stringud{0}{$\circled{}$}{$\circled{z}$}
    \end{tikzpicture}
    (i)
    \nonumber\\
    &
    \hspace*{4em}
    +
    \alpha^{++}
    \sum_{i=1}^{L}
    (-1)^i
    \begin{tikzpicture}[baseline=-0.5*\completevertical]
        \stringud{0}{$\circled{+}$}{$\circled{+}$}
    \end{tikzpicture}
    (i)
    -
    \alpha^{--}
    \sum_{i=1}^{L}
    (-1)^i
    \begin{tikzpicture}[baseline=-0.5*\completevertical]
        \stringud{0}{$\circled{-}$}{$\circled{-}$}
    \end{tikzpicture}
    (i)
    \nonumber\\
    &=
    \alpha^{+-} S^+
    -
    \alpha^{-+} S^-
    +
    2(\alpha^{z \uparrow} - \alpha^{z \downarrow})S^z
    +
    (\alpha^{z \uparrow} + \alpha^{z \downarrow})N
    -
    \alpha^{++} \eta^+
    +
    \alpha^{--} \eta^-
    \nonumber\\
    &=
    c_+ S^+
    +
    c_- S^-
    +
    c_z S^z
    +
    c_N N
    +
    a_+ \eta^+
    +
    a_- \eta^-
    ,
\end{align}
where we redefine each coefficient in the last line and 
\begin{align}
    S^+ = \sum_{i=1}^{L} c_{j, \uparrow}^{\dag} c_{j, \downarrow}
    ,\quad 
    S^- = \sum_{i=1}^{L} c_{j, \downarrow}^{\dag} c_{j, \uparrow}
    ,\quad 
    S^z = \frac{1}{2}\sum_{i=1}^{L} \paren{n_{j, \uparrow} - n_{j, \downarrow}} 
    ,
\end{align}
are the $\mathrm{SU}(2)$ charges and $N=\sum_{i=1}^{L}\paren{n_{j, \uparrow} + n_{j, \downarrow}-1}$ is the $\mathrm{U}(1)$ charge, and 
\begin{align}
    \eta^+
    =
    \sum_{i=1}^{L} (-1)^{i+1} c_{j, \uparrow}^{\dag} c_{j, \downarrow}^{\dag}
    ,\quad 
    \eta^-
    =
    \sum_{i=1}^{L} (-1)^{i+1} c_{j, \downarrow} c_{j, \uparrow} 
    ,
\end{align}
are the $\eta$-pairing charges and $a_{\pm} = 0$ for odd $L$.

Therefore, we have proved that one-local conserved quantities are always written as a linear combination of $\mathrm{SU}(2)$ charges and $\mathrm{U}(1)$ charge and $\eta$-pairing charges.

\end{appendix}



\end{document}